\RequirePackage[-l2tabu,orthodox]{nag}
\documentclass
[11pt,letterpaper]
{article} 

\usepackage[notes=true,later=false,camera=false]{dtrt}
\usepackage[utf8]{inputenc}
\usepackage{ stmaryrd }
\usepackage{xspace,enumerate}
\usepackage[T1]{fontenc}
\usepackage[full]{textcomp}
\usepackage[american]{babel}
\usepackage{mathtools}

\renewcommand{\hat}[1]{\widehat{#1}}
\usepackage{amsthm}
\usepackage{thmtools}

\usepackage{tikz}
\usetikzlibrary{arrows.meta,calc,positioning,fit,backgrounds}
\usepackage{pdflscape}
\usepackage{xcolor}
\usepackage{array}
\usepackage{hyperref}

\usepackage[capitalise,nameinlink]{cleveref}

\usepackage{empheq}

\definecolor{citeblue}{HTML}{0055cc}
\hypersetup{
colorlinks=true,
urlcolor=citeblue,
linkcolor=NavyBlue,
citecolor=PineGreen,
linktocpage=true,
}
\renewcommand*\backref[1]{\ifx#1\relax \else (pg. #1) \fi}

\renewcommand{\tilde}{\widetilde}

\usepackage{paralist}
\usepackage{turnstile}
\usepackage[framemethod=TikZ]{mdframed}
\mdfsetup{frametitlealignment=\center}
\usepackage{tikz}
\usepackage{caption}
\DeclareCaptionType{Algorithm}
\usepackage{newfloat}

\newtheorem{theorem}{Theorem}[section]
\newtheorem{lemma}[theorem]{Lemma}
\newtheorem*{lemma*}{Lemma}

\newtheorem{proposition}[theorem]{Proposition}
\newtheorem{fact}[theorem]{Fact}

\theoremstyle{definition}
\newtheorem{definition}[theorem]{Definition}
\newtheorem*{definition*}{Definition}
\newtheorem{remark}[theorem]{Remark}

\usepackage[linesnumbered,ruled]{algorithm2e}
\crefname{lemma}{Lemma}{Lemmas}
\crefname{fact}{Fact}{Facts}
\crefname{theorem}{Theorem}{Theorems}
\crefname{mtheorem}{Theorem}{Theorems}
\crefname{itheorem}{Theorem}{Theorems}
\crefname{corollary}{Corollary}{Corollaries}
\crefname{claim}{Claim}{Claims}
\crefname{example}{Example}{Examples}
\crefname{algorithm}{Algorithm}{Algorithms}
\crefname{problem}{Problem}{Problems}
\crefname{definition}{Definition}{Definitions}
\crefname{equation}{Eq.}{Eq.}
\crefname{strategy}{Strategy}{Strategies}
\crefname{observation}{Observation}{Observations}

\Crefname{algocf}{Algorithm}{Algorithms}

\usepackage[
letterpaper,
top=1.2in,
bottom=1.2in,
left=1in,
right=1in]{geometry}
\usepackage{newpxtext} 
\usepackage{textcomp} 
\usepackage[scr=rsfso]{mathalfa}
\usepackage{bm} 

\usepackage{microtype}

\usepackage{footnotebackref}

\renewcommand{\bar}{\overline}

\newcommand{\cN}{\mathcal{N}}

\newcommand{\op}{\operatorname{op}}

\newcommand{\AND}{\operatorname{AND}}

\newcommand{\Tr}{\operatorname{Tr}}

\allowdisplaybreaks
\newcommand{\FormatAuthor}[3]{
\begin{tabular}{c}
#1 \\ {\small\texttt{#2}} \\ {\small #3}
\end{tabular}
}

\newcommand{\sfK}{\mathsf K}

\newcommand{\sfc}{\mathsf c}

\newcommand{\sfa}{\mathsf a}
\newcommand{\sfx}{\mathsf x}
\newcommand{\sfb}{\mathsf b}
\newcommand{\Num}{\mathsf N}
\newcommand{\vac}{f_{\bm{0}}}

\newcommand{\R}{{\mathbb R}}

\newcommand{\eps}{\varepsilon}

\newcommand{\cF}{{\mathcal F}}
\newcommand{\cE}{{\mathcal E}}

\newcommand{\1}{\mathbf{1}}

\newcommand{\tr}{\mathrm{tr}}
\newcommand{\Z}{\mathbb Z}

\newcommand{\C}{\mathbb C}

\newcommand{\cH}{\mathcal H}

\newcommand{\poly}{\mathrm{poly}}

\newcommand{\Id}{\operatorname{Id}}

\newcommand{\seq}{\subseteq}
\newcommand{\spn}{\operatorname{span}}

\DeclareMathOperator*{\E}{\mathbb{E}}

\newcommand{\cR}{\mathcal R}
\newcommand{\cK}{\mathcal K}

\newcommand{\cM}{\mathcal M}

\newcommand{\cL}{\mathcal L}

\renewcommand{\emptyset}{\varnothing}
\renewcommand{\geq}{\geqslant}
\renewcommand{\ge}{\geqslant}
\renewcommand{\leq}{\leqslant}
\renewcommand{\le}{\leqslant}
\renewcommand{\preceq}{\preccurlyeq}
\renewcommand{\succeq}{\succcurlyeq}
\renewcommand{\epsilon}{\varepsilon}

\newcommand{\ignore}[1]{}

\newcommand{\Unif}{\operatorname{Unif}}

\newcommand{\Ree}{\operatorname{Re}}

\newcommand{\cP}{\mathcal{P}}
\newcommand{\comp}{\operatorname{comp}}

\renewcommand{\iota}{\mathbf{i}}

\newcommand{\syk}{\operatorname{SYK}}
\begin{document}
\author{
 \FormatAuthor{Arpon Basu}{arpon.basu@princeton.edu}{Princeton University} 
 \FormatAuthor{Pravesh K. Kothari}{kothari@cs.princeton.edu}{Princeton University} 
  \FormatAuthor{Siddhant Midha}{siddhantm@princeton.edu}{Princeton University}
  }

\title{Sharp Bounds on Ground State Energy of the SYK Model}

\maketitle

\abstract{
We study the Sachdev-Ye-Kitaev (SYK) Hamiltonian $H_{\operatorname{SYK}}$ on $n$ Majorana modes with $k$-body interactions, and prove that $\E\|H_{\operatorname{SYK}}\|_{\op} = (1 - o(1))\cdot\sqrt{2n}/k$ for super-constant $k\leq o(\sqrt{n})$, where the expectation is over the disorder variables in the Hamiltonian. This confirms the predictions due to \cite{SYK_Garcia3} and answers a question posed in \cite{feng2019spectrum}. Our results extend to the sparse SYK Hamiltonian. As a corollary, we obtain that the dissipative quantum algorithm of~\cite{basso2024optimizing} provably computes the ground state energy of the SYK Hamiltonian up to an $O(1)$-multiplicative factor for all $k < \sqrt{n}/4$.

Our key technical idea is identifying an explicit, deterministic linear operator $\sfx$ such that a fixed quadratic form of $\sfx^{2\ell}$ exactly equals the expected trace moments of the SYK Hamiltonian for every $n$ and $k$. This linear operator can be naturally viewed as a \emph{twisted} model of bosons on the space of hyperedges of a hypergraph. The problem thus reduces to identifying the spectral edge of $\sfx$, which we show is dominated by the spectrum of a natural ${n \choose k}$-dimensional matrix from the \emph{Johnson} scheme and is straightforward to compute using known results.  

To show that our bound is sharp, we construct a witness state with a large quadratic form on $\sfx$ and transform it into a certificate of a lower bound on the largest quadratic form on $H_{\syk}$.
}
\pagenumbering{arabic}
\section{Introduction}
We study the spectral edge of the \emph{Sachdev-Ye-Kitaev} (SYK) Hamiltonian \cite{SachdevY93,Kitaev15,MaldacenaS16,Sachdev24} in this paper. Let $n\geq k\geq 2$ be even integers, and let $\{\gamma_i\}_{1\leq i\leq n}$ be \emph{Majorana fermion operators}, satisfying the anti-commutation relations $\{\gamma_i, \gamma_j\} = 2\delta_{ij}\cdot\Id$. Then consider the SYK Hamiltonian given as 
\begin{equation}\label{eq:SYK-intro-def}
    H_{\operatorname{SYK}} := \binom{n}{k}^{-1/2}\cdot\sum_{S\in\binom{[n]}{k}}g_S\Gamma_S,
\end{equation}
where $\Gamma_S:= \iota^{k(k - 1)/2}\gamma_{i_1}\cdots\gamma_{i_k}$, $S = \{i_1 < \cdots < i_k\}\seq[n]$ is a subset of $[n]$ of size $k$, $\binom{[n]}{k}:= \{S\seq[n]: |S| = k\}$ is the collection of all $k$-sets in $[n]$, and $\bm g:= \{g_S\}_{S\in\binom{[n]}{k}}$ is a collection of i.i.d $\cN(0, 1)$ Gaussians, also referred to as the \emph{disorder variables}. The normalization $\binom{n}{k}^{-1/2}$ is chosen to ensure that $\E_{\bm g}[H_{\operatorname{SYK}}^2] = \Id$.\footnote{This is in contrast to the standard physics convention, as the energy is not extensive.}

The SYK model~\cite{SachdevY93,Kitaev15} has been extensively studied in high-energy~\cite{Kitaev15,MaldacenaS16,polchinski2016spectrum,rosenhaus2019introduction} and condensed-matter~\cite{SachdevY93,SYK_review} physics, with related models studied as far back as~\cite{syk_old1,syk_old2}. The model exhibits ``maximally chaotic'' behavior~\cite{Kitaev15,maldacena2016bound,MaldacenaS16}, yet remains analytically tractable in certain regimes~\cite{SYK_Berkooz2,SYK_Berkooz3}. The structure and complexity of (near) ground states of the SYK Hamiltonian arise naturally when studying the model as a candidate for establishing quantum speedup in simulation. This is because while there are efficient \emph{quantum} algorithms~\cite{hastings2022optimizing,basso2024optimizing}, there are indications that there may not be any efficient classical algorithms to compute useful descriptions of ground states of the model. The evidence for this latter claim is based on results showing that easy-to-compute candidates such as Gaussian states~\cite{HaldarGaussian,dingOptimizingSparseSYK2026} have energies significantly separated from the energy of states output by the above quantum algorithms. 

\paragraph{The Ground State Energy of the SYK Model:} Identifying the ground state energy (and relatedly, the spectral edge) of the SYK model itself has been an outstanding open question. Standard matrix concentration inequalities~\footnote{Gaussian Lipschitz concentration immediately implies a sharp concentration of $\|H_{\syk}\|$ around its mean. See \cref{thm:main-thm-subgaussian}.} imply that $\E_{\bm g}\|H_{\operatorname{SYK}}\|_{\operatorname{op}}\leq O(\sqrt{n})$ \cite{feng2019spectrum,hastings2022optimizing}. On the other hand, in~\cite{anschuetzStronglyInteractingFermions2025}, the authors prove a lower bound of $\E_{\bm g}\|H_{\operatorname{SYK}}\|_{\operatorname{op}} \geq \Omega(\sqrt{n}/k)$. Interestingly, a beautiful recent work gives an efficient quantum algorithm to compute a state with an energy matching the above lower bound up to a constant factor~\cite{basso2024optimizing}.\footnote{ \label{footnote:HO-eff-quant} For $k = 4$ \cite{hastings2022optimizing} gives a simpler quantum algorithm to synthesize a state $\rho:= \rho(H_{\operatorname{4-SYK}})$ that satisfies $\E_{\bm g}\Tr(\rho(H_{\operatorname{4-SYK}})H_{\operatorname{4-SYK}})\geq\Omega(\sqrt n)$.}

The natural upper bound and the lower bounds above have a multiplicative gap of $\Omega(k)$ between them. Researchers have proposed both numerical computations and heuristic arguments to argue that $\E_{\bm g}\|H_{\operatorname{SYK}}\|_{\operatorname{op}}$ should in fact be $\asymp \sqrt{n}/k$ \cite{SYK_Garcia3,hastings2022optimizing} matching the lower bound above. A sequence of works by Feng, Tian, and Wei~\cite{feng2019spectrum,feng2020spectrum,feng2021spectrum} made progress towards this question by understanding the \emph{shape}~(as opposed to the spectral edge)~of the spectrum of the SYK model. In particular, they prove that for $k\leq o(\sqrt{n})$, the SYK model has a Gaussian spectrum~\cite{feng2019spectrum}. Furthermore, for super-constant $k\leq o(\sqrt{n})$ \cite{SYK_Garcia3}, \cite[Section~2.9]{hastings2022optimizing} present fine-grained heuristic calculations that suggest that $\E_{\bm g}\|H_{\operatorname{SYK}}\|_{\operatorname{op}} \sim (1 - o_n(1))\cdot\sqrt{2n}/k$. 

We note that beyond the specific quantum motivations discussed above, the problem of identifying the spectral edge of the SYK Hamiltonian can be viewed as a natural quantum analogue of determining the optimum of a classical random CSP or the injective norm of a random tensor. This is a topic with significant recent developments and applications~\cite{Feige02,WeinEAM25,GuruswamiKM22,HsiehKM23} and served as an inspiration for some of the technical ideas in this work.   

\paragraph{Our Work:} In this work, we rigorously confirm the above predictions and identify the spectral edge of the SYK model. Our bounds are sharp down to the right constant for large $k$.
\begin{theorem}[Spectral Edge of SYK (See \cref{thm:syk-spectral-norm,thm:syk-spectral-norm-lower} for full version)]\label{thm:main-thm-complete}
    Let $n\geq k\geq 2$ be even integers such that $k^2/n < 1/16$. Consider the SYK Hamiltonian $H_{\operatorname{SYK}}$ as in \eqref{eq:SYK-intro-def}. Then we have 
    \[{\frac{\sqrt{2n}}{k}}\cdot\left(1 - O(1)\cdot\max\left\{k^{-1/2}, \frac{k^4}{n^2}\right\}\right) \leq \E_{\bm g}\|H_{\operatorname{SYK}}\|_{\operatorname{op}}\leq\frac{\sqrt{2n}}{k} + O(1).\]
    In particular, if $\omega_n(1)\leq k\leq o(\sqrt{n})$, then $\E_{\bm{g}} \lambda_{\max}(H_{\syk}),  \E_{\bm g}\|H_{\operatorname{SYK}}\|_{\operatorname{op}} = (1 - o_n(1))\cdot\sqrt{2n}/k$.
\end{theorem}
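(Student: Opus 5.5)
The upper bound will come from the trace moment method, and the lower bound from an explicit witness state.

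\textbf{Upper bound.} I will bound $\E_{\bm g}\Tr(H_{\syk}^{2\ell})$ with $\ell\asymp n\log n/k$ (or somewhat larger) and use $\E\|H\|_{\op}\le (\E\Tr H^{2\ell})^{1/2\ell}$; the dimension $2^{n/2}$ then contributes only a $1+O(1/\ell\cdot n)$ factor.

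The moment computation proceeds in four steps.
\begin{enumerate}
\item By Wick's theorem, $\E\Tr(H^{2\ell})/2^{n/2}$ is a sum over pairings of the $2\ell$ positions, weighted by $\binom nk^{-\ell}\sum_{S_1,\dots,S_\ell}$ of signs $\pm1$. Each sign comes from commuting $\Gamma_S,\Gamma_T$ and equals $(-1)^{|S\cap T|}$ for every crossing of the pairing.
\item I will encode this sum as a walk. The state is the multiset of currently ``open'' hyperedges, i.e.\ a bosonic Fock space over $\binom{[n]}{k}$. Opening adds an edge; closing removes the most recent compatible copy, with a twist sign $(-1)^{|S\cap T|}$ for each open edge it jumps over.
\item This gives the identity $\E\Tr(H^{2\ell})/2^{n/2}=\langle \vac,\sfx^{2\ell}\vac\rangle$ for a deterministic operator $\sfx=\sfa+\sfa^\dagger$ built from twisted creation and annihilation operators.
\item It then suffices to bound $\|\sfx\|$ on the sector reachable from $\vac$ in $\ell$ steps.
\end{enumerate}

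The key estimate is that the relevant growth rate is governed by the Johnson-scheme matrix $M_{S,T}=(-1)^{|S\cap T|}$, suitably normalized by $\binom nk^{-1}$. Its eigenvalues are Krawtchouk/Hahn values. The dominant eigenvalue can be computed from the identity $\E[(-1)^{|S\cap T|}]\approx e^{-2k^2/n}$. The eigenvalue spread produces the growth $\sqrt{2n}/k$, essentially because a Gaussian-like spectrum of variance $1$ with "$q$-deformation" $q=e^{-2k^2/n}$ has edge $2/\sqrt{1-q}\approx\sqrt{2n}/k$. This is a $q$-Gaussian / $q$-boson (Bożejko–Speicher) picture.

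Concretely, I will bound $\|\sfa^\dagger\|$ on the $m$-particle sector by $\sqrt{[m]_q}$-type quantities. Here $[m]_q=\sum_{j<m}q^j$, with the $q^j$ replaced by operator norms of twisted sums involving $M$. This yields $\|\sfx\|\le 2\sqrt{1/(1-q)}+$ error, and the error term becomes the additive $O(1)$.

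\textbf{Main obstacle.} The hardest step is controlling the twisted sum over the $j$ intermediate open edges by a product of Johnson-scheme norms rather than by $j$ trivially. I would first try to diagonalize $M$ in the Johnson scheme and show that all non-top eigenvalues are at most $e^{-2k^2/n}(1+O(k^4/n^2))$ in appropriate weighted norm.

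\textbf{Lower bound.} I will construct a witness vector $w$ in the Fock space for which $\langle w,\sfx w\rangle\ge(\sqrt{2n}/k)(1-\epsilon)$. A natural choice is a truncated $q$-coherent/ground-state profile $w=\sum_{m\le m_0}c_m\,(\sfa^\dagger)^m\vac$, with $c_m$ chosen as in the $q$-Hermite edge analysis and $m_0\asymp k$. This gives $\langle \vac,\sfx^{2\ell}\vac\rangle\ge (\sqrt{2n}/k)^{2\ell}(1-\epsilon)^{2\ell}$ for all $\ell$ up to $\poly(n)$.

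Converting this into a lower bound on $\E\lambda_{\max}$ needs more than large moments, since the dimension factor $2^{n/2}$ is present. Following the abstract, I will translate $w$ into a random state: apply the polynomial $p(H)$ corresponding to $w$ to a maximally mixed state, i.e.\ use $\rho\propto p(H)^2$. I will compute $\E\Tr(p(H)^2H)$ and $\E\Tr(p(H)^2)$ via the same Wick/operator identity. Gaussian concentration (\cref{thm:main-thm-subgaussian}) then controls the ratio.

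The losses $k^{-1/2}$, from truncation and from the mismatch between $m_0$ and the $q$-deformation, and $k^4/n^2$, from the Johnson eigenvalue approximation, give the stated error. Finally, concentration upgrades the bound from $\|H\|$ to $\lambda_{\max}$.
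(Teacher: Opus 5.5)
Your upper-bound skeleton matches the paper: Wick expansion, twisted bosons on $\binom{[n]}{k}$, the identity $\E\tr(H^{2\ell})=\langle\vac,\sfx^{2\ell}\vac\rangle$, and bounding $\sfa^*$ level by level. However, the step you call the main obstacle is where the proof actually lives, and your proposed resolution aims at the wrong quantity. The paper uses the exact identity $\sfa\sfa^*=\Id+\cL(\cE)$, where $\cL(R)=\sum_{S,T}R_{S,T}\sfa_S^*\sfa_T$ is monotone in the PSD order, with $\cL(uu^*)=\sfa^*\sfa$ and $\cL(\Id)=\Num$. Combined with the one-sided bound $\cE\preceq q\,uu^*+\rho^+\Pi_{u^\perp}$, this gives $\sfa\sfa^*\preceq\Id+(q-\rho^+)\sfa^*\sfa+\rho^+\Num$. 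Hence $b_t^2\le(q-\rho^+)b_{t-1}^2+1+\rho^+t$ for $b_t=\|\sfa^*\|_{\cF_t\to\cF_{t+1}}$. The Johnson-scheme input needed is only that the \emph{positive} non-top eigenvalues (even $j$) are $O(k^2/n^2)$; the negative ones, in particular $\lambda_1\approx-2k/n$, simply drop out. Your target ``non-top eigenvalues $\le e^{-2k^2/n}(1+O(k^4/n^2))$'' is useless, since it allows $\rho^+\approx q$. Any two-sided estimate, such as operator norms of your twisted sums, that sees $|\lambda_1|$ also fails. To make the dimension factor $2^{n/(4\ell)}$ cost only an additive $O(1)$ you need $\ell\gtrsim n^{3/2}/k$; your $\ell\asymp n\log n/k$ leaves a factor $2^{k/(4\log n)}$. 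At $\ell\gtrsim n^{3/2}/k$ one has $|\lambda_1|\ell\approx2\sqrt n$, and optimizing $\ell$ then only gives $O(\sqrt{n/k})$, whereas $\rho^+\ell=O(k/\sqrt n)$.

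The lower bound has three concrete problems.

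(i) Truncating at level $m_0\asymp k$ cannot reach the edge. The recursion gives $b_t^2\le t+1+\rho^+t^2$, so any $w\in\bigoplus_{t\le m_0}\cF_t$ has $\langle w,\sfx w\rangle\le 2\max_{t<m_0}b_t\|w\|^2\approx2\sqrt k\,\|w\|^2$. This is $o(\sqrt n/k)$ once $k^3\ll n$. The witness must sit where the hopping amplitudes have saturated. The paper shows $\beta_t^2\ge(1-q^{t+1}-(\rho^-+\delta)t)/(1-q)$ for $\psi_t=(\sfa^*)^t\vac$, $\beta_t=\|\psi_{t+1}\|/\|\psi_t\|$. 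So one needs $t\gtrsim\kappa/\nu$ to kill $q^t$, but $t\ll n/k$ because $\rho^-=|\lambda_1|\approx 2k/n$. It then takes the path-graph vector $\sum_{t<M}\sin\frac{(t+1)\pi}{M+1}\,v_{t_0+t}$ with $v_t=\psi_t/\|\psi_t\|$, $t_0=\lceil\kappa/\nu\rceil$, $M=\lfloor 1/\nu\rfloor$, $\kappa\asymp\sqrt k$. That tradeoff is the source of the $k^{-1/2}$ loss; the $k^4/n^2$ loss comes from $\cos(\pi/(M+1))$, not from Johnson approximations.

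(ii) There is no ``polynomial $p(H)$ corresponding to $w$.'' In the twisted model, $(\sfa^*)^4\vac$ already lies outside $\spn\{\sfx^s\vac\}$. This is because $\sfa(\sfa^*)^3\vac$ has an $\cF_2$-component orthogonal to $(\sfa^*)^2\vac$, coming from $\sum_{j\ge1}\lambda_j^2\Pi_j$. Your vacuum-moment claim is also false for small $\ell$ ($\ell=1$ gives $1$, not $\approx 2n/k^2$), and in any case it does not follow from a Rayleigh-quotient witness. The missing idea is the Hermite (Wiener--It\^o) isometry $Uf_{\bm r}=h_{\bm r}(\bm g)\Gamma^{\bm r}$, which intertwines $\sfx$ with left multiplication by $H(\bm g)$. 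Then $F=U\psi$, a matrix-valued polynomial in $\bm g$ rather than a function of $H$, satisfies $\langle\psi,\sfx\psi\rangle=\E\tr(F^*HF)\le\E[\lambda_{\max}(H)\,w]$ with $w=\|F\|_{\mathrm{HS}}^2$.

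(iii) Removing the weight $w$ is not a pure concentration step. The weight has degree $2L$ with $L\approx n/k^{3/2}$, so its second moment can be enormous. The paper uses H\"older with $p=L+1$, hypercontractivity to get $\|w\|_{p'}=O(1)$, and the Lipschitz constant $\sqrt{(\rho^++\delta)/(1-q)}=O(n^{-1/2})$. This loses only $O(\sqrt{L/n})$.
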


\begin{remark}
    \begin{enumerate}[(1)]
    \item Our results, by natural modifications to our arguments, actually imply the same bounds as above on the ground state energy $\lambda_{\max}(H_{\syk})$. Observe that the upper bound follows immediately since $\lambda_{\max}(H)\leq \|H\|_{\op}.$ Our lower bound argument with minor modifications actually yields that a quadratic form of $H_{\syk}$ is large and thus applies to $\lambda_{\max}(H_{\syk})$ and not just $\|H_{\syk}\|_{\op}$.\footnote{In particular, \cref{lem:transfer-virtual-to-real,lem:lipschitz-operator,lem:lower-bound-syk-general} hold directly upon replacing the operator norm $\|H_{\syk}\|_{\op}$ by $\lambda_{\max}(H_{\syk})$.} 
    \item While our result gives an estimate that is sharp as $k$ grows, we note that for small constant values of $k$, the multiplicative factor of $(\sqrt{2}/k)$ for $\sqrt{n}$ is \emph{not} tight. For $k = 2$, the SYK Hamiltonian is exactly solvable, and in particular \cite{MaldacenaS16,feng2019spectrum} showed that $\|H_{\operatorname{2-SYK}}\|_{\operatorname{op}} = (\frac{4\sqrt{2}}{3\pi} + o(1))\sqrt{n}\approx 0.6\sqrt{n}$ almost surely, while $\sqrt{2n}/2\approx 0.707\sqrt{n}$.
    \item For $k\geq 4$, \cite{feng2019spectrum} show that $\E_{\bm g}\lambda_{\max}(H_{\operatorname{SYK}})\leq\sqrt{\ln(2)\cdot n}\approx 0.832\sqrt{n}$. On the other hand, for $k\geq 4$ \cref{thm:main-thm-complete} implies that $\E_{\bm g}\lambda_{\max}(H_{\operatorname{SYK}})\leq\E_{\bm g}\|H_{\operatorname{SYK}}\|_{\operatorname{op}}\leq\frac{\sqrt{2n}}{4} + O(1)\leq 0.354\sqrt{n} + O(1)$, and thus our result outperforms that of \cite{feng2019spectrum} for all $k\geq 4$.
    \item For $k = 4$, \cite{hastings2022optimizing} give an \emph{algorithmic}\footnote{More precisely, they show the following result: Let $\R^{\binom{[n]}{4}}\ni\bm g\mapsto \mathbf{SoS}_8(\bm g)\in\R$ be the degree-$8$ Sum-of-Squares relaxation to estimate $\|H_{\operatorname{4-SYK}}\|_{\operatorname{op}}$. $\mathbf{SoS}_8(\cdot)$ can be computed in deterministic $\poly(n)$ time given $\bm g$, and furthermore, $\mathbf{SoS}_8(\bm g)\geq\|H_{\operatorname{4-SYK}}\|_{\operatorname{op}}$ \emph{pointwise} for all $\bm g\in\R^{\binom{[n]}{4}}$. Finally, $\E_{\bm g}\mathbf{SoS}_8(\bm g)\leq\sqrt{1 + \sqrt{6}}\cdot\sqrt n$.} proof that $\lambda_{\max}\left(H_{\operatorname{4-SYK}}\right)\leq\sqrt{1 + \sqrt{6}}\cdot\sqrt{n}\approx 1.857\sqrt{n}$ with probability $\geq 1 - o_n(1)$. Very recently,~\cite{he2026spectraledgequarticsyk} established the exact spectral edge for $k=4$, showing that almost surely $\lim_{n\to\infty}\lambda_{\max}(H_{\operatorname{4-SYK}})/\sqrt{n} = \kappa_{\operatorname{SD}}\approx 0.325 < \sqrt{2}/4\approx 0.354$. On the other hand \cref{thm:main-thm-complete} yields $\E_{\bm g}\|H_{\operatorname{4-SYK}}\|_{\operatorname{op}}\leq 0.354\sqrt{n} + O(1)$ for all $n\geq 2$. Our techniques work for all $2\leq k < \sqrt{n}/4$ though, and are very different from the techniques in~\cite{he2026spectraledgequarticsyk}. Furthermore, when $\omega_n(1)\leq k\leq o(\sqrt{n})$ \cref{thm:main-thm-complete} obtains the correct prefactor, and our results are also applicable to the sparsified SYK Hamiltonian, which we will describe shortly.
    \end{enumerate}
\end{remark}

\paragraph{Spectral Edge of Sparse SYK Models: } We show that our results extend to sparse variants of the SYK model. Formally, let $\cH \sim \binom{[n]}{k}$  be a random hypergraph with $m= |\cH| \ll {n \choose k}$ hyperedges. The sparse SYK Hamiltonian we consider is defined by: 
\[H^\cH_{\operatorname{SYK}}:= \frac{1}{\sqrt{|\cH|}}\sum_{S\in\cH}g_S\Gamma_S.\]

Understanding such sparse variants has been of interest recently ~\cite{xu2020sparse,jafferis2022traversable,HerasymenkoSHT23,dingOptimizingSparseSYK2026} as they may be easier to implement in practice. In this context, we care about whether relevant properties of the dense SYK model transfer to this ``sparsification.'' We show that the spectral edge identified in our main result above naturally extends to the sparse variants after appropriate normalization.

In the following (and throughout this paper), for any $m\geq 2$, a random $k$-uniform hypergraph $\cH$ with $m$ hyperedges is obtained by taking $m$ independent and uniform samples from $S\in\binom{[n]}{k}$. ~\footnote{If we write the sampled set of hyperedges $\cH = (S_1,S_2,\dots,S_m)$ with $S_i \sim\Unif\binom{[n]}{k}$ i.i.d., then the Gaussian disorder variables corresponding to $S_i$ and $S_j$ for $j\neq i$ are independent even though $S_i$ and $S_j$ might correspond to the same hyperedge.}

\begin{theorem}[Operator Norm of the Sparse Random SYK Hamiltonian (See \cref{thm:syk-spectral-norm-random,thm:syk-spectral-norm-lower-random} for full version)]\label{thm:main-thm-random}
    Let $n\geq k\geq 2$ be even integers, and suppose $k^2/n < 1/16$.  Let $\cH$ be a random $k$-uniform hypergraph with $m\geq 2^{Ck}n\log n$ edges, where $C > 0$ is some sufficiently large absolute constant. Let 
    \[H^\cH_{\operatorname{SYK}}:= \frac{1}{\sqrt{|\cH|}}\sum_{S\in\cH}g_S\Gamma_S\]
    be the SYK Hamiltonian corresponding to $\cH$. Then with probability $\geq 1 - o_n(1)$ over the randomness of $\cH$ we have 
    \[\frac{\sqrt{2n}}{k}\cdot
\left(
1-O(1)\cdot\max\left\{
k^{-1/4},
\frac{k^4}{n^2}
\right\}
\right)\leq\E_{\bm g}\|H^\cH_{\operatorname{SYK}}\|_{\operatorname{op}}\leq\frac{\sqrt{2n}}{k}\cdot\left(1 + O(1)\cdot\max\left\{\frac{k}{\sqrt{n}}, e^{-\Omega(k)}\right\}\right).\]
    In particular if $\omega(\log(n))\leq k\leq o(\sqrt{n})$ then $\E\|H^\cH_{\operatorname{SYK}}\|_{\operatorname{op}} = (1 - o_n(1))\cdot\sqrt{2n}/k$.
\end{theorem}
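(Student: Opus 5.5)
We prove the two bounds separately, reusing the dense-case machinery promised in the abstract: a deterministic operator $\sfx$ whose fixed quadratic form of $\sfx^{2\ell}$ reproduces expected trace moments, and a witness state giving a lower bound. Throughout, fix $\cH=(S_1,\dots,S_m)$ and first average over $\bm g$ only. Then
\[\E_{\bm g}\Tr\big((H^\cH_{\syk})^{2\ell}\big) = m^{-\ell}\sum_{\text{pairings}}\sum_{i_1,\dots,i_\ell\in[m]}(\pm)\,\Tr(\cdots),\]
where each term is a Wick pairing of $2\ell$ slots, each pair labelled by an index $i\in[m]$, with a sign given by the anticommutation of $\Gamma_{S_i}$ and $\Gamma_{S_j}$, namely $(-1)^{|S_i\cap S_j|}$ since $k$ is even.

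\textbf{Upper bound.} The dense moment is the same expression with $\E_{S_i}$ replacing the empirical average over $\cH$. We view the sparse moment as a quadratic form of $\sfx_\cH^{2\ell}$. Here $\sfx_\cH$ is the analogue of $\sfx$ in which the creation and annihilation of a hyperedge range only over $S\in\cH$ with weight $1/m$, instead of uniform weight $\binom nk^{-1}$. Equivalently, the Johnson-scheme matrix controlling the spectral edge is replaced by its empirical version $\frac{\binom{n}{k}}{m}\sum_{i}e_{S_i}e_{S_i}^\top$ composed with the sign kernel $(-1)^{|S\cap T|}$. Concretely, the dominant object is the $m\times m$ sign matrix $M_{ij}=(-1)^{|S_i\cap S_j|}/m$, or equivalently the sign kernel restricted to the sample.

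The steps are as follows.
\begin{enumerate}[(i)]
\item Rerun the dense argument, bounding $\E_{\bm g}\|H^\cH\|_{\op}\le (2^{n/2}\E\Tr H^{2\ell})^{1/2\ell}$ with $\ell\approx n$, by $\|\sfx_\cH\|$ up to $(1+o(1))$ factors.
\item Show $\|\sfx_\cH\|\le\|\sfx\|(1+O(\max\{k/\sqrt n,e^{-\Omega(k)}\}))$ by matrix concentration (matrix Bernstein / Rudelson) of the empirical sum of $m\ge 2^{Ck}n\log n$ i.i.d.\ rank-structured terms around their mean. The $2^{Ck}$ accounts for the operator norm of each summand relative to the mean, whose spectrum decays like $e^{-\Omega(k)}$ on the non-top eigenspaces, and the $n\log n$ makes the union over dimension and failure probability $o(1)$.
\end{enumerate}
The main obstacle is that $\sfx$ acts on a bosonic Fock space over hyperedges, which is infinite-dimensional. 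I would first reduce, as in the dense proof, to the $\binom nk$- (here $m$-) dimensional one-particle block that dominates the edge, and apply concentration only there. Multi-particle contributions are controlled by the same one-particle norm, since the twisted bosonic structure makes them tensor-like.

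\textbf{Lower bound.} Use the dense witness: a state $\psi$ built from the Johnson top eigenvector, transformed via \cref{lem:transfer-virtual-to-real,lem:lower-bound-syk-general} into a certificate $\langle\phi,H\phi\rangle$ that is large. For sparse $\cH$ the witness uses the empirical top eigenvector. Its quadratic form concentrates around the dense value once $m\gg 2^{Ck}n\log n$, at the cost of a worse error term. This degrades $k^{-1/2}$ to $k^{-1/4}$, for instance via a Chebyshev/second-moment bound on the witness energy over $\cH$ with variance $O(k^{-1/2})$ relative to the mean. Gaussian Lipschitz concentration (\cref{lem:lipschitz-operator}) then passes from the certificate to $\E_{\bm g}\|H^\cH\|_{\op}$. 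Finally, a union bound over the two high-probability events in $\cH$ gives probability $1-o_n(1)$.
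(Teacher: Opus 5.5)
\textbf{There is a genuine gap.} Your outline has the right ingredients: the twisted bosonic model on the sampled hyperedges, matrix Bernstein for the empirical sign matrix, and a witness pushed through \cref{lem:transfer-virtual-to-real,lem:lower-bound-syk-general}. But the step meant to carry the upper bound does not work as stated.

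\emph{The edge is not a one-particle phenomenon.} Compressed to $\cF_0\oplus\cF_1$, $\sfx$ has norm $b_0=\|\sfa^*\vac\|=1$, and the value $2/\sqrt{1-q}$ only appears at particle numbers $t\geq\Omega(1/(1-q))$. There is also no ``tensor-like'' structure that carries a one-particle norm to higher sectors.

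\emph{The comparison needs truncation.} $\sfx$ is unbounded on the Fock space (each $\sfa_S$ is an honest boson). So $\|\sfx_\cH\|\leq\|\sfx\|(1+\cdots)$ only makes sense after truncating at level $\ell$. You need $\ell\gg n$, not $\ell\approx n$, because otherwise $N^{1/(2\ell)}=2^{n/(4\ell)}$ is a constant-factor loss.

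\emph{The missing mechanism.} What controls every sector is $\sfa\sfa^*=\Id+\cL(\cE^\cH)$ together with the fact that $R\mapsto\cL(R)$ is order-preserving (\cref{lem:cL-properties}). The one-sided bound $\cE^\cH\preceq(q+\delta)uu^*+(\rho^++\delta)\Pi_{u^\perp}$ then lifts to $\sfa\sfa^*\preceq\Id+(q-\rho^+)\sfa^*\sfa+(\rho^++\delta)\Num$. This gives $b_t^2\leq(q-\rho^+)_+b_{t-1}^2+1+(\rho^++\delta)t$ for $b_t=\|\sfa^*\|_{\cF_t\to\cF_{t+1}}$. So for the given $\cH$ everything reduces to scalars of the $m\times m$ sign matrix, and no comparison with the dense operator is needed. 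With $\alpha=k^2/n$, you must show:
\begin{itemize}
\item $\rho^++\delta\leq O(\xi/n)$, where $\xi=\alpha+e^{-\Omega(k)}$;
\item $q\leq\lambda_0+o(\alpha)$.
\end{itemize}

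\emph{One-sidedness is essential.} Even in the dense case $\|\cE^{\comp}-\lambda_0uu^*\|_{\op}=|\lambda_1|\approx 2k/n$, because the Johnson sign matrix is negative on $V_1$. Any two-sided concentration of $\cE^\cH$ would therefore give $(\rho^++\delta)\ell\geq\Omega(k)$, which destroys the bound. The paper instead does the following:
\begin{itemize}
\item It writes $\cE^{\comp}=\lambda_0\1\1^*/\binom nk+G^+-G^-$ and uses that the positive eigenvalues (even $j\geq 2$) are only $\Theta(\alpha/n)$.
\item It applies \cref{lem:random-proj-compression} to $G^+$, which is PSD with constant diagonal and $\Tr(G^+)\leq e^{9k}$. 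This gives $\rho^+\leq\|A^+\|_{\op}\leq O(\alpha/n+e^{O(k)}\log n/m)$.
\item The hypothesis $m\geq 2^{Ck}n\log n$ enters here: $2^{Ck}$ pays for $\Tr(G^+)$, which sets the size $\Tr(G^+)/m$ of each rank-one Bernstein summand, and the factor $n$ pays for the required $1/n$ accuracy.
\item $\delta$ and $q$ are not controlled to the needed accuracy by operator-norm bounds. They need separate second-moment computations, $\E_\cH\delta^2\leq 4\alpha/m$ and $\E_\cH(q-\lambda_0)^2\leq 8\alpha/m^2$, followed by Markov. This is also why the success probability is $1-O(1/\log n)$.
\end{itemize}

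\emph{Lower bound.} The lower bound has the same gap and also uses the wrong witness. The dense witness lives on the Fock space over $\binom{[n]}{k}$, and $U$ maps it to polynomials in Gaussians $g_S$ with $S\notin\cH$. The intertwining $\cM U=U\sfx$ holds only for the model built on the same $\cH$, so that witness cannot be used for $H^\cH_{\syk}$. Your Chebyshev bound with relative variance $O(k^{-1/2})$ on a degree-$L$ functional of the random signs is asserted with no mechanism behind it.

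The paper reruns the Krylov construction inside the sparse Fock space. It takes $\psi=\sum_{t<M}\sin\bigl((t+1)\pi/(M+1)\bigr)v_{t_0+t}$ with $v_t\propto(\sfa^*)^t\vac$. Its energy is bounded deterministically through $\beta_t^2\geq(1-q^{t+1}-(\rho^-+\delta)t)/(1-q)$, which comes from the reverse lifted inequality $\sfa\sfa^*\succeq\Id+(q+\rho^-)\sfa^*\sfa-(\rho^-+\delta)\Num$.

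The missing probabilistic input is therefore $\rho^-\leq\|A^-\|_{\op}\leq O(k/n)$. This follows from \cref{lem:random-proj-compression} applied to $G^-$, whose norm is $|\lambda_1|$. It gives $B\geq\Omega(\sqrt k)$ and allows $\kappa=\Theta(\sqrt k)$.

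Finally, the $k^{-1/4}$ is not a loss from concentration over $\cH$. It is the transfer error $\sqrt{L(\rho^++\delta)/(1-q)}$ coming from hypercontractivity and Lipschitz concentration, with $L=O(n/k^{3/2})$. The sparsification term $e^{-\Omega(k)}/n$ in $\rho^++\delta$ need not be $O(\alpha/n)$ and is bounded crudely there. The Lipschitz constant in \cref{lem:lipschitz-operator} also requires the sparse bounds on $\rho^+$ and $\delta$ from the upper-bound part.
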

\begin{remark}
    Note that bounds of the above kind are not true for \emph{arbitrary} $k$-uniform hypergraphs: Indeed, consider $\cH:= \left\{[r]\cup S:S\in\binom{[n]\setminus[r]}{k - r}\right\}\seq\binom{[n]}{k}$, where we assume $\omega(\log(n))\leq k\leq o(\sqrt{n})$, and choose $r:= k - \lceil 3k/\log_2(n)\rceil$ to obtain $|\cH|\geq 2^{\Omega(k)}n\log(n)$. But at the same time we can also show that $\E_{\bm g}\|H^\cH_{\operatorname{SYK}}\|_{\operatorname{op}}\geq\Omega((\sqrt{n}/k)\cdot\log(n))$. Note that this happens simply because the ``effective arity'' of $\cH$ is only $k - r$, which is how the spectral norm scales too. 

    In fact, examining our proof more closely yields the following deterministic condition that a $k$-uniform hypergraph $\cH$ needs to satisfy so that we obtain tight bounds on $\E_{\bm g}\|H^\cH_{\operatorname{SYK}}\|_{\operatorname{op}}$ as in \cref{thm:main-thm-random}: Let $\cE:= |\cH|^{-1}\cdot((-1)^{|S\cap T|})_{S, T\in\cH}$ be the \emph{sign matrix} encoding the commutation structure of $\{\Gamma_S\}_{S\in\cH}$. Also write $u:= \1/\sqrt{|\cH|}\in\C^\cH, \Pi_{u^\perp}:= \Id - uu^*, \delta:= \|\Pi_{u^\perp}\cE u\|_2$, and suppose $\delta\leq o(1)/n$. Further suppose $1-\langle u,\cE u\rangle = (2+o(1))k^2/n,-O(k/n)\cdot\Id\preceq\Pi_{u^\perp}\cE\Pi_{u^\perp}\preceq O(k^2/n^2)\cdot\Id$. Then we have $\E_{\bm g}\|H^\cH_{\operatorname{SYK}}\|_{\operatorname{op}} = (1 - o_n(1))\sqrt{2n}/k$ for $\omega(\log(n))\leq k\leq o(\sqrt{n})$.
\end{remark}

\paragraph{Subgaussian Concentration for $\|H^\cH_{\syk}\|_{\op}$.} Using standard Lipschitz concentration tools in probability theory (see \cref{fact:gaussian-lipschitz-conc}), we can establish subgaussian concentration for $\|H^\cH_{\syk}\|_{\op}$:

\begin{theorem}[Subgaussian Concentration for $\|H^\cH_{\syk}\|_{\op}$ (see \cref{thm:syk-subgaussian,thm:syk-subgaussian-sparse} for full version)]\label{thm:main-thm-subgaussian}
    Let $n\geq k\geq 2$ be even integers, and suppose $k^2/n < 1/16$. Fix any real $t\geq 0$. Then for $\cH = \binom{[n]}{k}$ (the ``dense case'')
    \[\Pr_{\bm g}\left(\left|\|H^\cH_{\syk}\|_{\operatorname{op}} - \E_{\bm g}\|H^\cH_{\syk}\|_{\operatorname{op}}\right|\geq t\right)\leq 2\cdot\exp\left(-\Omega(nt^2)\right).\]
    If $\cH$ is a random $k$-uniform hypergraph with $m\geq 2^{Ck}n\log n$ edges, where $C > 0$ is some sufficiently large absolute constant, then with probability $\geq 1 - o_n(1)$ over the randomness of $\cH$ we have
    \[\Pr_{\bm g}\left(\left|\|H^\cH_{\syk}\|_{\operatorname{op}} - \E_{\bm g}\|H^\cH_{\syk}\|_{\operatorname{op}}\right|\geq t\right)\leq 2\cdot\exp\left(-\Omega(kt^2)\right).\]
\end{theorem}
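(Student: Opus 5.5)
The plan is to apply Gaussian Lipschitz concentration (\cref{fact:gaussian-lipschitz-conc}) to the map $\bm g\mapsto\|H^\cH_{\syk}\|_{\op}$. This gives a tail $2\exp(-t^2/(2L^2))$, where $L$ is the Lipschitz constant of the map with respect to the Euclidean norm on $\bm g$. The whole proof therefore reduces to bounding $L$: in the dense case we need $L^2\leq O(1/n)$, and in the sparse case $L^2\leq O(1/k)$ with high probability over $\cH$.

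\textbf{Reducing $L$ to an extremal quadratic form.} Write $H(\bm g)=|\cH|^{-1/2}\sum_{S\in\cH}g_S\Gamma_S$. By the triangle inequality,
\[
\bigl|\|H(\bm g)\|_{\op}-\|H(\bm h)\|_{\op}\bigr|\leq\|H(\bm g-\bm h)\|_{\op}.
\]
So $L=\sup_{\|\bm a\|_2=1}\|H(\bm a)\|_{\op}$. For a unit vector $\psi$ and real unit $\bm a$, Cauchy–Schwarz gives
\[
|\langle\psi,H(\bm a)\psi\rangle|\leq|\cH|^{-1/2}\Bigl(\sum_S\langle\psi,\Gamma_S\psi\rangle^2\Bigr)^{1/2},
\]
since each $\Gamma_S$ is Hermitian and so $\langle\psi,\Gamma_S\psi\rangle$ is real. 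Hence
\[
L^2=\sup_{\psi}\frac{1}{|\cH|}\sum_{S\in\cH}\langle\psi,\Gamma_S\psi\rangle^2 .
\]

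\textbf{Bounding the quadratic form via the sign matrix.} Let $v_S=\langle\psi,\Gamma_S\psi\rangle$. Consider $\sum_S v_S\Gamma_S$. Its expectation in $\psi$ equals $\sum_S v_S^2$, and this is at most $\|\sum_S v_S\Gamma_S\|_{\op}$. Now use
\[
\Bigl\|\sum_S v_S\Gamma_S\Bigr\|_{\op}^2\leq\Bigl\|\Bigl(\sum_S v_S\Gamma_S\Bigr)^2\Bigr\|_{\op} .
\]
Expanding the square and pairing $(S,T)$ with $(T,S)$: the pairs whose $\Gamma$'s anticommute cancel. The commuting pairs give $\sum_S v_S^2\,\Id$ plus cross terms. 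Bounding the cross terms directly is awkward, so instead I would use the standard fact that the $\Gamma_S$ are orthonormal in the normalized Hilbert–Schmidt inner product, together with a frame-type bound:
\[
\sum_S\langle\psi,\Gamma_S\psi\rangle^2\leq\lambda_{\max}(G),
\]
where $G_{ST}=\langle\psi,\Gamma_S\Gamma_T\psi\rangle$ is a Gram matrix. Indeed, $\sum_S v_S^2=\bigl\langle\psi,\bigl(\sum_S v_S\Gamma_S\bigr)\psi\bigr\rangle$ is bounded by $\|v\|$ times the norm of the map $v\mapsto\sum_S v_S\Gamma_S\psi$. The squared norm of that map is $\lambda_{\max}(G)$, since $G$ is the Gram matrix of the vectors $\Gamma_S\psi$.

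The known clean estimate for $\lambda_{\max}(G)$ bounds the number of Majorana strings with near-maximal expectation. In the dense case I would use the Hastings–O'Donnell style bound
\[
\sum_{S\in\binom{[n]}{k}}\langle\psi,\Gamma_S\psi\rangle^2\leq O\Bigl(\binom{n}{k}\big/ n\Bigr)\quad\text{(heuristically, the count of strings in a maximal commuting family)}.
\]
This would give $L^2\leq O(1/n)$. More concretely, I expect the paper to already contain this in the form $\|\sum_S v_S\Gamma_S\|_{\op}^2\leq\|v\|^2\cdot O(\lambda_{\max}(\cE)\,|\cH|)$ via the sign matrix $\cE$ from the remark. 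Taking $\cE$ with $\lambda_{\max}(|\cH|\cE)\lesssim|\cH|/n$ in the dense case, from the Johnson scheme eigenvalue computations used for the upper bound, gives $L^2\lesssim 1/n$. In the sparse case the extra error terms in the spectrum of $\Pi_{u^\perp}\cE\Pi_{u^\perp}$ and the $2^{-\Omega(k)}$ fluctuations from random sampling only give $L^2\lesssim 1/k$ with probability $1-o(1)$.

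\textbf{Main obstacle.} The hard step is bounding $\sup_\psi\sum_S\langle\psi,\Gamma_S\psi\rangle^2$ by $O(|\cH|/n)$, respectively $O(|\cH|/k)$. My first attempt would be to bound it by $\lambda_{\max}$ of the sign matrix $|\cH|\cE$, using the Johnson-scheme spectrum in the dense case and the matrix-Chernoff spectral bounds on $\cE$ for random $\cH$ established earlier for \cref{thm:main-thm-random}. Once these are in place, the concentration statements follow immediately from \cref{fact:gaussian-lipschitz-conc}.
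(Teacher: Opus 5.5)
Your reduction is correct: the Lipschitz constant is exactly $L^2=\sup_{\psi}|\cH|^{-1}\sum_{S\in\cH}\langle\psi,\Gamma_S\psi\rangle^2$. So everything hinges on proving this is $O(1/n)$ in the dense case and $O(1/k)$ for random $\cH$, and that step is missing.
\begin{itemize}
\item The Gram-matrix step is circular. $\sup_\psi\lambda_{\max}(G_\psi)=\sup_{\|v\|_2=1}\|\sum_S v_S\Gamma_S\|_{\op}^2$ lies between $|\cH|L^2$ and $2|\cH|L^2$, so bounding it is the original problem.
\item The ``Hastings--O'Donnell style'' bound is only asserted heuristically. Moreover, the size of a commuting family is a \emph{lower} bound for $\sup_\psi\sum_S\langle\psi,\Gamma_S\psi\rangle^2$ (take a joint eigenvector), not an upper bound.
\item The one concrete mechanism you propose is false. $\lambda_{\max}(\cE)$ is not $O(1/n)$: in the dense case $u=\1/\sqrt{|\cH|}$ is an exact eigenvector of $\cE$ with eigenvalue $q=\lambda_0=1-\Theta(k^2/n)$. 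Hence $\lambda_{\max}(|\cH|\cE)\approx|\cH|$, and a bound $\|\sum_S v_S\Gamma_S\|_{\op}^2\le O(\lambda_{\max}(\cE)|\cH|)\|v\|_2^2$ gives only the trivial $L^2\le O(1)$, i.e.\ tails $\exp(-\Omega(t^2))$.
\end{itemize}

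The missing idea is a self-bounding argument that removes the dominant $u$-direction; this is \cref{lem:lipschitz-operator}.
\begin{itemize}
\item For a unit $\psi$ and $\bm g\in\R^\cH$, set $X_{S,T}:=g_Sg_T\Ree\langle\Gamma_S\psi,\Gamma_T\psi\rangle$. Then $X\succeq 0$, $\Tr X=\|\bm g\|_2^2$, and $\|H(\bm g)\psi\|^2=\langle u,Xu\rangle$.
\item Since $\Ree\langle\Gamma_S\psi,\Gamma_T\psi\rangle=\frac12\langle\psi,\{\Gamma_S,\Gamma_T\}\psi\rangle$ vanishes on anticommuting pairs, $X$ is supported where $\eps_{S,T}=1$. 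This is the precise form of the cancellation you noticed and then abandoned. Hence $\langle u,Xu\rangle=\Tr(\cE X)$.
\item Testing $\cE\preceq(q+\delta)uu^*+(\rho^++\delta)\Pi_{u^\perp}$ (\cref{prop:cE-dom}) against $X$ gives $\langle u,Xu\rangle\le(q+\delta)\langle u,Xu\rangle+(\rho^++\delta)(\Tr X-\langle u,Xu\rangle)$.
\item Rearranging, $L^2\le(\rho^++\delta)/(1-q+\rho^+)\le(\rho^++\delta)/(1-q)$. This is the off-$u$ leakage divided by the gap $1-q$, not $\lambda_{\max}(\cE)$; in the dense case it coincides with Hoffman's ratio bound for the anticommutation graph.
\end{itemize}

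Only now do the spectral estimates enter.
\begin{itemize}
\item For $\cH=\binom{[n]}{k}$: $\delta=0$, $\rho^+=\Theta(\alpha/n)$ and $1-q\ge 2(\alpha-\alpha^2)$, so $L^2=O(1/n)$.
\item For random $\cH$: with probability $1-O(1/\log n)$, $\rho^++\delta\le O((\alpha+e^{-\Omega(k)})/n)$ and $1-q=\Theta(\alpha)$, so $L^2\le O(1/n+e^{-\Omega(k)}/k^2)=O(1/k)$.
\end{itemize}
Gaussian Lipschitz concentration then finishes both cases.
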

\begin{remark}
    In the random hypergraph case, by increasing the number of hyperedges in $\cH$ from $2^{\Omega(k)}n\log(n)$ to $2^{\Omega(k)}n^3\log(n)$ we can get the tail decay to be $2\cdot\exp\left(-\Omega(nt^2)\right)$ as in the dense case.
\end{remark}

\paragraph{Optimal Quantum Algorithms to sample SYK Ground States.} As a corollary of \cref{thm:main-thm-complete,thm:main-thm-random}, we prove that the dissipative quantum algorithm of~\cite{basso2024optimizing} provably achieves an $\Omega(1)$ approximation to the ground state energy of the sparse SYK model whenever $\omega_n(1)\leq k\leq o(\sqrt{n})$. We note that for the case of $k=4$, a recent work~\cite{dingOptimizingSparseSYK2026} gave an efficient quantum algorithm to synthesize a state $\rho(H^\cH_{\operatorname{4-SYK}})$ which achieves $\E_{\bm g}\Tr(\rho(H^\cH_{\operatorname{4-SYK}})H^\cH_{\operatorname{4-SYK}})\geq\Omega(\sqrt{n})$ w.h.p. over random $4$-uniform hypergraphs $\cH$ with $m\geq\Omega(n^3\log(n))$ hyperedges. Their algorithm was a suitable modification of a quantum algorithm due to \cite{hastings2022optimizing} who showed a similar result for the dense case $\cH = \binom{[n]}{4}$. See also \cref{footnote:HO-eff-quant}.

\begin{theorem}[Optimal Quantum Algorithms for the SYK Model]\label{thm:optimal-qalg-syk}
    Let $n\geq k\geq 2$ be even integers such that $k^2/n < 1/16$. Consider the $H^\cH_{\operatorname{SYK}}$ Hamiltonian, where we take $\cH = \binom{[n]}{k}$ (which yields the SYK Hamiltonian \eqref{eq:SYK-intro-def}). Then there exists a dissipative quantum algorithm with runtime $\leq n^{O(k)}$, which on inputting $H^\cH_{\operatorname{SYK}}$ synthesizes the state $\rho(H^\cH_{\operatorname{SYK}})$ such that 
    \begin{equation}\label{eq:optimal-ground-state}
        \E_{\bm g}\left[\Tr(\rho(H^\cH_{\operatorname{SYK}})\cdot H^\cH_{\operatorname{SYK}})\right]\geq\Omega(1)\cdot\E_{\bm g}\left\| H^\cH_{\operatorname{SYK}}\right\|_{\operatorname{op}}.
    \end{equation}
    Note that $\E_{\bm g}\left[\Tr(\rho(H^\cH_{\operatorname{SYK}})\cdot H^\cH_{\operatorname{SYK}})\right]\leq\E_{\bm g}\left\| H^\cH_{\operatorname{SYK}}\right\|_{\operatorname{op}}$, and thus the result is optimal up to constant factors. 
\end{theorem}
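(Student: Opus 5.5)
The plan is to combine the energy guarantee of the dissipative algorithm of~\cite{basso2024optimizing} with the upper bound of \cref{thm:main-thm-complete}. The algorithm in \cite{basso2024optimizing} takes as input any Hamiltonian $\sum_{S}J_S\Gamma_S$ that is a $k$-local Majorana Hamiltonian. It prepares a state $\rho$ with a lower bound on $\Tr(\rho H)$ in terms of the coefficients. For SYK with disorder $\bm g$, it gives $\E_{\bm g}\Tr(\rho(H_{\syk})H_{\syk})\geq c\sqrt{n}/k$ for an absolute constant $c>0$, uniformly in $k$ up to order $\sqrt n$. Its runtime is polynomial in the number of terms, so at most $\poly(\binom nk)=n^{O(k)}$. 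This is exactly the content of the lower bound $\Omega(\sqrt n/k)$ cited in the introduction as matching \cite{anschuetzStronglyInteractingFermions2025}.

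\textbf{Step 1.} Restate the guarantee of \cite{basso2024optimizing} in our normalization. Our Hamiltonian carries the factor $\binom nk^{-1/2}$, so $\E H_{\syk}^2=\Id$. Their statement is written for a general variance profile, and I would check that for i.i.d.\ Gaussian couplings in the regime $k\leq\sqrt n/4$ it yields expected energy at least $c\sqrt n/k$. The main point to verify is that the constant $c$ does not degrade with $k$ in this range. My understanding is that their analysis uses only second moments of the coefficients and the commutation structure $(-1)^{|S\cap T|}$. So one computes $\E_{\bm g}$ of the relevant quadratic expression in $\bm g$, and the count of pairs $S,T$ with odd overlap, which is about a $k^2/n$ fraction, gives the $\sqrt n/k$ scaling.

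\textbf{Step 2.} Invoke \cref{thm:main-thm-complete}, whose upper bound gives $\E_{\bm g}\|H_{\syk}\|_{\op}\leq \sqrt{2n}/k+O(1)$. Because $k\leq\sqrt n/4$, we have $\sqrt n/k\geq 4$, so the additive $O(1)$ term is at most $O(1)\cdot\sqrt n/k$. Hence $\E\|H_{\syk}\|_{\op}\leq C'\sqrt n/k$. Combining with Step 1 gives
\[\E_{\bm g}\Tr(\rho H_{\syk})\geq \tfrac{c}{C'}\,\E_{\bm g}\|H_{\syk}\|_{\op},\]
which is \eqref{eq:optimal-ground-state}. The reverse inequality is immediate, since $\Tr(\rho H)\leq\|H\|_{\op}$ for any state $\rho$.

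\textbf{Main obstacle.} The main obstacle is Step 1: extracting from \cite{basso2024optimizing} a bound with explicit $k$-dependence of the form $\Omega(\sqrt n/k)$ that holds uniformly for all even $2\leq k<\sqrt n/4$, rather than only for fixed $k$. If their stated theorem only covers constant $k$, I would reprove the energy bound. I would expand the algorithm's output energy to first order in the dissipative step size and compute the expectation over $\bm g$ as a sum over pairs $(S,T)$ weighted by $\mathbf 1[|S\cap T|\text{ odd}]$. The fraction of such pairs is $\Theta(k^2/n)$ when $k^2\ll n$, and I would optimize the step size to obtain $\Theta(\sqrt n/k)$. The hypothesis $k^2/n<1/16$ keeps this fraction bounded away from $1/2$, so that the count behaves as claimed.
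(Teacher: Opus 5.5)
Your proposal is correct and matches the paper's proof, which simply combines \cite[Theorem~3.1]{basso2024optimizing} (expected energy $\Omega(\sqrt{n}/k)$ for the dissipative algorithm) with the upper bound $\E_{\bm g}\|H_{\syk}\|_{\op}\leq\sqrt{2n}/k+O(1)$ from \cref{thm:main-thm-complete}. Your explicit absorption of the additive $O(1)$ term via $\sqrt{n}/k\geq 4$ is the only step the paper leaves implicit. The paper takes the uniformity in $k$ that you flag as given by the cited theorem, so your fallback re-derivation is not needed.
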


\cref{thm:optimal-qalg-syk} follows simply by combining~\cite[Theorem~3.1]{basso2024optimizing} with \cref{thm:main-thm-complete}.

\section{Technical Overview}\label{sec:overview}
We now give an exposition of the main ideas that go into the proof of our main results. Let us first focus on the upper bound for $\E_{\bm g}\|H^\cH_{\operatorname{SYK}}\|_{\operatorname{op}}$, where we take $\cH$ to be either $\binom{[n]}{k}$ or a random $k$-uniform hypergraph with $2^{\Omega(k)}n\log(n)$ edges. 

\paragraph{The Trace Moment Method.} A standard way to prove sharp upper bounds on the operator norms of random matrices is the trace moment method based on the following simple observation: for any integer $\ell\geq 1$ we have 
\[\E_{\bm g}\left[\|H^\cH_{\operatorname{SYK}}\|_{\operatorname{op}}\right]^{2\ell}\overset{\text{Jensen}}{\leq}\E_{\bm g}\left[\|H^\cH_{\operatorname{SYK}}\|_{\operatorname{op}}^{2\ell}\right]\leq\E_{\bm g}\left[\Tr((H^\cH_{\operatorname{SYK}})^{2\ell})\right] = N\cdot\E_{\bm g}\left[\tr((H^\cH_{\operatorname{SYK}})^{2\ell})\right]\] 
\[\implies\E_{\bm g}\|H^\cH_{\operatorname{SYK}}\|_{\operatorname{op}}\leq N^{1/(2\ell)}\cdot\E_{\bm g}\left[\tr((H^\cH_{\operatorname{SYK}})^{2\ell})\right]^{1/(2\ell)},\]
where $H_{\syk}$ is treated as a $N$-dimensional operator,\footnote{It is well known that $H_{\syk}$ can be represented as a finite dimensional operator in $N = 2^{n/2}$-dimensional Hilbert space~\cite{BravyiK02}.} and $\Tr, \tr$ refer to unnormalized and normalized trace operators respectively. The advantage of this estimate is that $\E_{\bm g}\left[\tr(H^{2\ell})\right]$ admits a combinatorial interpretation. Indeed, writing $m = |\cH|$ we obtain that 
\[\E_{\bm g}\left[\tr((H^\cH_{\operatorname{SYK}})^{2\ell})\right] = m^{-\ell}\cdot\sum_{S_1, \ldots, S_{2\ell}\in\cH}\E_{\bm g}\tr\left(g_{S_1}\Gamma_{S_1}\cdot g_{S_2}\Gamma_{S_2}\cdots g_{S_{2\ell}}\Gamma_{S_{2\ell}}\right)\]
\[ = m^{-\ell}\cdot\sum_{S_1, \ldots, S_{2\ell}\in\cH}\E_{\bm g}\left[g_{S_1}g_{S_2}\cdots g_{S_{2\ell}}\right]\cdot\tr\left(\Gamma_{S_1}\Gamma_{S_2}\cdots\Gamma_{S_{2\ell}}\right).\]
\paragraph{Evaluating Gaussian moments:} $\E_{\bm g}\left[g_{S_1}g_{S_2}\cdots g_{S_{2\ell}}\right]$ can be evaluated using the classical Isserlis-Wick theorem (see \cref{fact:isserlis-wick}) in probability theory. To invoke \cref{fact:isserlis-wick} we recall some standard terminology: Let $\pi$ be a perfect matching on $[2\ell]$, i.e. $\pi = \{e_1, \ldots, e_\ell\}$ where $\bigsqcup_{i = 1}^\ell e_i = [2\ell]$, $|e_i| = 2$, and $e_i\cap e_j = \emptyset$ for all $1\leq i < j\leq\ell$. The edges $e_1, \ldots, e_\ell$ are also popularly known as \emph{chords}, and for a chord $e = \{i < j\}$, we say $i$ is its \emph{left endpoint}, while $j$ is its \emph{right endpoint}. If we draw the numbers $1, 2, \ldots, 2\ell$ on a line, and connect $i, j$ with a chord if $\{i, j\}\in\pi$, then we get the \emph{chord diagram} $G_\pi$ corresponding to $\pi$. More formally, $G_\pi$ is a graph whose vertices are the chords of $\pi$, and two chords $e, e'\in\pi$ are connected if they \emph{cross} each other, i.e. if $e = \{i_1 < i_2\}$ and $e' = \{i_3 < i_4\}$ satisfy either $i_1 < i_3 < i_2 < i_4$ or $i_3 < i_1 < i_4 < i_2$. 

Now \cref{fact:isserlis-wick} gives us that 
\[\E_{\bm g}\left[g_{S_1}g_{S_2}\cdots g_{S_{2\ell}}\right] = \sum_\pi\prod_{e = \{i, j\}\in\pi}\E_{\bm g}[g_{S_i}g_{S_j}].\]
Since $\{g_S\}_{S\in\binom{[n]}{k}}$ are independent $\cN(0, 1)$ Gaussians, we have $\E_{\bm g}[g_{S_i}g_{S_j}] = \1(S_i = S_j)$. Thus 
\[\E_{\bm g}\left[\tr((H^\cH_{\operatorname{SYK}})^{2\ell})\right] = m^{-\ell}\cdot\sum_\pi\sum_{S_1, \ldots, S_{2\ell}\in\cH}\1_\pi(S_1, \ldots, S_{2\ell})\cdot\tr\left(\Gamma_{S_1}\Gamma_{S_2}\cdots\Gamma_{S_{2\ell}}\right),\]
where $\1_\pi(S_1, \ldots, S_{2\ell}):= \prod_{e = \{i, j\}\in\pi}\1(S_i = S_j)$. Notice that the only tuples $(S_1, \ldots, S_{2\ell})$ that contribute a non-zero term are ones where every $S\in\cH$ occurs an even number of times.

\paragraph{Using Anti-Commutation Relations:} Now, it follows from the anti-commutation properties of Majoranas that $\Gamma_S^2 = \Id$, and $\Gamma_S\Gamma_T = \eps_{S, T}\Gamma_T\Gamma_S$, where $\eps_{S, T}:= (-1)^{|S|\cdot|T| - |S\cap T|} = (-1)^{|S\cap T|}$. Consequently, we can take the expression $\Gamma_{S_1}\Gamma_{S_2}\cdots\Gamma_{S_{2\ell}}$ and keep swapping the matrices $\Gamma_\ast$s until they meet their pair under $\pi$. Consequently, $\Gamma_{S_1}\cdots\Gamma_{S_{2\ell}} = \eps_\pi\Id$, where $\eps_\pi\in\{\pm 1\}$ is a sign which depends on $\pi$. In fact, suppose $(S_1, \ldots, S_{2\ell})$ is such that $\1_\pi(S_1, \ldots, S_{2\ell}) = 1$. Then for every $e = \{i, j\}\in\pi$, we have $S_i = S_j$, and thus define $S_e:= S_i = S_j$. Also write $\eps_{e, e'}:= (-1)^{|S_e\cap S_{e'}|}$. Now note that $\eps_\pi = \prod_{\{e, e'\}\in E(G_\pi)}\eps_{e, e'}$, and thus we have 
\[\E_{\bm g}\left[\tr((H^\cH_{\operatorname{SYK}})^{2\ell})\right] = m^{-\ell}\cdot\sum_\pi\sum_{S_1, \ldots, S_{2\ell}\in\cH}\1_\pi(S_1, \ldots, S_{2\ell})\cdot\prod_{\{e, e'\}\in E(G_\pi)}\eps_{e, e'}\]
\[ = m^{-\ell}\cdot\sum_\pi\sum_{\{S_e\}_{e\in\pi}}\prod_{\{e, e'\}\in E(G_\pi)}\eps_{e, e'},\]
where we only sample $\ell$ elements $S_e$ from $\cH$, and repeat them as $\pi$ dictates. But we can also absorb the $m^{-\ell}$ term into the summation to make it an expectation, and we obtain
\begin{equation}
\label{eq:trace-moment-expansion}
\E_{\bm g}\left[\tr((H^\cH_{\operatorname{SYK}})^{2\ell})\right] = \sum_\pi\E_{\{S_e\}_{e\in\pi}\sim\cH}\prod_{\{e, e'\}\in E(G_\pi)}\eps_{e, e'}.
\end{equation}

\subsection{Our Key Idea: A Deterministic Moment-Matching Operator?}
\label{sec:baby-johnson} To illustrate our key idea, let us go back to the trace moment expansion from \eqref{eq:trace-moment-expansion}. Each term there is indexed by a pairing $\pi$, and its value is determined by the crossing graph $G_\pi$.  

Let us focus on the dense case $\cH=\binom{[n]}{k}$ and on a pairing whose crossing graph is a cycle to see the main insight: we can hope to relate the \emph{expected} trace moments of $H_{\syk}$ to the trace moments of a ${n \choose k}$ deterministic dimensional matrix from the \emph{Johnson} scheme, which, in particular, allows us to read off its trace moments easily. 

Write $m=|\cH|$, and define the deterministic matrix
\[
    \cE\in\R^{\binom{[n]}{k}\times\binom{[n]}{k}},
    \qquad
    \cE(S,T):=\frac{1}{m}(-1)^{|S\cap T|}.
\]
Suppose that $G_\pi$ is a cycle on $r$ chords, listed in cyclic order as $e_1,\ldots,e_r$, and set $e_{r+1}:=e_1$. The contribution of $\pi$ to the trace-moment expansion is then
\begin{align*}
    \E_{\{S_e\}_{e\in\pi}\sim\cH}
    \prod_{\{e,e'\}\in E(G_\pi)}\eps_{e,e'}
    &=
    \frac{1}{m^r}
    \sum_{S_1,\ldots,S_r\in\binom{[n]}{k}}
    \prod_{i=1}^r(-1)^{|S_i\cap S_{i+1}|}\\
    &=
    \sum_{S_1,\ldots,S_r\in\binom{[n]}{k}}
    \cE(S_1,S_2)\cdots \cE(S_r,S_1)
    =
    \Tr(\cE^r),
\end{align*}
where $S_{r+1}:=S_1$. In the expansion of the $2\ell$-th moment above, $\pi$ has $\ell$ chords, so this identity gives $\Tr(\cE^\ell)$. Notice that the equality also includes choices for which two or more chords receive the same hyperedge, since the hyperedges are sampled independently with replacement.

If $\lambda_1,\ldots,\lambda_m$ are the eigenvalues of $\cE$, counted with multiplicity, then
\[
    \Tr(\cE^r)=\sum_{j=1}^m\lambda_j^r.
\]
Thus, the contribution from all $\pi$ where $G_{\pi}$ is a cycle can be bounded directly in terms of the eigenvalues of $\cE$. These eigenvalues are explicit: the entry $\cE(S,T)$ depends only on $|S\cap T|$, and
\[
    \cE=\frac{1}{m}\sum_{t=0}^k(-2)^tM_t,
    \qquad
    M_t(S,T):=\binom{|S\cap T|}{t}.
\]

Thus $\cE$ belongs to the \emph{Johnson} association scheme~\cite{Godsil10}, whose eigenspaces and eigenvalues are well understood. We have therefore expressed a part of the random SYK trace-moment expansion as a trace power of a deterministic, $\binom{n}{k}$-dimensional operator that can be analyzed explicitly.

In the next section, we show a generalization of this simple observation by constructing an explicit deterministic operator whose spectrum controls the sum of \emph{all} (i.e., not just the case when $G_{\pi}$ is a cycle) the terms in \eqref{eq:trace-moment-expansion}. Like above, this deterministic operator will also be closely related to a matrix from the Johnson scheme that will help us read off the trace moment bound exactly. It turns out that our construction has a natural interpretation as a twisted model of well studied operators from the bosonic algebra. We describe this construction next.

\subsection{The ``$q$-deformed'' heuristic} Before we explain our general method to estimate this sum, it is instructive to consider the following heuristic (see \cite{hastings2022optimizing} for a nice exposition): Suppose we replace 
\[\E_{\{S_e\}_{e\in\pi}\sim\cH}\prod_{\{e, e'\}\in E(G_\pi)}\eps_{e, e'}\] 
with 
\[\prod_{\{e, e'\}\in E(G_\pi)}\E_{S_e, S_{e'}\sim\cH}\eps_{e, e'}.\]
Now note that 
\[q:= \E_{S_e, S_{e'}\sim\cH}\eps_{e, e'} = \E_{S_e, S_{e'}\sim\cH}(-1)^{|S_e\cap S_{e'}|} = 1 - 2\alpha + o(\alpha),\] 
where $\alpha:= k^2/n$ is assumed to be $o_n(1)$.\footnote{In our actual proof we perform a more detailed computation and can take $\alpha$ to be up to a small constant, say $1/16$.} Thus substituting this in, we obtain 
\begin{align}\label{eq:trace-method-heuristic}
    \E_{\bm g}\left[\tr((H^\cH_{\operatorname{SYK}})^{2\ell})\right] = \sum_\pi\E_{\{S_e\}_{e\in\pi}\sim\cH}\prod_{\{e, e'\}\in E(G_\pi)}\eps_{e, e'}\approx \sum_\pi\prod_{\{e, e'\}\in E(G_\pi)}\E_{S_e, S_{e'}\sim\cH}\eps_{e, e'} = \sum_\pi q^{e(G_\pi)},
\end{align}
where $e(G_\pi)$ is the number of edges in $G_\pi$. We first discuss a method developed to perform this combinatorial sum, and thereafter discuss connections to the physics literature. The combinatorial problem of evaluating such chord diagrams has been studied as far back as the work of Touchard~\cite{touchard1952probleme} and Riordan~\cite{riordan1975distribution}. We also note that~\cite{gamarnik2026free} recently proposed a ``Chord-Cavity-Generator'' technique to evaluate chord diagrams rigorously. 

\paragraph{Evaluating $\sum_\pi q^{e(G_\pi)}$.} Setting aside the issue that the heuristic in \cref{eq:trace-method-heuristic} has not been justified yet, there is still the question of evaluating the expression $\sum_\pi q^{e(G_\pi)}$ for $q = 1 - 2\alpha + o(\alpha)\approx 1$. We follow the exposition of \cite{SYK_Berkooz3}. Write $[n]_q := \frac{1-q^n}{1-q},\;\; [n]_q!:=\prod_{j=1}^n[j]_q,\;\;[0]_q!:=1$, and define $\sfc, \sfc^*, T:= \sfc + \sfc^*$ as
\[
\sfc \coloneqq
\left(
\begin{array}{cccccc}
0 & \sqrt{[1]_q} & 0 & 0 & 0 & \cdots\\
0 & 0 & \sqrt{[2]_q} & 0 & 0 & \cdots\\
0 & 0 & 0 & \sqrt{[3]_q} & 0 & \cdots\\
0 & 0 & 0 & 0 & \sqrt{[4]_q} & \cdots\\
\vdots & \vdots & \vdots & \vdots & \ddots & \ddots
\end{array}
\right),
\qquad
\sfc^* \coloneqq
\left(
\begin{array}{cccccc}
0 & 0 & 0 & 0 & 0 & \cdots\\
\sqrt{[1]_q} & 0 & 0 & 0 & 0 & \cdots\\
0 & \sqrt{[2]_q} & 0 & 0 & 0 & \cdots\\
0 & 0 & \sqrt{[3]_q} & 0 & 0 & \cdots\\
0 & 0 & 0 & \sqrt{[4]_q} & 0 & \ddots\\
\vdots & \vdots & \vdots & \vdots & \ddots & \ddots
\end{array}
\right).
\]

\[
T \coloneqq \sfc+\sfc^*
=
\left(
\begin{array}{cccccc}
0 & \sqrt{[1]_q} & 0 & 0 & 0 & \cdots\\
\sqrt{[1]_q} & 0 & \sqrt{[2]_q} & 0 & 0 & \cdots\\
0 & \sqrt{[2]_q} & 0 & \sqrt{[3]_q} & 0 & \cdots\\
0 & 0 & \sqrt{[3]_q} & 0 & \sqrt{[4]_q} & \cdots\\
0 & 0 & 0 & \sqrt{[4]_q} & 0 & \ddots\\
\vdots & \vdots & \vdots & \vdots & \ddots & \ddots
\end{array}
\right).
\]
Here, we imagine the rows and columns of $\sfc$ and $\sfc^*$ as indexed by $\Z_{\geq 0}$. Formally, $\sfc, \sfc^*$ are linear operators on $\ell_2(\Z_{\geq 0}) = \{(x_0,x_1, \ldots, x_n, \ldots): \sum_{n = 0}^{\infty} |x_n|^2 <\infty\}$. We write $f_{\bm 0} = \begin{pmatrix}
    1 & 0 & 0 & \cdots
\end{pmatrix}^\top\in\ell_2(\Z_{\geq 0})$. Then a simple combinatorial argument\footnote{Note that since the only non-zero entries in $T$ are in the first off-diagonal entries, expanding $\langle f_{\bm 0}, T^{2\ell}f_{\bm 0}\rangle$ yields a sum parametrized by tuples $\tau:= (n_0, \ldots, n_{2\ell})\in\Z_{\geq 0}^{2\ell + 1}$ such that $n_0 = n_{2\ell} = 0$ and $|n_i - n_{i + 1}| = 1$ for all $0\leq i < 2\ell$. Now for any perfect matching $\pi$ on $[2\ell]$, define a tuple $\tau:= (n_0, \ldots, n_{2\ell})$ where $n_{i} - n_{i - 1}:= 1$ if $i$ is the left endpoint of a chord in $\pi$, and $-1$ otherwise. It can be shown that the term in the sum corresponding to $\tau$ equals $\sum_{\pi}q^{e(G_\pi)}$ over all $\pi$ which produce $\tau$ as their tuple, and we are done.} \cite{SYK_Berkooz3} shows that $\langle f_{\bm 0}, T^{2\ell}f_{\bm 0}\rangle = \sum_\pi q^{e(G_\pi)}$. 

Thus, heuristically, understanding the spectral norm of $T$ suffices to provide an upper bound on $\sum_\pi q^{e(G_\pi)}$, and thus also on $\E_{\bm g}\left[\tr((H^\cH_{\operatorname{SYK}})^{2\ell})\right]$. Indeed, $\E_{\bm g}\left[\tr((H^\cH_{\operatorname{SYK}})^{2\ell})\right]^{1/(2\ell)}\overset{\text{\cref{eq:trace-method-heuristic}}}{\leq}\langle f_{\bm 0}, T^{2\ell}f_{\bm 0}\rangle^{1/(2\ell)}\leq\|T\|_{\operatorname{op}} \overset{(\ast)}{=} 2/\sqrt{1 - q}\approx\sqrt{2/\alpha} = \sqrt{2n}/k$, where $(\ast)$ was also shown in \cite{SYK_Berkooz3}. Note that $\sqrt{2n}/k$ was precisely the upper bound we were aiming for!

 \subsection{A Deterministic ``Twisted-Bosonic'' Moment-Matching Model}

The key idea of our proof is to introduce a natural, tractable,  \emph{deterministic} operator $\sfx$ such that a natural functional of $\sfx$ has the \emph{same} moments as the expected trace moments of $H$.  We will define operators $\sfa$ and $\sfx:= \sfa + \sfa^*$ such that $\E_{\bm g}\left[\tr((H^\cH_{\operatorname{SYK}})^{2\ell})\right] = \langle f_{\bm 0}, \sfx^{2\ell}f_{\bm 0}\rangle$ for $f_0$ being the vacuum state. This equality holds for all $\ell$ simultaneously. We can then analyze $\sfx$ to understand $\E_{\bm g}\left[\tr((H^\cH_{\operatorname{SYK}})^{2\ell})\right]$. 

\paragraph{Review of Bosonic Algebra:} The algebra for a single boson is defined on the Fock space $\ell_2(\Z_{\geq 0})$ equipped with the orthonormal basis $\{f_r\}_{r\in \Z_{\geq 0}}$, with the annihilation and creation operators defined as, 
\[
\sfb f_r = \sqrt{r}f_{r-1} \qquad \sfb^* f_r = \sqrt{r+1}f_{r+1}
\]
Crucially, a single bosonic mode satisfies the commutation relation (known as \emph{bosonic statistics}) $[\sfb, \sfb^*] =\Id$.

In contrast, the operator $\sfc,\sfc^*$ used in the evaluation of the heuristic satisfy the $q$-deformed commutation relation $\sfc\sfc^* - q \sfc^*\sfc = \Id$.

A collection of bosons, indexed by the set of hyperedges, $\{\sfb_S\}_{S\in\cH}$ satisfy the mutual commutation relations (see~\cref{def:bosonop} for a formal construction and~\cref{lem:bosonopcomm} for proof of commutation relations),
\[
[\sfb_S,\sfb_T]=[\sfb_S^*,\sfb_T^*]=0,
\qquad
[\sfb_S,\sfb_T^*]=\1[S=T]\Id.
\]
In particular, for $S\neq T$, the respective bosonic operators commute.

\paragraph{Constructing $\sfa$:} 
We begin by defining a collection of operators $\{\sfa_S\}_{S\in\cH}$ which satisfy the same (anti)-commutation relations as $\{\Gamma_S\}_{S\in\cH}$, i.e. $\sfa_S\sfa_T = \eps_{S, T}\sfa_T\sfa_S$ and $\sfa_S\sfa_T^* - \eps_{S, T}\sfa_T^*\sfa_S=\1[S=T]\Id$ (see \cref{def:twisted-bosonic} for formal construction), where $\eps_{S, T}:= (-1)^{|S\cap T|}$. Note that $\eps_{S,S}=1$, thus each $\sfa_S$ is individually bosonic, satisfying $[\sfa_S,\sfa_S^*] = \Id$. Their mutual statistics are determined by the signs $\eps_{S, T}$, hence the \emph{twist}. We then define $\sfa:= |\cH|^{-1/2}\cdot\sum_{S\in\cH}\sfa_S$ and further $\sfx:= \sfa + \sfa^*$ (see~\cref{tab:twisted-boson-algebra} for a summary of the twisted boson construction), and the key technical step in our proof (see \cref{prop:varphi-trace-re}) is the fact that 
\begin{equation}\label{eq:twisted-boson-majorana-trace}
    \langle f_{\bm 0}, \sfx^{2\ell} f_{\bm 0}\rangle = \E_{\bm g}\left[\tr\left(\left(H^\cH_{\operatorname{SYK}}\right)^{2\ell}\right)\right].
\end{equation}
While we shall not explain the proof of \cref{eq:twisted-boson-majorana-trace}, the fact itself should not be that surprising, since the operators $\{\sfa_S\}_{S\in\cH}$ have been constructed to encode the same exchange signs as the SYK constituent operators $\{\Gamma_S\}_{S\in\cH}$.

Now we proceed to justify why $\sfa$ is the correct ``finitary'' analog of $\sfc$. Towards that end, let $\cE:= |\cH|^{-1}\cdot(\eps_{S, T})_{S, T\in\cH}\in\C^{\cH\times\cH}$ be the \emph{sign matrix}, and write 
\[u:= \frac{\1}{\sqrt{m}}\in\C^\cH,\qquad q:= \langle u, \cE u\rangle,\qquad \Pi_{u^\perp}:= \Id - uu^*,\qquad \delta:= \|\Pi_{u^\perp}\cE u\|_2,\] 
\[\rho^+:= \lambda_{\max}(\Pi_{u^\perp}\cE\Pi_{u^\perp}),\qquad \rho^-:= \max\{0, -\lambda_{\min}(\Pi_{u^\perp}\cE\Pi_{u^\perp})\}.\]
Looking a bit ahead, we will essentially show that $\cE\approx quu^*$, and $\delta, \rho^+, \rho^-$ are three different ways of quantifying how much $\cE$ actually deviates from $quu^*$ (see~\cref{tab:sign-matrix-parameters} for a summary of these parameters). In fact, for $k\leq O(\sqrt{n})$ we'll show that $\rho^+, \rho^-, \delta\leq n^{-\Omega(1)}$, and thus $\cE$ does equal $quu^*$ as $n\to\infty$.

\paragraph{Bounding the Spectral Norm of $\sfx$.} By \cref{eq:twisted-boson-majorana-trace} it suffices to bound $\langle f_{\bm 0}, \sfx^{2\ell} f_{\bm 0}\rangle$. A natural way to do so is to mimic the proof in \cite{SYK_Berkooz3}. Towards that, in \cref{lem:varphi-spectral-dom,eq:aalowerbound} we show that 
\begin{equation}\label{eq:approx-deformed-commutation}
    \rho^-\sfa^*\sfa - (\rho^- + \delta)\cdot\sum_{S\in\cH}\sfa^*_S\sfa_S\preceq\sfa\sfa^* - q\sfa^*\sfa - \Id\preceq(\rho^+ + \delta)\sum_{S\in\cH}\sfa^*_S\sfa_S - \rho^+\sfa^*\sfa.
\end{equation}
Now, as mentioned above, we can show that $\rho^+, \rho^-, \delta\leq n^{-\Omega(1)}$ for $k\leq O(\sqrt{n})$. The way we do this is to observe that for $\cH = \binom{[n]}{k}$, the matrix $\cE$ is an example of a matrix from the \emph{Johnson scheme}, i.e. $\cE$ is a matrix indexed by $\binom{[n]}{k}$ such that $\cE(S, T)$ depends only on $|S\cap T|$. The spectra of matrices in the Johnson scheme is very well-studied (see \cite{delsarte1973algebraic,Godsil10,godsil2015erdos,Filmus16}), and thus it follows from standard computations that $\rho^+, \rho^-, \delta\leq n^{-\Omega(1)}$.\footnote{When $\cH$ is a random hypergraph, $\cE$ is no longer from the Johnson scheme, but standard matrix concentration results allow us to prove the necessary bounds on $\rho^+, \rho^-, \delta$.}

Consequently, note that as $n\to\infty$, \eqref{eq:approx-deformed-commutation} ``converges'' to $\sfa\sfa^* - q\sfa^*\sfa - \Id = 0$, which exactly matches the commutation relation $\sfc\sfc^* - q\sfc^*\sfc = \Id$ from our heuristic that we wanted to emulate in the first place! Equivalently, \eqref{eq:approx-deformed-commutation} shows that up to an error quantified by $\rho^+, \rho^-, \delta$, the heuristic \cref{eq:trace-method-heuristic} holds, and thus we obtain $\E_{\bm g}\|H^\cH_{\operatorname{SYK}}\|_{\operatorname{op}}\leq\sqrt{2n}/k + \operatorname{error}(\rho^+, \rho^-, \delta)$, thus yielding the upper bounds in \cref{thm:main-thm-complete,thm:main-thm-random}.

\paragraph{Relation to Kikuchi matrices.}
A related random-to-Johnson comparison arises in the study of Kikuchi graphs. The Kikuchi graph of the complete uniform hypergraph belongs to the Johnson scheme, while the Kikuchi graph of a sufficiently dense random hypergraph spectrally approximates this complete-hypergraph operator~\cite{Kothari25}. Kikuchi and related symmetric-difference matrices were introduced in the study of tensor PCA and random CSP refutation and have since played an important role in algorithms for random and semirandom CSPs~\cite{WeinEAM25,GuruswamiKM22,HsiehKM23,GuruswamiHKM23}. Although our sign matrix is a different operator, the underlying principle is similar: a random-hypergraph operator can be understood by comparison with a highly symmetric operator from the Johnson scheme.

Now that we've proven the upper bounds in \cref{thm:main-thm-complete,thm:main-thm-random}, we focus on proving lower bounds for $\E_{\bm g}\|H^\cH_{\operatorname{SYK}}\|_{\operatorname{op}}$. Once again, we will take inspiration from the operators $\sfc, T$ and \cref{eq:trace-method-heuristic}.

\subsection{\texorpdfstring{Lower Bounding $\E_{\bm g}\|H^\cH_{\operatorname{SYK}}\|_{\operatorname{op}}$}{Lower Bounding the SYK Operator Norm}} Let us go back to the operators $\sfc, T$, and see what they tell us about $\E_{\bm g}\|H^\cH_{\operatorname{SYK}}\|_{\operatorname{op}}$. Choose $t_0$ so that $q^{t_0}\leq o_n(1)$. Also choose $M\geq \omega_n(1)$. Note that for indices $t\geq t_0$, $[t]_q = (1 - q^t)/(1 - q)\approx 1/(1 - q)$. Consequently, restricting $T$ to the indices $t_0\leq t < t_0 + M$ yields (roughly) the following matrix:
\[\frac{1}{\sqrt{1 - q}}\begin{pmatrix}
0 & 1 & 0 & \cdots & 0 \\
1 & 0 & 1 & \ddots & \vdots \\
0 & 1 & 0 & \ddots & 0 \\
\vdots & \ddots & \ddots & \ddots & 1 \\
0 & \cdots & 0 & 1 & 0
\end{pmatrix}.\]
But this is simply the adjacency matrix of the path graph (a.k.a. the ``hopping'' model in physics), which admits the eigenvector $\begin{pmatrix}
    \sin\left(\frac{(j + 1)\pi}{M + 1}\right)
\end{pmatrix}_{0\leq j < M}^\top$ and the eigenvalue $\frac{1}{\sqrt{1 - q}}\cdot 2\cos(\frac{\pi}{M + 1})\approx \frac{2}{\sqrt{1 - q}}$, which is very close to the operator norm of $T$!

Once again, we are able to reproduce this construction for $T$ to get a construction for $\sfx$. More precisely, for a specific choice of parameters $t_0, M$ (outlined in \cref{eq:main-params} and the proof of \cref{thm:syk-spectral-norm-lower-random}), we show (in \cref{lem:x-large-eig}) that if we define $\psi_t:= (\sfa^*)^tf_{\bm 0}, v_t:= \psi_t/\|\psi_t\|_2$, and finally the \emph{Krylov space}\footnote{For any operator $A$ and vector $v$, a space of the form $\spn\{A^tv:t_0\leq t\leq t_1\}$ is known as a \emph{Krylov space}.} $\cR:=\spn\{v_{t_0 + t}:0\leq t < M\}$, then 
\[\Pi_\cR\sfx\Pi_\cR = \begin{pmatrix}
0 & \beta_{t_0} & 0 & \cdots & 0 \\
\beta_{t_0} & 0 & \beta_{t_0+1} & \ddots & \vdots \\
0 & \beta_{t_0+1} & 0 & \ddots & 0 \\
\vdots & \ddots & \ddots & \ddots & \beta_{t_0+M-2} \\
0 & \cdots & 0 & \beta_{t_0+M-2} & 0
\end{pmatrix},
\]
where $\min_{0\leq t < M}\beta_{t_0 + t}\geq \frac{1 - o(1)}{\sqrt{1 - q}}$, and consequently we obtain a state (unit vector) $\psi\in\cR$ such that $\langle\psi, \sfx\psi\rangle\geq \frac{2 - o(1)}{\sqrt{1 - q}}\approx \sqrt{2n}/k$.

Note that we are not yet done: $\psi$ is only a good test function for our operator $\sfx$, but we still need to produce a certificate which achieves an energy of $\geq(1 - o(1))\cdot\sqrt{2n}/k$ against $H^\cH_{\operatorname{SYK}}$. We use the theory of multivariate Hermite polynomials (see \cref{def:hermite-polynomial-multivar}) and Gaussian concentration (see \cref{lem:lower-bound-syk-general}) to map the state $\psi$ to a certificate (see \cref{lem:transfer-virtual-to-real}) which achieves an energy of $\approx\sqrt{2n}/k$ against $H^\cH_{\operatorname{SYK}}$, thus yielding the lower bounds in \cref{thm:main-thm-complete,thm:main-thm-random}, as desired.

\paragraph{Relation to the doubly-scaled SYK transfer matrix picture.} The heuristic described in \cref{eq:trace-method-heuristic} has been studied extensively in the doubly-scaled SYK literature~\cite{SYK_Berkooz1,SYK_Berkooz2,SYK_Berkooz3,SYK_Garcia1,SYK_Garcia2,SYK_Garcia3,SYK_Jia,lin2022bulk,lin2023symmetry}. Here, the doubly-scaled limit refers to, 
\[
n,k\to\infty,
\qquad
\frac{k^2}{n}\to\alpha\in(0,\infty),
\]
with the parameter $q$ identified as $q=\E_{S,T\sim\Unif(\cH)}\eps_{S,T}\approx 1 - 2\alpha$. The operators $\sfc$ and $T$ are then interpreted as follows: $\sfc$ is the annihilation operator on the ``chord Hilbert space'', satisfying ``deformed''\footnote{Note that $q = 1$ yields the usual commutation relation for bosons, while $q \neq 1$ yields the deformation.} 
 commutation relation $\sfc\sfc^* - q \sfc^*\sfc = \Id$ with the corresponding creation operator $\sfc^*$. Equivalently, the chord Hilbert-space can be defined as the span of states resulting from excitations created by $\sfc^*$ on the vacuum state $f_{\bm 0}$. The transfer matrix is then exactly $T:= \sfc + \sfc^*$, so that the RHS $\sum_\pi q^{e(G_\pi)}$ evaluates to the vacuum state functional $\varphi(T^{2\ell}):= \langle f_{\bm 0},T^{2\ell}f_{\bm 0}\rangle$. Thus, by keeping track of the collective excitations on the chord Hilbert space, the transfer matrix picture allows moment computations in the doubly-scaled limit.

In contrast, our construction retains the full hyperedge-resolved twisted boson algebra before taking any large$-n$ limit. Each hyperedge (interaction term in the Hamiltonian) $S\in \cH$ is equipped with a twisted bosonic mode $\sfa_S$ satisfying the mutual statistics, 
\[
\sfa_S\sfa_T^*
-
\eps_{S,T}\sfa_T^*\sfa_S
=
\1[S=T]\cdot\Id.
\]
The $q-$deformed oscillator $\sfc$ naturally emerges out of our treatment by taking the doubly scaled limit of the collective operator $\sfa=|\cH|^{-1/2} \sum_{S\in\cH} \sfa_S$. Thus, our approach can be viewed as a finite-$(n,k)$
refinement of the transfer matrix picture in the following two aspects. First, the moment equality 
\[
    \langle f_{\bm 0}, \sfx^{2\ell} f_{\bm 0}\rangle = \E_{\bm g}\left[\tr\left(\left(H^\cH_{\operatorname{SYK}}\right)^{2\ell}\right)\right].
\]
holds for any $n,k$ and $\ell$. Correlations among the individual crossing signs are retained at finite size, with the deviations from the collective mode accounted for rigorously in terms of the leakage parameters $\rho^{\pm}$ and $\delta$ of the sign matrix $\cE$.
Second, retaining the individual modes $\{\sfa_S\}_{S\in\cH}$ allows for the construction of an explicit Wiener--It\^o isometric embedding into the original fermionic
operator algebra.\footnote{More precisely, the embedding is from the twisted Fock space into the space
of Gaussian matrix-valued polynomials over the original fermionic algebra.} This is crucial for the lower bound, as the top eigenvector for the twisted operator $\sfx$ is then transferred back into a disorder-dependent witness for the SYK Hamiltonian.

\paragraph{Prior work on ``twisted'' bosons.} Creation and annihilation operators with twisted commutation relations belong to the general family of mixed-$q$ Gaussian literature~\cite{bozejko1991example,frisch1970parastochastics,bozejko1997q,junge2015ultraproduct} further widely used in the doubly-scaled SYK literature~\cite{SYK_Berkooz1,SYK_Berkooz2,SYK_Berkooz3,SYK_Garcia1,SYK_Garcia2,SYK_Garcia3,SYK_Jia}. In particular,~\cite{junge2015ultraproduct,microscopicq_bozejko1994,microscopicq_speicher} provide results on operators encoding a generalized commutation matrix $q_{ij}$. A dynamical version of the SYK model has been related to $q$-Brownian motion in~\cite{pluma2022dynamical}, and convergence of the joint distribution of SYK Hamiltonians to a mixed $q$-Gaussian system in the large-$n$ limit has been shown in~\cite{liu2026limit}. Moreover,~\cite{DoresicMM94,MeljanacMP94,MeljanacM96} investigated, in an abstract setting, operators which satisfy the twisted commutation relations our operators satisfy.


\begin{figure}[p]
\centering
\scriptsize

\begin{tikzpicture}[x=1cm,y=1cm]


\tikzset{
  lemmanode/.style={
    draw,
    rounded corners=2.5pt,
    thick,
    align=center,
    fill=cyan!8,
    text width=4.25cm,
    minimum height=0.95cm,
    inner sep=3pt
  },
  resultnode/.style={
    draw,
    rounded corners=2.5pt,
    thick,
    align=center,
    fill=blue!10,
    text width=3.75cm,
    minimum height=0.95cm,
    inner sep=3pt
  },
  techlabel/.style={
    font=\scriptsize\itshape,
    align=center,
    fill=white,
    inner sep=1pt
  },
  dep/.style={
    -{Latex[length=2.2mm]},
    thick
  },
  branchbox/.style={
    draw=blue!50!black,
    rounded corners=4pt,
    dashed,
    thick
  },
  branchlabel/.style={
    font=\bfseries\footnotesize,
    fill=white,
    inner sep=2pt
  }
}


\draw[branchbox] (-7.0,0.7) rectangle (-0.4,-10.1);
\node[branchlabel,anchor=west] at (-6.75,0.7)
  {Upper bound};

\draw[branchbox] (0.4,-0.8) rectangle (7.0,-11.6);
\node[branchlabel,anchor=west] at (0.65,-0.7)
  {Lower bound};


\node[lemmanode] (l34) at (-3.7,0.0)
  {\cref{lem:twisted-commutations}\\Twisted commutation relations.};

\node[lemmanode] (l36) at (-3.7,-1.8)
  {\cref{lem:twisted-boson-mapping}\\ Mapping onto twisted bosons.};

\node[lemmanode] (p37) at (-3.7,-3.6)
  {\cref{prop:varphi-trace-re}\\Annealed moments of SYK equal the twisted boson vacuum functional};

\node[lemmanode] (p38) at (-3.7,-5.4)
  {\cref{prop:cE-dom} \\ Bounds on the sign matrix spectral ``leakage''.};

\node[lemmanode] (l39) at (-3.7,-7.2)
  {\cref{lem:varphi-spectral-dom}\\Spectral bound on twisted vacuum functional.};

\node[resultnode] (t42) at (-5.35,-9.2)
  {\cref{thm:syk-spectral-norm}\\Dense SYK spectral upper bound};

\node[resultnode] (t44) at (-2.05,-9.2)
  {\cref{thm:syk-spectral-norm-random}\\Sparse SYK spectral upper bound};


\draw[dep] (l34.south) -- node[techlabel,right=2pt,midway] {Isserlis-Wick} (l36.north);

\draw[dep] (l36.south) -- node[techlabel,left=2pt,midway] {} (p37.north);

\draw[dep] (p37.south) -- node[techlabel,right=2pt,midway] {} (p38.north);

\draw[dep] (p38.south) -- node[techlabel,left=2pt,midway] {Spectral bound.} (l39.north);

\draw[dep]
  (l39.south west) to[out=250,in=90]
  node[techlabel,left=1pt,pos=0.58] {Johnson scheme}
  (t42.north);

\draw[dep]
  (l39.south east) to[out=290,in=90]
  node[techlabel,right=1pt,pos=0.58] {concentration}
  (t44.north);


\node[lemmanode] (l310) at (3.7,-1.5)
  {\cref{lem:betam}\\Lower bound on hopping amplitude.};

\node[lemmanode] (l312) at (3.7,-3.3)
  {\cref{lem:x-large-eig}\\Spectral edge of hopping model.};

\node[lemmanode,text width=4.45cm] (p31314) at (3.7,-5.1)
  {\cref{prop:isometry} + \cref{lem:intertwining}\\Hermite isometry $+$ intertwiner.};

\node[lemmanode] (l315) at (3.7,-6.9)
  {\cref{lem:transfer-virtual-to-real}\\ Operator norm witness.};

\node[lemmanode,text width=4.45cm] (l31718) at (3.7,-8.9)
  {\cref{lem:lipschitz-operator} + \cref{lem:lower-bound-syk-general}\\Lipschitz + unweighting.};

\node[resultnode] (t43) at (2.05,-10.8)
  {\cref{thm:syk-spectral-norm-lower}\\Dense SYK spectral lower bound};

\node[resultnode] (t45) at (5.35,-10.8)
  {\cref{thm:syk-spectral-norm-lower-random}\\Sparse SYK spectral lower bound};


\draw[dep] (l310.south) -- node[techlabel,right=2pt,midway] {Krylov space} (l312.north);

\draw[dep] (l312.south) -- node[techlabel,left=2pt,midway] {Hermite polynomials} (p31314.north);

\draw[dep] (p31314.south) -- node[techlabel,right=2pt,midway] {Multiplication Operator} (l315.north);

\draw[dep] (l315.south) -- node[techlabel,left=2pt,midway]
  {Hypercontractivity} (l31718.north);

\draw[dep]
  (l31718.south west) to[out=250,in=90]
  node[techlabel,left=1pt,pos=0.58] {}
  (t43.north);

\draw[dep]
  (l31718.south east) to[out=290,in=90]
  node[techlabel,right=1pt,pos=0.58] {}
  (t45.north);


\draw[dep]
  (p38.east) -- ++(0.85,0)
  |- node[techlabel,pos=0.30,above] {...}
  (l310.west);

\end{tikzpicture}

\caption{A map of the main results.}
\label{fig:leitfaden}
\end{figure}

\newpage 

\section{Preliminaries}

\paragraph{Majorana Operators} Let $n\geq 2$ be an even integer. Consider the \emph{Majorana operators} $\gamma_1, \ldots, \gamma_n$ satisfying the following properties:
\begin{enumerate}[(1)]
    \item \textbf{Hermitian} $\gamma_i = \gamma_i^*$ for all $i\in[n]$, i.e.\ $\gamma_i$ is Hermitian, 
    \item \textbf{Involution} $\gamma_i^2 = \Id$ for all $i\in[n]$, 
    \item \textbf{Anti-Commutation} $\gamma_i\gamma_j = -\gamma_j\gamma_i$ for all $i\neq j\in[n]$. 
\end{enumerate}
The algebra generated by $\gamma_1, \ldots, \gamma_n$ admits an irreducible unitary representation into $\C^{N\times N}, N:= 2^{n/2}$ \cite{BravyiK02}.\footnote{While there is no unique unitary irreducible representation, all irreducible representations are unitarily equivalent, and thus share the same spectrum. Since we will only be concerned with spectral properties in this paper, we fix an arbitrary such unitary irreducible representation, such as, for example, the Jordan-Wigner transform.}

For any non-empty $S = \{s_1 < \cdots < s_r\}\seq[n]$, define the Majorana monomial as,
\[\Gamma_S:= \iota^{r(r - 1)/2}\gamma_{s_1}\cdots\gamma_{s_r},\]
where $\iota:= \sqrt{-1}$. Let $\Gamma_\emptyset := \Id$.

Using the properties of Majorana operators, we have that $\Gamma_S$ is a unitary Hermitian operator with $\Gamma_S^2 = \Id$. Moreover, for any two non-empty sets $S, T$ we have
\begin{equation}\label{eq:comm-rel}
    \Gamma_S\Gamma_T = \eps_{S, T}\cdot\Gamma_T\Gamma_S.
\end{equation}
where for any $S, T\seq[n]$, we define $\eps_{S, T}:= (-1)^{|S|\cdot|T| - |S\cap T|}$. Note that $\eps_{S, T} = \eps_{T, S}$ for all $S, T\seq[n]$, and $\eps_{S, S} = 1$ for all $S\seq[n]$. 

\paragraph{The SYK Hamiltonian} Let $n\geq k\geq 2$ be even integers, and let $\cH$ be a $k$-uniform hypergraph. Let $\{g_S\}_{S\in\cH}$ be i.i.d $\cN(0, 1)$ Gaussians. Then we define the \emph{SYK Hamiltonian} to be 
\begin{equation}\label{eq:syk-def}
    H^\cH_{\operatorname{SYK}}:= \frac{1}{\sqrt{|\cH|}}\sum_{S\in\cH}g_S\Gamma_S.
\end{equation}
If $\cH$ is unspecified, $\cH$ is to be inferred from context. The normalization ensures that $\E (H^\cH_{\operatorname{SYK}})^2 =\Id$. 

\paragraph{Miscellaneous} Denote by $\tr(\cdot)$ the normalized trace operator. For instance, $\tr(\Id) = 1$. Also let $\Tr(\cdot)$ be usual (unnormalized) trace operator. For instance, $\Tr(\Id_N) = N$ for $\Id_N\in\C^{N\times N}$. For matrices $X, Y$, define the commutator $[X, Y]:= XY - YX$ and the anti-commutator $\{X, Y\}:= XY + YX$. For a matrix $X$ define the Hilbert-Schmidt norm as $\|X\|_{\operatorname{HS}}:= \sqrt{\tr(X^*X)}$. 

For a vector $v\in \C^N$ and a Hermitian matrix $T\in \C^{N\times N}$ we call $\langle v, Tv\rangle $ the (unnormalized) expectation value of $T$ on $v$. The normalized expectation value defined for $v\neq 0$, is $\langle v, Tv\rangle/\langle v,v\rangle$. If $T$ is a Hamiltonian, we call the normalized expectation value the \emph{energy} under that Hamiltonian.

For any $\eta\in\R$, write $\eta_+:= \max\{\eta, 0\}$. 

For any integers $n, t$ with $n\geq 0$ define $\binom{n}{t}:= 0$ if $t < 0$ or $t > n$. For $n = 0$, we define $\binom{0}{0}:= 1$, and $\binom{0}{t}:= 0$ for all integers $t\geq 1$. Also, for any $n\geq j\geq 1$, write $(n)_j:= n(n - 1)\cdots (n - j + 1)$. Also define $(n)_0:= 1$. Thus $\binom{n}{k} = (n)_k/k!$. 

\paragraph{The Johnson Scheme} Fix integers $n\geq k\geq t\geq 0$, and consider the matrix $M_t\in\C^{\binom{[n]}{k}\times\binom{[n]}{k}}$ given as $M_t(S, T):= \binom{|S\cap T|}{t}$. The span of the matrices $\{\sum_{t = 0}^k\alpha_tM_t:\alpha_0, \ldots, \alpha_k\in\C\}$ is the Bose-Mesner algebra of the \emph{Johnson scheme}. For any matrix $M = \sum_{t = 0}^k\alpha_tM_t$, $M(S, T)$ depends only on $|S\cap T|$. 

The Johnson scheme has been intensely studied in various contexts in mathematics and computer science, and as such a detailed understanding of the spectrum of matrices in the Johnson scheme is available; see Godsil and Meagher~\cite[Chapter 6]{godsil2015erdos} and also Delsarte~\cite{delsarte1973algebraic} for the foundational association-scheme treatment and Filmus~\cite{Filmus16} for an explicit orthogonal basis realizing the Johnson eigenspaces. We recall the following standard fact about the spectrum of $M_t$ for our purposes:
\begin{fact}[Johnson eigenspaces, \cite{delsarte1973algebraic,Wilson90}]\label{fact:johnson-eigenspaces}
    There exists an orthogonal decomposition $\C^{\binom{[n]}{k}} = \bigoplus_{j = 0}^k V_j$ such that $V_j$ is an eigenspace for $M_t$ for all $0\leq j, t\leq k\leq n/2$. Moreover,
    \begin{enumerate}[(1)]
        \item $\dim(V_j) = \binom{n}{j} - \binom{n}{j - 1}$ for all $j\geq 0$, where $\binom{n}{-1}:= 0$. Furthermore, $V_0 = \spn\{\1\}$.
        \item $M_t\vert_{V_j} = \mu_{t, j}\cdot\Id_{V_j}$ where 
        \[\mu_{t, j}:= \begin{cases}
        \binom{k - j}{t - j}\cdot\binom{n - t - j}{k - t} & j\leq t,\\
            0 & j > t
        \end{cases}.\]
        \item $V_j$ is $\mathfrak{S}_n$-invariant for all $0\leq j\leq k$, i.e. for any $\psi\in V_j$ and any $\sigma\in\mathfrak{S}_n$ we have $\sigma\cdot\psi\in V_j$. Here $\mathfrak{S}_n$ is the symmetric group on $n$ points, and for any $\psi\in\C^{\binom{[n]}{k}}$, $\sigma\in \mathfrak{S}_n$ naturally acts on $\psi$ as $(\sigma\cdot\psi)(S):= \psi(\sigma^{-1}(S))$, where $\sigma^{-1}(S):= \{\sigma^{-1}(i): i\in S\}$. 
    \end{enumerate}
\end{fact}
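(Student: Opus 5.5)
The plan is to realize each $V_j$ concretely as the span of explicit ``Filmus-type'' vectors, to compute the action of every $M_t$ on these vectors directly, and then to obtain orthogonality and the dimension count from the eigenvalue structure.

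\textbf{Step 1: a convenient factorization of $M_t$.} Let $W_t\in\{0,1\}^{\binom{[n]}{k}\times\binom{[n]}{t}}$ be the inclusion matrix, $W_t(S,B)=\1[B\seq S]$. The number $\binom{|S\cap T|}{t}$ counts the $t$-sets contained in both $S$ and $T$, so $M_t=W_tW_t^\top$. Hence, for any $f$,
\[(M_tf)(S)=\sum_{B\seq S,\,|B|=t}F(B),\qquad F(B):=\sum_{T\supseteq B}f(T).\]

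\textbf{Step 2: explicit eigenvectors.} Fix $0\le j\le k$ and pairwise disjoint pairs $(a_1,b_1),\dots,(a_j,b_j)$. Define
\[f(S):=\prod_{i=1}^j\bigl(\1[a_i\in S]-\1[b_i\in S]\bigr).\]
First compute $F(B)$ for $|B|=t$.
\begin{enumerate}[(i)]
\item If $B$ contains both $a_i$ and $b_i$ for some $i$, every term vanishes, so $F(B)=0$.
\item If $B$ contains neither $a_i$ nor $b_i$ for some $i$, the transposition swapping $a_i$ and $b_i$ is a sign-reversing involution on $\{T\supseteq B\}$, so $F(B)=0$.
\item Otherwise $B$ contains exactly one element of each pair, which forces $t\ge j$. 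The contributing $T$ must avoid the $j$ complementary elements, and each has the same sign $\sigma(B)$. This gives $F(B)=\sigma(B)\binom{n-t-j}{k-t}$.
\end{enumerate}
Substituting into Step 1, we treat $S$ by the same trichotomy.
\begin{enumerate}[(i)]
\item If $S$ picks exactly one element from each pair, the admissible $B\seq S$ consist of those $j$ elements plus $t-j$ of the remaining $k-j$. Each has $\sigma(B)=f(S)$, so $(M_tf)(S)=\binom{k-j}{t-j}\binom{n-t-j}{k-t}f(S)$.
\item If $S$ contains neither element of some pair, no admissible $B$ exists, so $(M_tf)(S)=0=f(S)$.
\item If $S$ contains both elements of some pair, the admissible $B$ pair up via $a_i\leftrightarrow b_i$ with opposite signs, so again $(M_tf)(S)=0=f(S)$.
\end{enumerate}
Hence $M_tf=\mu_{t,j}f$ for $j\le t$, and $M_tf=0$ for $j>t$, since then $F\equiv0$. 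Let $V_j$ be the span of all such $f$ for a fixed $j$. Relabeling the pairs by $\sigma\in\mathfrak S_n$ maps these generators to generators, so $V_j$ is $\mathfrak S_n$-invariant. For $j=0$ the only generator is $f=\1$, so $V_0=\spn\{\1\}$.

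\textbf{Step 3: orthogonality and dimensions.} Consider the real symmetric matrix $M_j$. It has eigenvalue $\mu_{j,j}=\binom{n-2j}{k-j}$ on $V_j$ and eigenvalue $0$ on $V_{j'}$ for $j'>j$, so $V_j\perp V_{j'}$ whenever $\mu_{j,j}\neq0$. That holds because $j\le k\le n/2$ gives $k-j\le n-2j$.

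For the dimensions, I would exhibit $\binom nj-\binom n{j-1}$ linearly independent generators in $V_j$. These are indexed by standard Young tableaux of shape $(n-j,j)$, i.e.\ the Filmus basis with $a_1<\dots<a_j$, $b_i>a_i$, and first row/second row ordering. Independence would follow from a triangularity argument with respect to a suitable order on $k$-sets. Then
\[\sum_{j=0}^k\Bigl(\binom nj-\binom n{j-1}\Bigr)=\binom nk,\]
which telescopes, together with orthogonality forces $\bigoplus_jV_j=\C^{\binom{[n]}{k}}$ with exactly these dimensions.

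\textbf{Main obstacle.} I expect this last independence and counting step to be the main obstacle. If the triangularity argument becomes messy, the fallback is the rank argument: $W_j$ has full column rank $\binom nj$ for $j\le k\le n-k$ (Gottlieb–Kantor). One then identifies $V_0\oplus\cdots\oplus V_j$ with the column space of $W_j$. This uses the fact that each Filmus vector of order $j'\le j$ lies in $\operatorname{col}(W_j)$, being a signed sum of inclusion indicators of $j$-sets. The dimension of $V_j$ is then a difference of consecutive ranks.
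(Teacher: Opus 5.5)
The paper gives no proof of this Fact; it is quoted from Delsarte and Wilson, so the comparison here is with a standard argument. Your Steps 1--3 are correct. The factorization $M_t=W_tW_t^\top$, the sign-reversing involutions, the eigenvalue $\mu_{t,j}$, the $\mathfrak{S}_n$-invariance, $V_0=\spn\{\1\}$, and the orthogonality via the symmetric matrix $M_j$ (using $\mu_{j,j}=\binom{n-2j}{k-j}\neq 0$) all check out.

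\textbf{The gap.} What is missing is item (1), together with the claim that $\bigoplus_j V_j=\C^{\binom{[n]}{k}}$. You define $V_j$ as the span of your product vectors, so you need a \emph{lower} bound $\dim V_j\geq\binom nj-\binom n{j-1}$, and neither route supplies one.
\begin{itemize}
\item Route 1 rests on a triangularity argument that is never specified.
\item The fallback only proves $V_0\oplus\cdots\oplus V_j\seq\operatorname{col}(W_j)$. Combined with $V_j\seq\ker M_{j-1}=\operatorname{col}(W_{j-1})^\perp$ and Gottlieb, this gives the \emph{upper} bound $\dim V_j\le\binom nj-\binom n{j-1}$.
\item To ``identify'' $V_0\oplus\cdots\oplus V_j$ with $\operatorname{col}(W_j)$, you would need the reverse inclusion. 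That is, each indicator $S\mapsto\1[A\seq S]$ with $|A|=j$ must be a combination of product vectors of order at most $j$.
\end{itemize}
That spanning statement is exactly the completeness you are trying to prove, so the fallback does not get around the obstacle. Upper bounds on $\dim V_j$ alone can never show that the $V_j$ fill the space.

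\textbf{A fix via the Bose--Mesner algebra.} This is the ingredient your proposal never uses.
\begin{enumerate}
\item $\mathfrak{S}_n$ acts transitively on pairs $(S,T)$ with prescribed $|S\cap T|$. Hence the span $\mathcal{A}$ of the $0/1$ matrices $A_i(S,T)=\1[|S\cap T|=i]$, $0\le i\le k$, is closed under multiplication.
\item $\mathcal{A}$ consists of symmetric matrices, so it is commutative. It contains every $M_t=\sum_i\binom it A_i$.
\item The joint spectral projections of the $M_t$ are pairwise orthogonal nonzero elements of $\mathcal{A}$. So there are at most $k+1$ joint eigenspaces, and they orthogonally decompose the whole space.
\item Your Step 2 exhibits $k+1$ nonzero joint eigenvectors with pairwise distinct eigenvalue patterns $(\mu_{t,j})_t$; they are distinct because $\mu_{j,j}\neq0=\mu_{j,j'}$ for $j'>j$. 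Hence the joint eigenspaces are exactly $E_j:=\{v: M_tv=\mu_{t,j}v\ \forall t\}$, and $\bigoplus_jE_j=\C^{\binom{[n]}{k}}$.
\item Take $V_j:=E_j$. It is $\mathfrak{S}_n$-invariant because permutation matrices commute with each $M_t$.
\item Since $E_j\seq\operatorname{col}(M_j)\cap\ker M_{j-1}=\operatorname{col}(W_j)\cap\operatorname{col}(W_{j-1})^\perp$, Gottlieb gives $\dim E_j\le\binom nj-\binom n{j-1}$. These bounds telescope to $\binom nk$, which forces equality, and in particular $E_0=\spn\{\1\}$.
\end{enumerate}

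\textbf{An alternative fix.} Your product vectors equal $\pm W_je_t$, where $e_t$ is the polytabloid of shape $(n-j,j)$ with columns $\{a_i,b_i\}$. So linear independence of the standard-tableau-indexed ones follows from the classical independence of standard polytabloids plus injectivity of $W_j$. Either way, some such input has to be supplied.
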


\paragraph{The Matrix Bernstein Bound} We recall the following variant of the classical matrix Bernstein bound which we will need:
\begin{fact}[Matrix Bernstein, Theorem 7.3.1 in \cite{Tro15}]\label{fact:matrix-Bernstein}
    Let $Y_1, \ldots, Y_m$ be independent mean $0$ Hermitian matrices, and write $Y:= \sum_{k = 1}^mY_i$. Suppose $\max_{k\in[m]}\|Y_k\|_{\operatorname{op}}\leq R$ almost surely, and suppose $\sum_{i = 1}^m\E[Y_i^2]\preceq \Sigma$. Write $\sigma^2:= \|\Sigma\|_{\operatorname{op}}, d:= 2\Tr(\Sigma)/\|\Sigma\|_{\operatorname{op}}$. Then
    \[\Pr\left(\left\|\sum_{i = 1}^mY_i\right\|_{\operatorname{op}}\geq t\right)\leq 2d\cdot\exp\left(-\frac{t^2}{2\sigma^2 + 2Rt/3}\right).\]
\end{fact}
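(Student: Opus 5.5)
This is the intrinsic-dimension form of matrix Bernstein, and I would prove it by the matrix Laplace transform method with a modified test function. By symmetry (apply the argument to $Y$ and to $-Y$, which accounts for the factor $2$) it suffices to bound $\Pr(\lambda_{\max}(Y)\geq t)$ by $d\exp(-t^2/(2\sigma^2+2Rt/3))$.

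The first step is a moment-generating-function bound for each summand. For $\theta\in(0,3/R)$ the scalar inequality $e^{\theta x}\leq 1+\theta x+\frac{\theta^2x^2/2}{1-R\theta/3}$ holds for $|x|\leq R$. This follows from the Taylor series together with $j!\geq 2\cdot 3^{j-2}$. Applying it through the transfer rule to each Hermitian $Y_i$ and taking expectations, using $\E Y_i=0$, gives
\[
\E e^{\theta Y_i}\preceq \Id+g(\theta)\,\E Y_i^2\preceq \exp\bigl(g(\theta)\E Y_i^2\bigr),
\qquad g(\theta):=\frac{\theta^2/2}{1-R\theta/3}.
\]
Next I invoke Lieb's concavity theorem, which says $A\mapsto\Tr\exp(H+\log A)$ is concave. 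Iterating it over the independent summands yields
\[
\E\Tr e^{\theta Y}\leq \Tr\exp\Bigl(\sum_i\log\E e^{\theta Y_i}\Bigr)\leq\Tr\exp\bigl(g(\theta)\Sigma\bigr),
\]
where the last step uses operator monotonicity of the logarithm and monotonicity of $\Tr\exp$.

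The main obstacle is replacing the ambient dimension by $d=2\Tr(\Sigma)/\|\Sigma\|_{\op}$. A plain Markov bound on $\Tr e^{\theta Y}$ would give a factor equal to the full dimension, not $d$. To avoid this I would apply Markov not to $e^{\theta\lambda_{\max}}$ but to $\psi(\lambda_{\max}(Y))$ with $\psi(x)=e^{\theta x}-\theta x-1$. This function is nonnegative and increasing on $x\geq0$ and satisfies $\psi(A)\preceq$ (trace-dominated by) the positive part, so that $\E\Tr\psi(Y)=\E\Tr(e^{\theta Y}-\Id)$ since $\E Y=0$. The Lieb bound then gives
\[
\E\Tr\psi(Y)\leq\Tr\bigl(e^{g(\theta)\Sigma}-\Id\bigr).
\]
Using convexity of $x\mapsto e^{x}-1$ on $[0,g\sigma^2]$ with value $0$ at $0$, we have $e^{g\lambda}-1\leq\frac{\lambda}{\sigma^2}(e^{g\sigma^2}-1)$ for each eigenvalue $\lambda$ of $\Sigma$. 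Hence the trace is at most $\frac{\Tr\Sigma}{\sigma^2}e^{g(\theta)\sigma^2}$.

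Therefore
\[
\Pr(\lambda_{\max}(Y)\geq t)\leq\frac{\Tr\Sigma}{\sigma^2}\cdot\frac{e^{g(\theta)\sigma^2}}{e^{\theta t}-\theta t-1}.
\]
Next I use $e^{\theta t}-\theta t-1\geq e^{\theta t}(1-\frac{1}{\theta t})$ and choose $\theta=t/(\sigma^2+Rt/3)$, which is the standard Bernstein optimizer. With this choice the exponent becomes $-t^2/(2\sigma^2+2Rt/3)$. The prefactor $\frac{1}{1-1/(\theta t)}$ is at most $2$ once $\theta t\geq 2$, which yields the constant $d=2\Tr\Sigma/\sigma^2$. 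In the complementary regime the claimed right-hand side is at least $1$, because $d\geq 2$ and the exponent is $\geq -1$ roughly. I expect this case check and the constant bookkeeping to be the fiddly part.
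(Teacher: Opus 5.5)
The paper does not prove this Fact; it quotes it from Tropp's monograph. Your argument is essentially Tropp's own proof of that theorem, and it is correct. The ingredients match his: the generalized Laplace-transform bound with $\psi(x)=e^{\theta x}-\theta x-1$, the Bernstein mgf bound, Lieb's theorem via subadditivity of matrix cumulant generating functions, and the intrinsic-dimension inequality $\Tr\varphi(\Sigma)\le(\Tr\Sigma/\sigma^2)\,\varphi(\sigma^2)$ for convex $\varphi$ with $\varphi(0)=0$.

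The only real divergence is the final bookkeeping, and yours is the version that matches the paper's formulation. Tropp bounds $e^{a}/(e^{a}-a-1)\le 1+3/a^2$ with $a=\theta t$ and restricts to $t\ge\sigma+R/3$. That gives a one-sided prefactor $4\Tr\Sigma/\sigma^2$, hence $8\Tr\Sigma/\sigma^2$ for the norm, and only in that range of $t$. The Fact as written claims $2d=4\Tr\Sigma/\sigma^2$ for the two-sided bound and all $t\ge 0$, so it is not literally what is cited.

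Your route proves the Fact exactly as stated:
\begin{itemize}
\item Use $e^{a}-a-1\ge e^{a}(1-1/a)$, which holds because $e^{a}\ge a^{2}+a$ for all $a\ge 0$. The fraction is then at most $2$ once $a\ge 2$.
\item For $a<2$, the right-hand side is $2d\,e^{-a/2}>4/e>1$, since $d\ge 2$ and the exponent is exactly $-a/2>-1$. The claim is therefore vacuous there, and you can drop the word ``roughly''.
\end{itemize}

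One justification needs repair. $\psi(Y)$ is not dominated by any ``positive part'' of $Y$. What you actually need is that $\psi\ge 0$ on all of $\R$. Then every eigenvalue of $\psi(Y)$ is nonnegative, so $\psi(\lambda_{\max}(Y))\le\Tr\psi(Y)$. Together with monotonicity of $\psi$ on $[0,\infty)$, $t\ge 0$, and Markov's inequality, this gives $\Pr(\lambda_{\max}(Y)\ge t)\le\E\Tr\psi(Y)/\psi(t)$.
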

We include a somewhat specialized use-case of the above fact here: Let $\Omega$ be a set of size $D$. Let $\Omega' = \{\omega_1, \ldots, \omega_m\}$ be a random sample of $\Omega$ of size $m$, where every element $\omega_i$ is chosen independently, and uniformly at random, from $\Omega$. Note that $\Omega'$ can have repeated elements in general. Define the linear map $P:\C^{\Omega'}\to\C^{\Omega}$, where we have $Pe_{\omega_i}:= e_{\omega_i}\in\C^\Omega$ for all $i\in[m]$, where $\{e_{\omega_i}\}_{i\in[m]}$ (resp. $\{e_\omega\}_{\omega\in\Omega}$) is the standard basis for $\C^{\Omega'}$ (resp. $\C^{\Omega}$).

We remind the reader that $\tr(G) = \Tr(G)/D$.
\begin{lemma}\label{lem:random-proj-compression}
    Let $G\in\C^{\Omega\times\Omega}$ be a non-zero Hermitian PSD matrix with $\|G\|_{\operatorname{op}} = \mu$. Further suppose that $G$ has a constant diagonal, i.e. $\langle e_\omega, Ge_\omega\rangle = \tr(G)$ for every $\omega\in\Omega$. Then for $A_G:= (D/m)P^*GP\in\C^{\Omega'\times\Omega'}$ we have with probability $\geq 1 - O(\exp(-100\log(2\Tr(G)/\mu)))$ that
    \[\|A_G\|_{\operatorname{op}}\leq O(1)\cdot\max\left\{\mu, \frac{\Tr(G)}{m}\cdot\log\left(\frac{2\Tr(G)}{\mu}\right)\right\}\]
\end{lemma}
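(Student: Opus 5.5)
The plan is to factor $G = B^*B$ and write $A_G$ as a sum of independent rank-one pieces, so that the spectrum of $A_G$ can be read off from a sum of independent random matrices in the fixed space $\C^\Omega$. Since $G\succeq 0$, write $G = B^*B$ with $B := G^{1/2}$. Then
\[
A_G = \frac{D}{m} P^*B^*BP, \qquad \|A_G\|_{\op} = \Bigl\|\frac{D}{m} BPP^*B^*\Bigr\|_{\op}.
\]
Now $PP^* = \sum_{i=1}^m e_{\omega_i}e_{\omega_i}^*$, where $e_{\omega_i}$ is viewed in $\C^\Omega$, repeats allowed. Hence
\[
\frac{D}{m}BPP^*B^* = \sum_{i=1}^m X_i, \qquad X_i := \frac{D}{m}\, b_{\omega_i}b_{\omega_i}^*, \qquad b_\omega := Be_\omega .
\]
The $X_i$ are i.i.d. PSD matrices, and since $\omega_i$ is uniform on $\Omega$,
\[
\E X_i = \frac{1}{m}\sum_{\omega} b_\omega b_\omega^* = \frac{1}{m}BB^* = \frac{G}{m}.
\]
So $\E\sum_i X_i = G$, which has norm $\mu$. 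The constant-diagonal hypothesis gives a uniform almost-sure bound:
\[
\|X_i\|_{\op} = \frac{D}{m}\|b_{\omega_i}\|_2^2 = \frac{D}{m}\langle e_{\omega_i}, Ge_{\omega_i}\rangle = \frac{D}{m}\tr(G) = \frac{\Tr(G)}{m} =: L .
\]

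Next apply \cref{fact:matrix-Bernstein} to the centered matrices $Y_i := X_i - G/m$. These satisfy $\|Y_i\|_{\op}\le L + \mu/m \le 2L$, because $\mu \le \Tr(G)$. For the variance, $X_i^2 = \frac{D}{m}\|b_{\omega_i}\|^2 X_i = L X_i$, so
\[
\sum_i \E Y_i^2 \preceq \sum_i \E X_i^2 = L\, G =: \Sigma .
\]
This gives $\sigma^2 = L\mu$ and intrinsic dimension $d = 2\Tr(\Sigma)/\|\Sigma\|_{\op} = 2\Tr(G)/\mu$. Bernstein then yields
\[
\Pr\Bigl(\Bigl\|\sum_i Y_i\Bigr\|_{\op}\ge t\Bigr) \le 2d\exp\Bigl(-\frac{t^2}{2L\mu + 4Lt/3}\Bigr).
\]

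Finally choose $t := C\max\{\mu, L\log d\}$ with $C$ a large absolute constant. If $\mu \ge L\log d$, the exponent is at least of order $C t/L \ge C\mu/L \ge C\log d$. Otherwise $t = CL\log d \ge C\mu$, so the denominator is $O(Lt)$ and the exponent is again of order $C\log d$. In both cases the failure probability is at most $2d\,e^{-C'\log d} = O(\exp(-100\log d))$ once $C$ is large, using $d\ge 2$ since $\Tr(G)\ge\mu$. On the complementary event,
\[
\|A_G\|_{\op} \le \|G\|_{\op} + t = O(1)\cdot\max\Bigl\{\mu,\ \frac{\Tr(G)}{m}\log\frac{2\Tr(G)}{\mu}\Bigr\},
\]
which is the claimed bound.

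The main subtlety is the first step. The problem must be transferred from the $m$-dimensional space $\C^{\Omega'}$, which is random, has repeats, and has no natural expectation, to the fixed space $\C^\Omega$. Using $\|P^*B^*BP\|_{\op} = \|BPP^*B^*\|_{\op}$ does this, and it turns the sampled principal submatrix into a sum of independent rank-one terms. After that, the constant diagonal is exactly what gives the uniform bound $R$, and the rest is a routine application of Bernstein.
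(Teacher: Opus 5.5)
Your proposal is correct and is essentially the paper's proof. Both pass to $\|G^{1/2}PP^*G^{1/2}\|_{\op}$, write it as a sum of i.i.d.\ rank-one terms $X_i$ with $\|X_i\|_{\op}=\Tr(G)/m$ coming from the constant diagonal, bound the variance by $\Tr(G)G/m$ so that $d=2\Tr(G)/\mu$, and apply matrix Bernstein; the only difference is that you spell out the choice $t=C\max\{\mu, L\log d\}$ with a two-case check, whereas the paper writes the deviation as $O(\xi+\sqrt{\xi\mu})$ and absorbs it into $O(1)\max\{\mu,\xi\}$.
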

\begin{proof}
    Note that $P^*GP$ is non-zero since $G$ has non-zero diagonal entries. Consequently, 
    \[\|P^*GP\|_{\operatorname{op}} = \|G^{1/2}PP^*G^{1/2}\|_{\operatorname{op}},\]
    where we use the fact that the non-zero spectrum of $BB^*$ and $B^*B$ coincides for any matrix $B$. Furthermore, we can write $PP^* = \sum_{i = 1}^{m}e_{\omega_i} e_{\omega_i}^*$, and thus write $X_i:= (D/m) G^{1/2}e_{\omega_i} e_{\omega_i}^*G^{1/2}$, and note that
    \[\|A_G\|_{\operatorname{op}} = \frac{D}{m}\|P^*GP\|_{\operatorname{op}} = \frac{D}{m}\|G^{1/2}PP^*G^{1/2}\|_{\operatorname{op}} = \left\|\sum_{i = 1}^mX_i\right\|_{\operatorname{op}}.\]
    Note that $\E[X_i] = G/m$. Also note that since $X_i$ is rank $1$, we have that $\|X_i\|_{\operatorname{op}} = \Tr(X_i) = (D/m)e_{\omega_i}^*Ge_{\omega_i} = \Tr(G)/m$. Now write $Y_i:= X_i - G/m$, and note that $\E[Y_i] = 0$. Also note that $\|Y_i\|_{\operatorname{op}}\leq\|X_i\|_{\operatorname{op}} + \|G\|_{\operatorname{op}}/m\leq\Tr(G)/m + \Tr(G)/m = 2\cdot\Tr(G)/m$ with probability $1$. Furthermore, 
    \[\E[Y_i^2] = \E[X_i^2] - \frac{G^2}{m^2} = \frac{\Tr(G)G}{m^2} - \frac{G^2}{m^2}\preceq \frac{\Tr(G)G}{m^2}\implies\sum_{i = 1}^m\E[Y_i^2]\preceq\frac{\Tr(G)G}{m}.\] 
    Now write 
    \[\sigma^2:= \left\|\frac{\Tr(G)G}{m}\right\|_{\operatorname{op}} = \frac{\mu\Tr(G)}{m}, d:= \frac{2\Tr\left(\frac{\Tr(G)G}{m}\right)}{\left\|\frac{\Tr(G)G}{m}\right\|_{\operatorname{op}}} = \frac{2\Tr(G)}{\mu},\]
    and note that by \cref{fact:matrix-Bernstein} we have with probability $\geq 1 - O(d^{-100})$ that 
    \[\|A_G\|_{\operatorname{op}}\leq \|G\|_{\operatorname{op}} + \left\|\sum_{i = 1}^mY_i\right\|_{\operatorname{op}}\] 
    \[\leq \mu + O\left(\frac{\Tr(G)}{m}\cdot\log\left(\frac{2\Tr(G)}{\mu}\right) + \sqrt{\frac{\Tr(G)\cdot\mu}{m}}\cdot\sqrt{\log\left(\frac{2\Tr(G)}{\mu}\right)}\right).\]
    Write $\xi:= \frac{\Tr(G)}{m}\cdot\log\left(\frac{2\Tr(G)}{\mu}\right)$. Then we have 
     \[\|A_G\|_{\operatorname{op}}\leq \mu + O(1)\cdot(\xi + \sqrt{\xi\mu})\leq O(1)\cdot\max\{\mu, \xi\},\]
    as desired.
\end{proof} 

\paragraph{Isserlis-Wick Theorem} For any integer $\ell\geq 1$, let $\cP_2(2\ell)$ be the set of perfect matchings on $[2\ell]$. More precisely, every $\pi\in\cP_2(2\ell)$ is a partition of $[2\ell]$ of the form $\Big\{\{i_1, i_2\}, \{i_3, i_4\}, \ldots, \{i_{2\ell - 1}, i_{2\ell}\}\Big\}$. For any $\pi = \Big\{\{i_1, i_2\}, \{i_3, i_4\}, \ldots, \{i_{2\ell - 1}, i_{2\ell}\}\Big\} = \Big\{e_1, e_2, \ldots, e_\ell\Big\}\in\cP_2(2\ell)$, where $e_t:= \{i_{2t - 1} < i_{2t}\}\in\binom{[2\ell]}{2}$ for all $t\in[\ell]$, we define the \emph{crossing graph} $G_\pi$ of $\pi$ to be a graph on the vertex set $[\ell]$, where $t_1, t_2\in[\ell]$ form an edge in $G_\pi$ if $e_{t_1}, e_{t_2}$ \emph{cross} each other. For two edges $e_1 = \{a < b\}, e_2 = \{c < d\}$, where $a, b, c, d\in[2\ell]$ are distinct elements, we say $e_1, e_2$ cross each other if either $a < c < b < d$ or $c < a < d < b$.

We can now recall the following classic result about moments of Gaussian random variables:
\begin{fact}[Isserlis-Wick Theorem \cite{Isserlis18,Wick50}]\label{fact:isserlis-wick}
    Let $(X_1, \ldots, X_{2\ell})$ be a mean $0$ Gaussian vector. Then 
    \[\E[X_1\cdots X_{2\ell}] = \sum_{\pi\in\cP_2(2\ell)}\prod_{e = \{i, j\}\in\pi}\E[X_iX_j].\]
\end{fact}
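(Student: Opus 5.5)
We prove the identity by induction on $\ell$, using Gaussian integration by parts (Stein's lemma). The base case $\ell = 1$ is trivial: $\cP_2(2)$ consists of the single matching $\{\{1,2\}\}$, and both sides equal $\E[X_1X_2]$.

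\textbf{Reduction to standard Gaussians.} Let $\Sigma$ be the covariance of the vector, which may be degenerate; this matters, since the paper applies the fact with repeated variables $X_i = X_j$. We realize the vector as $X = AZ$, where $Z = (Z_1, \ldots, Z_d)$ is standard Gaussian and $A\in\R^{2\ell\times d}$ satisfies $AA^\top = \Sigma$, for instance $A = \Sigma^{1/2}$. Thus $X_i = \sum_r A_{ir}Z_r$.

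\textbf{Integration by parts.} For a single standard Gaussian $Z_r$ and a polynomial $f$, one-dimensional integration by parts against the density $e^{-z^2/2}/\sqrt{2\pi}$ gives $\E[Z_r f(Z)] = \E[\partial_r f(Z)]$. The boundary terms vanish because polynomials grow slower than $e^{-z^2/2}$ decays, and the other coordinates are handled by Fubini. By linearity and the chain rule we obtain, for any polynomial $g$ of $X$,
\[\E[X_1 g(X)] = \sum_r A_{1r}\,\E[\partial_{Z_r} g(AZ)] = \sum_{j}\Big(\sum_r A_{1r}A_{jr}\Big)\E[\partial_{X_j} g(X)] = \sum_j \E[X_1X_j]\,\E[\partial_j g(X)].\]

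\textbf{Inductive step.} Apply this with $g = X_2\cdots X_{2\ell}$, treating the $X_i$ as formal distinct variables when differentiating, so that the coincidences $X_i = X_j$ cause no issue. This yields
\[\E[X_1\cdots X_{2\ell}] = \sum_{j=2}^{2\ell}\E[X_1X_j]\,\E\Big[\prod_{i\notin\{1,j\}}X_i\Big].\]
The remaining $2\ell - 2$ variables still form a mean-$0$ Gaussian vector, so the induction hypothesis expands each inner expectation as a sum over perfect matchings of $[2\ell]\setminus\{1,j\}$.

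\textbf{Combinatorial matching.} Every $\pi\in\cP_2(2\ell)$ corresponds bijectively to a pair consisting of the partner $j$ of $1$ in $\pi$ and a perfect matching of $[2\ell]\setminus\{1,j\}$. Under this bijection the double sum is exactly $\sum_{\pi}\prod_{\{i,j\}\in\pi}\E[X_iX_j]$, which completes the induction.

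\textbf{Main subtlety.} The main thing to get right is justifying integration by parts in the degenerate, repeated-variable setting; the reduction $X = AZ$ takes care of this.

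An alternative route would be to compare the coefficient of $t_1\cdots t_{2\ell}$ on both sides of $\E e^{\langle t, X\rangle} = e^{t^\top\Sigma t/2}$. On the right-hand side, only the $\ell$-th term $(t^\top\Sigma t/2)^\ell/\ell!$ contributes multilinear monomials of degree $2\ell$. Expanding it, each perfect matching arises exactly $2^\ell\ell!$ times, which cancels the prefactor.
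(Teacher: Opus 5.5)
The paper does not prove this statement. It is quoted as a classical fact with citations to Isserlis and Wick, so there is no in-paper argument to compare against, and your proof is correct and complete.

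Two of your choices are exactly what is needed. You realize $X$ in law as $AZ$ with $AA^\top=\Sigma$, and you differentiate $g(x)=x_2\cdots x_{2\ell}$ as a function on $\R^{2\ell}$ before composing with $x=AZ$. This matters because the paper applies the fact to $X_i=g_{S_i}$ with repeated hyperedges, i.e.\ with a singular covariance. Your generating-function alternative is also fine once you note that $\E e^{\langle t,X\rangle}$ may be expanded termwise, since $X$ has finite exponential moments.

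Your Stein-lemma recursion $\E[X_1\cdots X_{2\ell}]=\sum_{j\geq 2}\E[X_1X_j]\,\E[\prod_{i\notin\{1,j\}}X_i]$ is the Gaussian-side mirror of part (B) of the proof of \cref{lem:twisted-boson-mapping}. There, $\sfa_1$ is commuted rightward through each $\sfx_j$ and leaves behind a contraction $\1[S_1=S_j]$. Consider the untwisted scalar analogue of \cref{prop:isometry,lem:intertwining}: replace $\Gamma^{\bm r}$ by $1$, so every twist is trivial. Then multiplication by $g_S$ corresponds to $\sfb_S+\sfb_S^*$, which gives $\E[g_{S_1}\cdots g_{S_{2\ell}}]=\langle f_{\bm 0},(\sfb_{S_1}+\sfb_{S_1}^*)\cdots(\sfb_{S_{2\ell}}+\sfb_{S_{2\ell}}^*)f_{\bm 0}\rangle$. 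That same recursion then reproves exactly the case the paper uses, namely $\E[X_iX_j]=\1[S_i=S_j]$.

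Your route handles arbitrary, possibly degenerate, covariances directly and needs no Fock-space machinery. The operator route is the one that generalizes to the twisted setting, which is the point of the paper.
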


\paragraph{Probabilist's Hermite Polynomials} 

\begin{definition}[Normalized Probabilist's Hermite Polynomials]\label{def:hermite-polynomial}
    For every $n\geq 0$ define the (normalized) probabilist's Hermite polynomials $h_n(X)\in\R[X]$ as $h_{-1}(X):= 0, h_0(X):= 1, h_1(X):= X$, where for $n\geq 1$ we have the recurrence $Xh_{n}(X):= \sqrt{n}\cdot h_{n - 1}(X) + \sqrt{n + 1}\cdot h_{n + 1}(X)$. 
\end{definition}
Note that $\deg(h_n) = n$ for all $n\geq 0$. Furthermore, $\spn\{h_t(X):t\leq n\} = \spn\{X^t:t\leq n\}$. This is also referred to as the Hermite polynomials having the \emph{same degree filtration} as the degree monomials.

Furthermore, 
\[\E_{g\sim\cN(0, 1)}\left[h_n(g)h_m(g)\right] = \1(n = m).\]
In general, we can also define multivariate Hermite polynomials:
\begin{definition}[Normalized Multivariate Probabilist's Hermite Polynomials]\label{def:hermite-polynomial-multivar}
    For a tuple $\bm{r}\in\Z_{\geq 0}^\Lambda, X\in\C^\Lambda$, we can define $h_{\bm{r}}(X):= \prod_{\lambda\in\Lambda}h_{\bm{r}_\lambda}(X_\lambda)$. 
\end{definition}

Then we have 
\begin{equation}\label{eq:tuple-hermite-ortho}
    \E_{\bm{g}\sim\cN(0, \Id_\Lambda)}\left[h_{\bm{r}}(\bm{g})h_{\bm{s}}(\bm{g})\right] = \prod_{\lambda\in\Lambda}\E_{\bm{g}_\lambda\sim\cN(0, 1)}\left[h_{\bm{r}_\lambda}(\bm{g}_\lambda)h_{\bm{s}_\lambda}(\bm{g}_\lambda)\right] = \1(\bm{r} = \bm{s}).
\end{equation}
See \cite[Section~11.2]{ODonnell21} for an exposition of univariate and multivariate Hermite polynomials.

\paragraph{Matrix-Valued Polynomials} Let $\cK(\R^m;\C^{N\times N})$ denote all matrix-valued polynomials on $\R^m$. More precisely, given $\bm g = (g_1, \ldots, g_m)\in\R^m$, a matrix-valued polynomial $p(\bm g)\in\cK(\R^m;\C^{N\times N})$ denotes a polynomial of the form 
\[\sum_{\bm r\in\Z_{\geq 0}^m}\left(\prod_{i = 1}^m g_i^{\bm r_i}\right)\cdot M_{\bm r},\]
where $M_{\bm r}\in\C^{N\times N}$, and $M_{\bm r}\neq 0$ only for finitely many $\bm r\in\Z_{\geq 0}^m$. For any $\bm r\in\Z_{\geq 0}^m$, write $|\bm r| = \sum_{i = 1}^m\bm r_i$. For any $\psi\in\cK(\R^m;\C^{N\times N})$, we define the \emph{degree} of $\psi$ to be $\max_{M_{\bm r}\neq 0}|\bm r|$, where $\psi(\bm g) = \sum_{\bm r\in\Z_{\geq 0}^m}\left(\prod_{i = 1}^m g_i^{\bm r_i}\right)\cdot M_{\bm r}$. 

We can naturally equip $\cK(\R^m;\C^{N\times N})$ with the Gaussian measure, where we define 
\[\langle F(\bm g), G(\bm g)\rangle:= \E_{\bm g\sim\cN(0,\Id_m)}\tr(F(\bm g)^*G(\bm g)).\]

\paragraph{Gaussian Concentration and Gaussian Chaos}
For a $\R$-valued random variable $X$ and for any $p\geq 1$, define $\|X\|_{p}:= \E[|X|^p]^{1/p}$. Recall that $\|\cdot\|_p$ norms satisfy \emph{H\"{o}lder's inequality}, i.e. if $X_1, X_2$ are two $\R$-valued random variables, then $\|X_1X_2\|_1\leq\|X_1\|_p\cdot\|X_2\|_{p'}$, for any $p\geq 1$, where $p^{-1} + (p')^{-1} = 1$.

More generally, let $(V, \|\cdot\|)$ be a Hilbert space, and let $X$ be a $V$-valued random variable. Then define $\|X\|_p:= \E[\|X\|^p]^{1/p}$.
\begin{fact}[Concentration of Lipschitz Functions \cite{Vershynin18}]\label{fact:gaussian-lipschitz-conc}
Let $f:\R^m\to\R$ be a $\sigma$-Lipschitz function, i.e. $|f(x) - f(y)|\leq \sigma\cdot\|x - y\|_2$ for all $x, y\in\R^m$. Consider the random variable $X:= f(\bm g) - \E_{\bm g}f(\bm g)$ for $\bm g\sim\cN(0, \Id_m)$. Then we have 
\[\Pr\left(|X|\geq t\right)\leq 2\cdot\exp\left(-\Omega(1)\cdot\frac{t^2}{\sigma^2}\right).\]
In particular, for every $p\geq 2$ we have $\|X\|_p\leq O(\sigma\sqrt{p})$. 
\end{fact}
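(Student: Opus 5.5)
We will prove the tail bound with the Maurey--Pisier interpolation argument. The route is to bound the moment generating function of $X$, apply Chernoff, and then integrate the tail to get the moment bound.

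\emph{Step 1: reduction to smooth $f$.} Replace $f$ by $f_\eta:= f\ast\phi_\eta$, where $\phi_\eta$ is the density of $\cN(0,\eta^2\Id_m)$. Then $f_\eta$ is smooth, still $\sigma$-Lipschitz, so $\|\nabla f_\eta\|_2\leq\sigma$ pointwise. Also $|f_\eta-f|\leq\sigma\eta\sqrt m$ uniformly, so the tail bound for $f$ follows from that for $f_\eta$ by letting $\eta\to0$. Hence we may assume $f$ is $C^1$ with $\|\nabla f\|_2\leq\sigma$.

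\emph{Step 2: MGF bound.} Let $\bm g'$ be an independent copy of $\bm g$, and for $\theta\in[0,\pi/2]$ set
\[\bm g_\theta:=\bm g\sin\theta+\bm g'\cos\theta,\qquad \bm g_\theta':=\bm g\cos\theta-\bm g'\sin\theta .\]
By rotation invariance, $(\bm g_\theta,\bm g'_\theta)$ is again a pair of independent standard Gaussian vectors. Since $\tfrac{d}{d\theta}\bm g_\theta=\bm g'_\theta$, we have
\[f(\bm g)-f(\bm g')=\int_0^{\pi/2}\langle\nabla f(\bm g_\theta),\bm g'_\theta\rangle\,d\theta .\]
Fix $\lambda\in\R$. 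Jensen in $\bm g'$ (using $\E_{\bm g'}f(\bm g')=\E f$) gives
\[\E e^{\lambda X}\leq\E e^{\lambda(f(\bm g)-f(\bm g'))}.\]
Jensen again, now over $\theta$ with the uniform measure $\tfrac{2}{\pi}d\theta$, bounds the right-hand side by
\[\tfrac{2}{\pi}\int_0^{\pi/2}\E\exp\!\Big(\tfrac{\lambda\pi}{2}\langle\nabla f(\bm g_\theta),\bm g'_\theta\rangle\Big)d\theta .\]
Condition on $\bm g_\theta$. The inner product is then a centered Gaussian with variance $\|\nabla f(\bm g_\theta)\|_2^2\leq\sigma^2$. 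Therefore
\[\E e^{\lambda X}\leq\exp\!\big(\pi^2\lambda^2\sigma^2/8\big).\]

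\emph{Step 3: tails and moments.} Chernoff with $\lambda=4t/(\pi^2\sigma^2)$ gives
\[\Pr(X\geq t)\leq\exp\!\big(-2t^2/(\pi^2\sigma^2)\big).\]
Applying the same argument to $-f$ and taking a union bound yields the claimed two-sided bound with constant $c=2/\pi^2$. For the moments, integrate the tail:
\[\E|X|^p=\int_0^\infty pt^{p-1}\Pr(|X|\geq t)\,dt\leq2p\int_0^\infty t^{p-1}e^{-ct^2/\sigma^2}\,dt=p\,(\sigma^2/c)^{p/2}\,\Gamma(p/2).\]
Since $\Gamma(p/2)\leq(p/2)^{p/2}$ for $p\geq2$ and $p^{1/p}\leq 2$, taking $p$-th roots gives $\|X\|_p\leq O(\sigma\sqrt p)$.

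The only delicate point is Step 2: it needs both the rotation invariance that makes $\bm g_\theta$ and $\bm g'_\theta$ independent, and the double use of Jensen. The rest is routine. An alternative route is the Herbst argument from the Gaussian log-Sobolev inequality, but that requires citing LSI, so the interpolation proof is the more self-contained choice.
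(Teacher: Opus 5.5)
Your proof is correct. The smoothing reduction, the rotated pair $(\bm g_\theta,\bm g'_\theta)$, the two applications of Jensen (in $\bm g'$ and in $\theta$), and the conditional Gaussian MGF computation all check out. Together they give $\E e^{\lambda X}\le e^{\pi^2\lambda^2\sigma^2/8}$, the two-sided tail with constant $c=2/\pi^2$, and, after the Gamma-function estimate, $\|X\|_p\le \pi\sigma\sqrt p$.

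The paper does not prove this Fact. It imports it as a black box from \cite{Vershynin18}, where Gaussian concentration is derived from the Gaussian isoperimetric inequality: compare sublevel sets of $f$ with half-spaces and control their blow-ups. That route invokes a deep geometric theorem, but it gives the optimal exponent $t^2/(2\sigma^2)$, about the median. Your Maurey--Pisier argument is fully self-contained, using only rotation invariance of $\cN(0,\Id_{2m})$ and convexity, and loses only in the constant. The constant does not matter here, since the paper only ever needs $\Omega(1)$ in the exponent, in \cref{lem:lower-bound-syk-general} and \cref{thm:syk-subgaussian,thm:syk-subgaussian-sparse}.

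Your route has one further advantage. Step 2 used nothing about $x\mapsto e^{\lambda x}$ beyond convexity, so you can rerun it with $x\mapsto |x|^p$. This gives $\|X\|_p\le\frac{\pi}{2}\sigma\|Z\|_p\le\frac{\pi}{2}\sigma\sqrt p$ for $Z\sim\cN(0,1)$ directly. That bypasses the tail integration in Step 3 for exactly the moment bound the paper uses, with $p=L+1$, in \cref{lem:lower-bound-syk-general}.
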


Recall that we treat $\C^{N\times N}$ as a Hilbert space by equipping it with the inner product $\langle X, Y\rangle:= \tr(X^*Y)$. 

We need the following statement about Gaussian hypercontractivity. See \cite{Janson97,ODonnell21} for a detailed treatment, and \cite{Nelson66,Nelson73,Gross75} for a historical perspective.
\begin{proposition}[Gaussian Hypercontractivity]\label{fact:gaussian-hypercontractivity}
    Let $Y\in\cK(\R^m;\C^{N\times N})$ be a matrix-valued polynomial of degree $\leq L$. Define the random variable $X:= Y(\bm g)$, where $\bm g\sim\cN(0, \Id_m)$. Then $\|X\|_r\leq (r - 1)^{L/2}\cdot\|X\|_2$ for any $r\geq 2$.
\end{proposition}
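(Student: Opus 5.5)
The approach is the standard route: reduce to coordinate-wise scalar hypercontractivity for Gaussian polynomials and pass to vector-valued polynomials through Minkowski's integral inequality.

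\textbf{Step 1 (reduction to the scalar case).} Here $X=Y(\bm g)$ takes values in the Hilbert space $\C^{N\times N}$ with $\langle A,B\rangle=\tr(A^*B)$, so $\|X\|=\|X\|_{\operatorname{HS}}$. Expand in a real orthonormal basis $\{E_j\}$ of this space, viewed as a real Hilbert space of dimension $2N^2$. Then $Y(\bm g)=\sum_j p_j(\bm g)E_j$, where each $p_j$ is a real polynomial in $\bm g$ of degree at most $L$, and $\|X\|^2=\sum_j p_j(\bm g)^2$.

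\textbf{Step 2 (scalar hypercontractivity).} I will take as known the scalar Gaussian hypercontractive inequality for real polynomials of degree at most $L$:
\[\|p(\bm g)\|_r\le (r-1)^{L/2}\|p(\bm g)\|_2, \qquad r\ge 2.\]
Its standard proof goes as follows. Apply Nelson's inequality $\|T_\rho f\|_r\le\|f\|_2$ for $\rho=(r-1)^{-1/2}$, where $T_\rho$ is the Ornstein--Uhlenbeck semigroup. Then expand $p$ in the multivariate Hermite basis of Definition~\ref{def:hermite-polynomial-multivar}, on which $T_\rho$ acts diagonally by $\rho^{|\bm r|}$. Set $f=T_{\rho}^{-1}p$, which is well defined because $p$ has finitely many Hermite components. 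By \eqref{eq:tuple-hermite-ortho}, $\|f\|_2\le\rho^{-L}\|p\|_2$, and the scalar inequality follows.

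\textbf{Step 3 (vector-valued extension).} This is the only nontrivial point: a naive triangle inequality across coordinates would lose a dimension-dependent factor. I would avoid it by writing
\[\|X\|_r^2=\big\|\textstyle\sum_j p_j^2\big\|_{r/2}\le\sum_j\|p_j^2\|_{r/2}=\sum_j\|p_j\|_r^2,\]
which uses Minkowski's inequality in $L^{r/2}$, valid since $r/2\ge1$. Then apply Step 2 to each coordinate:
\[\sum_j\|p_j\|_r^2\le (r-1)^L\sum_j\|p_j\|_2^2=(r-1)^L\,\E\|X\|^2=(r-1)^L\|X\|_2^2.\]
Taking square roots gives the claim with no dependence on $N$.

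\textbf{Main obstacle.} The expected obstacle is exactly this dimension-free passage in Step 3, and Minkowski in $L^{r/2}$ resolves it. The rest is citing Nelson's theorem, for instance from the treatments of Janson or O'Donnell.
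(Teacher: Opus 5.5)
Your proof is correct, but it moves from scalars to matrices at a different point than the paper does. The paper first proves the scalar Nelson bound $\|U_\rho f\|_r\le\|f\|_2$ for $\rho=(r-1)^{-1/2}$, using duality and the two-function hypercontractivity theorem. It then lifts this semigroup bound, not the polynomial inequality, to matrix-valued $F$ through the pointwise estimate $\|(U_\rho F)(x)\|_{\operatorname{HS}}\le (U_\rho\|F(\cdot)\|_{\operatorname{HS}})(x)$, which is just the triangle inequality for the Gaussian average defining $U_\rho$. Applying the scalar bound to the non-polynomial function $x\mapsto\|F(x)\|_{\operatorname{HS}}$ gives $\|U_\rho F\|_r\le\|F\|_2$. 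Only then does the paper invert $U_\rho$ on the matrix-coefficient Hermite expansion of $Y$, using orthonormality to get $\|F\|_2\le\rho^{-L}\|Y\|_2$. You instead do the Hermite inversion at the scalar level and treat the resulting polynomial inequality as a black box. You then lift the finished inequality coordinatewise, via Minkowski in $L^{r/2}$ applied to $\|X\|^2=\sum_j p_j^2$. Your route is shorter and needs nothing beyond the scalar theorem, but it relies on the norm being a sum of squares and on $r\ge 2$; both hold here. The paper's lifting step works for functions with values in any normed space and gives the stronger intermediate fact that $U_\rho$ is a contraction from $L^2$ to $L^r$ for vector-valued functions. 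In the paper, the Hilbert structure is used only in the final orthogonality step. Both arguments are dimension-free.
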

\begin{proof}
For $\rho\in[0,1]$, let $U_\rho$ denote the Gaussian noise operator (see \cite[Definition~11.12]{ODonnell21}), defined for functions $f:\R^m\to\R$ as
\[(U_\rho f)(x):=\E_{\bm z\sim\cN(0,\Id_m)}f\left(\rho x+\sqrt{1-\rho^2}\,\bm z\right).\]
We apply $U_\rho$ to
matrix-valued functions entrywise.

We claim that 
\begin{equation}\label{eq:scalar-gaussian-2-r}
    \|U_\rho f\|_r\leq \|f\|_2
\end{equation}
for $\rho:= (r-1)^{-1/2}$. Indeed, write $r':=r/(r-1)$. By duality, we have
\[\|U_\rho f\|_r \overset{\text{Duality}}{=} \sup_{\|u\|_{r'}\leq 1} \big|\langle u,U_\rho f\rangle\big|\overset{\text{\cite[Section~11.1, Gaussian Hypercontractivity Theorem]{ODonnell21}}}{\leq} \sup_{\|u\|_{r'}\leq 1}\big|\langle u,U_\rho f\rangle\big|\] 
\[\leq\sup_{\|u\|_{r'}\leq 1}\|u\|_{1+a}\|f\|_{1+b}\]
for any $a,b\geq 0$ and $0\leq\rho\leq\sqrt{ab}\leq 1$. Choose 
\[
    a:=\frac{1}{r-1},
    \qquad
    b:=1,
    \qquad
    \rho:=\sqrt{ab}=\frac{1}{\sqrt{r-1}}.
\]
Then we have $1+a=r'$ and $1+b=2$. Hence
\[
    \big|\langle u,U_\rho f\rangle\big|
    \leq
    \|u\|_{r'}\|f\|_2\leq\|f\|_2,
\]
where the last inequality follows since while applying duality we took a supremum only over $u$ for which $\|u\|_{r'}\leq 1$. Now, by triangle inequality,
\begin{align*}
    \|(U_\rho F)(x)\|_{\mathrm{HS}}
    &=
    \left\|
        \E_{\bm z}
        F\bigl(\rho x+\sqrt{1-\rho^2}\,\bm z\bigr)
    \right\|_{\mathrm{HS}} \leq
    \E_{\bm z}
    \left\|
        F\bigl(\rho x+\sqrt{1-\rho^2}\,\bm z\bigr)
    \right\|_{\mathrm{HS}}=
    \bigl(U_\rho\|F(\cdot)\|_{\mathrm{HS}}\bigr)(x).
\end{align*}
Applying \eqref{eq:scalar-gaussian-2-r} to the nonnegative scalar-valued
function $x\mapsto\|F(x)\|_{\mathrm{HS}}$ gives
\begin{equation}\label{eq:matrix-gaussian-2-r}
    \|U_\rho F(\bm g)\|_r
    \leq
    \|F(\bm g)\|_2.
\end{equation}

Expand $Y$ in the multivariate Hermite basis as in the definition of $\cK(\R^m;\C^{N\times N})$ to get
\[
    Y(\bm g)
    =
    \sum_{\substack{\bm s\in\Z_{\geq 0}^m\\|\bm s|\leq L}}
    h_{\bm s}(\bm g)M_{\bm s},
    \qquad
    M_{\bm s}\in\C^{N\times N}.
\]
Such an expansion exists because the Hermite polynomials have the same
degree filtration as the degree monomials. Moreover by \cite[Proposition~11.33]{ODonnell21} we have
\[
    U_\rho h_{\bm s}
    =
    \rho^{|\bm s|}h_{\bm s}.
\]
Thus define
\[
    F(\bm g)
    :=
    \sum_{\substack{\bm s\in\Z_{\geq 0}^m\\|\bm s|\leq L}}
    \rho^{-|\bm s|}h_{\bm s}(\bm g)M_{\bm s}.
\]
It follows that $U_\rho F=Y$. Therefore, $
    \|Y(\bm g)\|_r
    \overset{\eqref{eq:matrix-gaussian-2-r}}{\leq}
    \|F(\bm g)\|_2.$

By the orthonormality of the multivariate Hermite polynomials
\cite[Proposition~11.33]{ODonnell21},
\begin{align*}
    \|F(\bm g)\|_{\cK, 2}^2
    &=
    \E_{\bm g}
    \tr\bigl(F(\bm g)^*F(\bm g)\bigr)=
    \sum_{|\bm s|\leq L}
    \rho^{-2|\bm s|}
    \tr(M_{\bm s}^*M_{\bm s})\leq
    \rho^{-2L}
    \sum_{|\bm s|\leq L}
    \tr(M_{\bm s}^*M_{\bm s})=
    \rho^{-2L}\|Y(\bm g)\|_2^2.
\end{align*}
Since $\rho^{-1}=\sqrt{r-1}$, we have $
    \|Y(\bm g)\|_r
    \leq
    \rho^{-L}\|Y(\bm g)\|_2
    =
    (r-1)^{L/2}\|Y(\bm g)\|_2,$
as desired.
\end{proof}

\section{The Twisted Bosonic Model}\label{sec:twisted-fock-space}

We define a bosonic model with a ``twisted'' phase factor. Such operators and their mathematical properties have also been studied in an abstract setting by \cite{DoresicMM94,MeljanacMP94,MeljanacM96}. 

Let us first review the standard bosonic Fock space. Given set $\Lambda$, we define for every $\bm{r} \in \Z_{\geq 0}^\Lambda$ the function $f_{\bm{r}}(\bm{r}') := \1(\bm{r} = \bm{r}')$. Write $\cF_t:= \spn\{f_{\bm{r}}:|\bm{r}| = t\}$, and let $\cF := \spn\{f_{\bm{r}}:\bm{r}\in\Z_{\geq 0}^\Lambda \}$.\footnote{Recall that when we take the (linear-algebraic) span of infinitely many vectors, we only look at linear combinations with finitely many non-zero coefficients, i.e. for every $f\in\cF$ there exists a finite set $T$ such that $f = \sum_{\bm{r}\in T}\alpha_{\bm{r}}f_{\bm{r}}$. Note that in physics Fock spaces are usually defined as $\ell_2$ spaces in which $f$ can have infinitely many non-zero coefficients. We choose our current definition so as to avoid functional-analytic subtleties.} We declare $\{f_{\bm r}\}_{\bm r\in\Z_{\geq 0}^\Lambda}$ to be an orthonormal basis of $\cF$, and all subsequent adjoints are taken w.r.t. this basis.
\begin{definition}[Bosonic Operators]\label{def:bosonop}
For each $i\in \Lambda$, the annihilation and creation operators are defined by
\[
\sfb_i f_{\bm{r}}
:=
\sqrt{\bm{r}_i}\,f_{\bm{r}-e_i},
\qquad
\sfb_i^* f_{\bm{r}}
:=
\sqrt{\bm{r}_i+1}\,f_{\bm{r}+e_i},
\]
where, $\bm{r}+e_i$ (resp. $\bm{r}-e_i$) increments (resp. decrements, when $\bm{r}_i>0$) the
$i^{\mathrm{th}}$ entry of $\bm{r}$ by $1$. The number operator for mode $i$ is defined as $\Num_i := \sfb_i^*\sfb_i$, and the total number operator is defined as $\Num := \sum_{i\in\Lambda} \Num_i$. Furthermore, we use the shorthand $(\pm 1)^{\Num_i}:\cF\to\cF$ to refer to the linear operator which acts as $(\pm 1)^{\Num_i}f_{\bm r}:= (\pm 1)^{\bm r_i}f_{\bm r}$. Note that $1^{\Num_i} = \Id$.
\end{definition}

These satisfy the following standard commutation relations. 

\begin{lemma}[Bosonic commutation relations]\label{lem:bosonopcomm}
For all \(i\neq j\in\Lambda\),
\[
[\sfb_i,\sfb_j]=[\sfb_i^*,\sfb_j^*]=0,
\qquad
[\sfb_i,\sfb_j^*]=\delta_{ij}\Id.
\]
Moreover,
\[
[\sfb_i,\Num_j]=\delta_{ij}\sfb_i,
\qquad
[\sfb_i^*,\Num_j]=-\delta_{ij}\sfb_i^*,
\]
and
\[
\{\sfb_i,(-1)^{\Num_i}\}
=
\{\sfb_i^*,(-1)^{\Num_i}\}
=
0.
\]
\end{lemma}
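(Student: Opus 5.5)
The plan is to verify every identity by evaluating both sides on an arbitrary basis vector $f_{\bm r}$, $\bm r\in\Z_{\geq 0}^\Lambda$. Two operators on $\cF$ are equal iff they agree on this basis, since $\cF$ is the finite span of the $f_{\bm r}$. Throughout, we use the convention that $f_{\bm r - e_i}$ carries the coefficient $\sqrt{\bm r_i}=0$ when $\bm r_i=0$, so no separate boundary case is needed. Before using $\sfb_i^*$ as the adjoint of $\sfb_i$, we check it against the declared orthonormal basis:
\[
\langle f_{\bm s},\sfb_i f_{\bm r}\rangle=\sqrt{\bm r_i}\,\1(\bm s=\bm r-e_i)=\sqrt{\bm s_i+1}\,\1(\bm r=\bm s+e_i)=\langle \sfb_i^* f_{\bm s}, f_{\bm r}\rangle .
\]

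\textbf{Step 1 (different modes).} For $i\neq j$, we compute $\sfb_i\sfb_j f_{\bm r}=\sqrt{\bm r_j}\sqrt{\bm r_i}\,f_{\bm r-e_i-e_j}$. This is symmetric in $i,j$ because decrementing coordinate $j$ does not change coordinate $i$. Hence $[\sfb_i,\sfb_j]=0$. The same computation gives $[\sfb_i^*,\sfb_j^*]=0$ and $\sfb_i\sfb_j^*=\sfb_j^*\sfb_i$, i.e.\ $[\sfb_i,\sfb_j^*]=0$ for $i\neq j$.

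\textbf{Step 2 (same mode).} For $j=i$ we have
\[
\sfb_i\sfb_i^*f_{\bm r}=(\bm r_i+1)f_{\bm r},\qquad \sfb_i^*\sfb_i f_{\bm r}=\bm r_i f_{\bm r}.
\]
The second formula shows that $\Num_i f_{\bm r}=\bm r_i f_{\bm r}$, and subtracting the two gives $[\sfb_i,\sfb_i^*]=\Id$.

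\textbf{Step 3 (number operators).} Since $\Num_j$ is diagonal in the basis,
\[
[\sfb_i,\Num_j]f_{\bm r}=\sqrt{\bm r_i}\,\bigl(\bm r_j-(\bm r-e_i)_j\bigr)f_{\bm r-e_i}=\delta_{ij}\sfb_i f_{\bm r}.
\]
The relation for $\sfb_i^*$ is obtained analogously, or more simply by taking adjoints, which gives $[\sfb_i^*,\Num_j]=-\delta_{ij}\sfb_i^*$.

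\textbf{Step 4 (parity operators).} We compute
\[
\sfb_i(-1)^{\Num_i}f_{\bm r}=(-1)^{\bm r_i}\sqrt{\bm r_i}f_{\bm r-e_i},\qquad (-1)^{\Num_i}\sfb_if_{\bm r}=(-1)^{\bm r_i-1}\sqrt{\bm r_i}f_{\bm r-e_i}.
\]
These two terms are negatives of each other, so their sum vanishes and $\{\sfb_i,(-1)^{\Num_i}\}=0$. The same computation applies to $\sfb_i^*$, with exponents $\bm r_i$ versus $\bm r_i+1$.

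No step is a real obstacle. The only care needed is the boundary case $\bm r_i=0$, where the vanishing coefficient $\sqrt{\bm r_i}$ handles everything consistently. We also note that the statement's hypothesis ``$i\neq j$'' is used only for the mixed relations; the $\delta_{ij}$ formulas cover both the $i=j$ and $i\neq j$ cases.
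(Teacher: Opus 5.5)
Your proposal is correct and follows essentially the same route as the paper: both verify every identity by evaluating on the basis vectors $f_{\bm r}$, and both handle the parity anticommutators by noting that changing $\bm r_i$ by one flips $(-1)^{\bm r_i}$. Your extra checks (that $\sfb_i^*$ is the adjoint of $\sfb_i$, the boundary case $\bm r_i=0$, and obtaining $[\sfb_i^*,\Num_j]$ by taking adjoints) are valid refinements of that same argument.
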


\begin{proof}
It suffices to check the identities on \(f_{\bm r}\). First,
\[
\sfb_i\sfb_i^*f_{\bm r}
=
(\bm r_i+1)f_{\bm r},
\qquad
\sfb_i^*\sfb_i f_{\bm r}
=
\bm r_i f_{\bm r},
\]
so \([\sfb_i,\sfb_i^*]=\Id\). If \(i\neq j\), then
\[
\sfb_i\sfb_jf_{\bm r}
=
\sqrt{\bm r_i\bm r_j}\,
f_{\bm r-e_i-e_j}
=
\sfb_j\sfb_i f_{\bm r},
\]
and
\[
\sfb_i^*\sfb_j f_{\bm r}
=
\sqrt{(\bm r_i+1)\bm r_j}\,
f_{\bm r+e_i-e_j}
=
\sfb_j\sfb_i^*f_{\bm r}.
\]
The creation operators commute by the same calculation.

For the number operators,
\[
[\sfb_i,\Num_j]f_{\bm r}
=
\delta_{ij}\sqrt{\bm r_i}\,f_{\bm r-e_i}
=
\delta_{ij}\sfb_i f_{\bm r},
\]
and similarly
\[
[\sfb_i^*,\Num_j]f_{\bm r}
=
-\delta_{ij}\sfb_i^*f_{\bm r}.
\]

Finally, since
\[
(-1)^{\Num_i}f_{\bm r}
=
(-1)^{\bm r_i}f_{\bm r},
\]
changing \(\bm r_i\) by one reverses its parity. Hence
\[
(-1)^{\Num_i}\sfb_i
=
-\sfb_i(-1)^{\Num_i},
\qquad
(-1)^{\Num_i}\sfb_i^*
=
-\sfb_i^*(-1)^{\Num_i},
\]
which proves the anti-commutation relations.
\end{proof}

\begin{table}[t]
\centering
\renewcommand{\arraystretch}{1.45}
\begin{tabular}{|c|p{0.38\textwidth}|p{0.43\textwidth}|}
\hline
\textbf{Object}
&
\textbf{Definition}
&
\textbf{Interpretation}
\\
\hline

$f_{\bm r}$
&
$\displaystyle
f_{\bm r}(\bm r')
:=
\1(\bm r=\bm r')
$
&
Fock basis state associated with the occupation vector
$\bm r=(\bm r_i)_{i\in\cH}$. 
\\
\hline

$\sfb_i$
&
$\displaystyle
\sfb_i f_{\bm r}
:=
\sqrt{\bm r_i}\,
f_{\bm r-e_i}
$
&
The usual bosonic annihilation operator on mode $i$.
\\
\hline

$\Num_i$
&
$\displaystyle
\Num_i
:=
\sfb_i^{*}\sfb_i,
\qquad
\Num_i f_{\bm r}
=
\bm r_i f_{\bm r}
$
&
The local number operator, which counts the number of bosons occupying
mode $i$.
\\
\hline

$(\pm1)^{\Num_i}$
&
$\displaystyle
(\pm1)^{\Num_i}f_{\bm r}
:=
(\pm1)^{\bm r_i}f_{\bm r}
$
&
The occupation-parity operator on mode $i$. 
\\
\hline

$\sfK_i$
&
$\displaystyle
\sfK_i
:=
\prod_{j<i}
\eps_{ij}^{\Num_j}
$
&
The twist operator associated with mode $i$.
\\
\hline

$\sfa_i$
&
$\displaystyle
\sfa_i
:=
\sfK_i\sfb_i
$
&
The twisted bosonic annihilation operator.
\\
\hline

\end{tabular}

\caption{Basic objects in the twisted boson algebra. For hyperedge $S_i\in\cH$, and some operator $\mathsf{O}$, we use the shorthand $\mathsf{O}_i$ for $\mathsf{O}_{S^{(i)}}$. }
\label{tab:twisted-boson-algebra}
\end{table}

Now, let us define the twisted bosonic operators on the set of hyperedges $\cH$. Let $\cH = (S^{(1)}, \ldots, S^{(m)})$ be some arbitrary ordering of $\cH$ (where $m = |\cH|$). Write $\eps_{S^{(i)}, S^{(j)}}:= \eps_{ij}$. 
\begin{definition}[Twisted Bosonic Operators]\label{def:twisted-bosonic}
    For each $S^{(i)}\in \cH$, we define the \emph{twist operator} $\sfK_i:\cF\to\cF$ as 
    \[\sfK_i := \prod_{j < i} \eps_{ij}^{\Num_j}.\]
    The twisted creation and annihilation operators are then given as, 
    \[\sfa_i := \sfK_i \sfb_i  \qquad\sfa_i^* := \sfK_i \sfb_i^*   \]
\end{definition}

Individually, the twisted operators satisfy bosonic commutation relations, $[\sfa_i, \sfa_i^*] = \Id$. The twisted construction is summarized in \cref{tab:twisted-boson-algebra}.

\begin{lemma}\label{lem:twisted-commutations}
    The twisted bosonic operators $\{\sfa_S\}_{S\in\cH}$ on $\cF$ satisfy for all $S, T\in\cH$, 
    \begin{enumerate}[(1)]
        \item\label{item:twisted-commutation} $\sfa_S\sfa_T = \eps_{S, T}\sfa_T\sfa_S$,
        \item\label{item:twisted-bosonic-relation} $\sfa_S\sfa_T^* - \eps_{S, T}\sfa_T^*\sfa_S = \1(S = T)\cdot\Id$, 
        \item\label{item:annihilation} $\sfa_S(\cF_t)\seq\cF_{t - 1}$ for all $t\geq  1$. Furthermore, $\sfa_S(\cF_0) = 0$. Similarly, $\sfa^*_S(\cF_t)\seq\cF_{t + 1}$ for all $t\geq  0$.
        \item\label{item:number-operator} $\sfa^*_S \sfa_S f_{\bm{r}} = \bm{r}_S f_{\bm{r}}$ for all $\bm{r}\in\Z_{\geq 0}^\cH$. Consequently, $\sum_{S\in\cH}\sfa_S^* \sfa_S = \Num$. 
    \end{enumerate}
\end{lemma}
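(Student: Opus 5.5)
The plan is to verify each identity on the basis vectors $f_{\bm r}$, writing $S = S^{(i)}$, $T = S^{(j)}$. Everything reduces to \cref{lem:bosonopcomm} together with two simple observations. First, each twist $\sfK_i$ is a diagonal operator in the basis $\{f_{\bm r}\}$ with entries $\pm1$. Concretely, $\sfK_i f_{\bm r} = \prod_{l<i}\eps_{il}^{\bm r_l}f_{\bm r}$, so $\sfK_i$ is a Hermitian involution and depends only on the occupations of the modes $l<i$. Second, $\sfK_i$ commutes with $\sfb_l,\sfb_l^*$ for $l\geq i$. For $l<i$, a single creation or annihilation in mode $l$ flips the factor $\eps_{il}^{\Num_l}$, so $\sfb_l\sfK_i = \eps_{il}\sfK_i\sfb_l$ and likewise $\sfb_l^*\sfK_i = \eps_{il}\sfK_i\sfb_l^*$. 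This is the twisted analogue of the anticommutation $\{\sfb_i,(-1)^{\Num_i}\}=0$ in \cref{lem:bosonopcomm}, applied to $\eps_{il}^{\Num_l}$.

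One point needs care. As written, $\sfa_i^* := \sfK_i\sfb_i^*$, while the true adjoint of $\sfK_i\sfb_i$ is $\sfb_i^*\sfK_i$. These agree because $\sfK_i$ does not involve mode $i$, which is consistent with the second observation.

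\textbf{Items (3) and (4).} For (3), the twist preserves each $\cF_t$ and $\sfb_i$ lowers $|\bm r|$ by one (it kills $\cF_0$), while $\sfb_i^*$ raises it by one. For (4), $\sfa_i^*\sfa_i = \sfb_i^*\sfK_i\sfK_i\sfb_i = \sfb_i^*\sfb_i = \Num_i$, using $\sfK_i^2=\Id$. Summing over $i$ gives $\Num$.

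\textbf{Items (1) and (2).} The case $S=T$ is immediate: (1) is trivial, and (2) reads $\sfK_i\sfb_i\sfb_i^*\sfK_i - \sfK_i\sfb_i^*\sfb_i\sfK_i = \sfK_i[\sfb_i,\sfb_i^*]\sfK_i = \Id$, using $\eps_{S,S}=1$. For $i\neq j$, by symmetry assume $j<i$. Then
\[
\sfa_i\sfa_j = \sfK_i\sfb_i\sfK_j\sfb_j = \sfK_i\sfK_j\sfb_i\sfb_j,
\]
since $i>j$ and so $\sfK_j$ does not involve mode $i$. On the other hand,
\[
\sfa_j\sfa_i = \sfK_j\sfb_j\sfK_i\sfb_i = \eps_{ij}\sfK_j\sfK_i\sfb_j\sfb_i,
\]
where moving $\sfb_j$ past $\sfK_i$ picks up $\eps_{ij}$ because $j<i$. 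Since the diagonal $\sfK$'s commute with each other and $\sfb_i,\sfb_j$ commute, this gives $\sfa_i\sfa_j = \eps_{ij}\sfa_j\sfa_i$, using $\eps_{ij}^2=1$. The same computation with $\sfb_j^*$ in place of $\sfb_j$, together with $[\sfb_i,\sfb_j^*]=0$ for $i\neq j$, gives $\sfa_i\sfa_j^* = \eps_{ij}\sfa_j^*\sfa_i$, which is (2) for $S\neq T$. When instead $i<j$, the sign arises from commuting $\sfb_i$ past $\sfK_j$; since $\eps_{ij}=\eps_{ji}$, the conclusion is the same.

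\textbf{Main obstacle.} The only delicate step is the sign bookkeeping: confirming that exactly one factor $\eps_{ij}$ arises, from the earlier-indexed operator passing the later one's twist, and that none arises in the other order. Checking this directly on $f_{\bm r}$, with explicit occupation parities before and after the index shift, is the cleanest way to settle it.
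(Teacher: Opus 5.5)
Your proposal is correct and follows essentially the same route as the paper: commute the diagonal $\pm1$ twist operators past the bosonic $\sfb$'s, note that exactly one factor $\eps_{ij}$ arises (from the lower-indexed $\sfb$ passing the higher-indexed twist), and finish with the untwisted relations of \cref{lem:bosonopcomm}. You also check explicitly that $\sfK_i\sfb_i^* = \sfb_i^*\sfK_i$, so the defined $\sfa_i^*$ really is the adjoint of $\sfa_i$; the paper uses this silently, for instance when its proof writes $\sfa_i = \sfb_i\sfK_i$.
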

\begin{proof}
First, we note that $\sfa_i^2 = \sfb_i^2$ and $\eps_{ii} =+1$. Secondly, note that for $1 \le i < j \le m$ we have,
\[
\sfK_j \sfb_i = \eps_{ij}\sfb_i \sfK_j
\]
as $\{\sfb_i, (-1)^{\Num_i}\} = 0$ and $[\sfb_i, (-1)^{\Num_t}]=0$ for $t\neq i$. Thus, for any $1 \le i < j \le m$ we have, 
\[
\sfa_i \sfa_j = \sfb_i \sfK_i \sfb_j \sfK_j = \sfb_i\sfb_j  \sfK_i \sfK_j
\]
and 
\[
\sfa_j \sfa_i = \sfb_j\sfK_j \sfb_i \sfK_i = \sfb_j \left(\eps_{ij}\sfb_i \sfK_j\right) \sfK_i = \eps_{ij}\sfb_i\sfb_j \sfK_i \sfK_j
\]
where we used that $[\sfb_i, \sfb_j] = 0$. This shows \cref{item:twisted-commutation}. 

For \cref{item:twisted-bosonic-relation}, first note that $\sfa_i^*\sfa_i = \sfb_i^*\sfb_i$ and $\sfa_i \sfa_i^* = \sfb_i \sfb_i^*$. For $1 \le i < j \le m$, the calculation proceeds exactly as \cref{item:twisted-commutation} since $[\sfb_i, \sfb_j^*] = 0$.

\cref{item:annihilation,item:number-operator} follow simply from the definition of the twisted bosons.
\end{proof}

For any Hermitian $R\in\C^{\cH\times\cH}$, define 
\[\cL(R):= \sum_{S, T\in\cH}R_{S, T}\sfa_S^*\sfa_T.\]
Also, for any two Hermitian linear operators $L, L':\cF\to\cF$ we write $L\preceq L'$ if for any $t\in\Z_{\geq 0}$ and any $\psi\in\bigoplus_{\ell\leq t}\cF_\ell$ we have $\langle\psi, L\psi\rangle\leq\langle\psi, L'\psi\rangle$.

We now elucidate certain properties of the operator $\cL$:
\begin{lemma}[Properties of $\cL$]\label{lem:cL-properties}
    We have:
    \begin{enumerate}[(1)]
        \item\label{item:order-preserving} If $R\preceq R'$, then $\cL(R)\preceq\cL(R')$.
        \item\label{item:identity-number} $\cL(\Id) = \Num$.
    \end{enumerate}
\end{lemma}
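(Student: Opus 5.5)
Both items follow from a single identity, so the plan is to prove that identity first. For any $\psi\in\cF$ (a finite combination of basis vectors), write $\phi_T:=\sfa_T\psi$ for each $T\in\cH$. Then by definition of $\cL$ and of adjoints,
\[
\langle\psi,\cL(R)\psi\rangle=\sum_{S,T\in\cH}R_{S,T}\langle\sfa_S\psi,\sfa_T\psi\rangle=\sum_{S,T\in\cH}R_{S,T}\langle\phi_S,\phi_T\rangle .
\]
Here we use that $\sfa_S^*$ is genuinely the adjoint of $\sfa_S$ with respect to the declared orthonormal basis. This holds because $\sfK_S$ is a diagonal $\pm1$ operator in the basis $\{f_{\bm r}\}$ that commutes with $\sfb_S$ and $\sfb_S^*$ (it only involves $\Num_j$ for $j\neq S$). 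Hence $(\sfK_S\sfb_S)^*=\sfb_S^*\sfK_S=\sfK_S\sfb_S^*$.

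The key step is to rewrite the double sum as a trace against a Gram matrix. Choose any orthonormal basis $\{w_\alpha\}$ of the finite-dimensional space $W:=\spn\{\phi_T\}_{T\in\cH}$; this is finite since $\psi$ has finite support and, by \cref{lem:twisted-commutations}\,\cref{item:annihilation}, each $\sfa_T$ maps finite-support vectors to finite-support vectors. Define the matrix $\Phi\in\C^{\cH\times\alpha}$ by $\Phi_{T,\alpha}:=\langle w_\alpha,\phi_T\rangle$. Then $\langle\phi_S,\phi_T\rangle=(\Phi\Phi^*)_{T,S}$, and so
\[
\langle\psi,\cL(R)\psi\rangle=\Tr(R\,\Phi\Phi^*)=\sum_\alpha \Phi_{\cdot,\alpha}^*R\,\Phi_{\cdot,\alpha}.
\]
The transpose bookkeeping may need care, for instance using $\bar\Phi$ in place of $\Phi$; this is harmless because $R\preceq R'$ is preserved under entrywise conjugation of the test vector.

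For \cref{item:order-preserving}: since $G:=\Phi\Phi^*$ (or its conjugate) is PSD, we get $\langle\psi,(\cL(R')-\cL(R))\psi\rangle=\Tr((R'-R)G)\geq0$. The inequality follows because the trace of a product of two PSD matrices is nonnegative, or equivalently from the sum of quadratic forms $v_\alpha^*(R'-R)v_\alpha\geq0$. This holds for every $\psi$, in particular for every $\psi\in\bigoplus_{\ell\leq t}\cF_\ell$, which is exactly the definition of $\cL(R)\preceq\cL(R')$.

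For \cref{item:identity-number}: $\cL(\Id)=\sum_{S\in\cH}\sfa_S^*\sfa_S$, which equals $\Num$ by \cref{lem:twisted-commutations}\,\cref{item:number-operator}. The only mildly delicate point in the whole argument is the adjoint and index-conjugation bookkeeping in the Gram-matrix step. There is no real obstacle beyond that.
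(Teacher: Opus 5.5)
Your argument is correct and is essentially the paper's. The paper factors $R'-R=Q^*Q$ to write $\cL(R'-R)=\sum_{U}\bigl(\sum_T Q_{U,T}\sfa_T\bigr)^*\bigl(\sum_T Q_{U,T}\sfa_T\bigr)$ as an operator sum of squares, while you factor the Gram matrix of $\{\sfa_T\psi\}_T$ instead; both are dual forms of the fact $\sum_{S,T}K_{S,T}\langle\sfa_S\psi,\sfa_T\psi\rangle\ge 0$ for $K\succeq 0$, and your transpose hedge is unnecessary since $\langle\phi_S,\phi_T\rangle=(\Phi\Phi^*)_{T,S}$ gives exactly $\Tr(R\Phi\Phi^*)$ (your explicit check that $\sfa_S^*$ is the true adjoint of $\sfa_S$ is a point the paper leaves implicit).
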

\begin{proof}
    Since $\cL(R') - \cL(R) = \cL(R' - R)$, it suffices to show that for any $K\succeq 0$, we have $\cL(K)\succeq 0$. Thus write $K = Q^*Q$, and note that 
    \[\cL(K) = \sum_{S\in\cH}\left(\sum_{T\in\cH}Q_{S, T}a_T\right)^*\cdot\left(\sum_{T\in\cH}Q_{S, T}a_T\right)\succeq 0,\]
    as desired.

    Note that $\cL(\Id) = \sum_{S\in\cH}a_S^*a_S \overset{\text{\cref{item:number-operator} of \cref{lem:twisted-commutations}}}{=} \Num$.
\end{proof}

We declare $f_{\bm 0}$ to be the \emph{vacuum state} (since it has no particles), and we define the \emph{vacuum} functional 
\[\varphi(T):= \langle f_{\bm{0}}, Tf_{\bm{0}}\rangle\]
computing the expectation value of the operator $T$ on the vacuum state $\bm{0}:= (0, 0, \ldots, 0)\in\Z_{\geq 0}^\cH$. With this, we are ready to present the twisted bosonic mapping. The key idea is to encode the signs resulting from the anti-commutation relations of the Majorana monomials into the twist phases of the bosons. 

Now, we define the collective operators
\[\sfa := \frac{1}{\sqrt{|\cH|}}\sum_{S\in\cH}\sfa _S, \;\; \sfx := \sfa  + \sfa ^*.\]
Also write $\sfx_S:= \sfa_S + \sfa_S^*$. Note that $\sfx$ is Hermitian. With this, we express the average moments of the SYK model in terms of vacuum expectation values of powers of $\sfx$ on the twisted bosonic space.

In \cref{lem:twisted-boson-mapping}, the Gaussians $g_{S_i}$ should just be thought of as a tool to introduce Isserlis-Wick-style summations.

\begin{lemma}[Twisted Bosonic Mapping]\label{lem:twisted-boson-mapping}
For any $S_1, \dots, S_{2\ell} \in \cH$, we have
    \[
    \varphi(\sfx_{S_1}\cdots \sfx_{S_{2\ell}}) = \E[g_{S_1}g_{S_2}\dots g_{S_{2\ell}}]\cdot \tr[\Gamma_{S_1}\Gamma_{S_2}\dots \Gamma_{S_{2\ell}}]. 
    \]
\end{lemma}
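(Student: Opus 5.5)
Both sides expand as a sum over perfect matchings $\pi\in\cP_2(2\ell)$, and the plan is to match the expansions term by term. Set $\sfx_{S_i}=\sfa_{S_i}+\sfa^*_{S_i}$ and expand the product $\sfx_{S_1}\cdots\sfx_{S_{2\ell}}$ into $2^{2\ell}$ words. Each word assigns to every position $i$ either a creation or an annihilation operator. By \cref{item:annihilation} of \cref{lem:twisted-commutations}, $\varphi$ of a word is nonzero only if, reading right to left, the particle count never goes negative and ends at $0$. The key claim is a Wick-type formula for the twisted algebra:
\[\varphi(\sfx_{S_1}\cdots\sfx_{S_{2\ell}})=\sum_{\pi\in\cP_2(2\ell)}\prod_{\{i,j\}\in\pi}\1(S_i=S_j)\prod_{\{e,e'\}\in E(G_\pi)}\eps_{S_e,S_{e'}}.\]

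I would prove this claim by induction on $\ell$, using the usual "move the leftmost annihilator to the right" argument. Consider $\varphi(\sfx_{S_1}W)$, where $W=\sfx_{S_2}\cdots\sfx_{S_{2\ell}}$. Since $\langle f_{\bm 0},\sfa^*_{S_1}\,\cdot\,\rangle=0$, only $\sfa_{S_1}$ contributes, so $\varphi(\sfx_{S_1}W)=\varphi(\sfa_{S_1}W)$. Now commute $\sfa_{S_1}$ past each factor $\sfx_{S_j}$ in turn. By \cref{item:twisted-commutation} and \cref{item:twisted-bosonic-relation}, we have $\sfa_{S}\sfx_{T}=\eps_{S,T}\sfx_T\sfa_S+\1(S=T)\Id$. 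When $\sfa_{S_1}$ finally reaches $f_{\bm 0}$ it annihilates it. This gives
\[\varphi(\sfx_{S_1}\cdots\sfx_{S_{2\ell}})=\sum_{j=2}^{2\ell}\1(S_1=S_j)\Big(\prod_{1<i<j}\eps_{S_1,S_i}\Big)\varphi\Big(\prod_{i\neq 1,j}\sfx_{S_i}\Big).\]
This recursion, pairing $1$ with $j$, is exactly the recursion satisfied by the right-hand side. A chord $\{1,j\}$ crosses precisely those chords having exactly one endpoint in $(1,j)$. Chords with both endpoints inside $(1,j)$ contribute the factor $\eps_{S_1,S_e}$ twice, which squares to $1$. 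So the product of $\eps_{S_1,S_i}$ over $1<i<j$ equals the product over chords crossing $\{1,j\}$. The remaining crossings among the other chords are unchanged after deleting $\{1,j\}$ and relabeling.

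For the other side, \cref{fact:isserlis-wick} gives $\E[g_{S_1}\cdots g_{S_{2\ell}}]=\sum_\pi\prod_{\{i,j\}\in\pi}\1(S_i=S_j)$. I need $\tr(\Gamma_{S_1}\cdots\Gamma_{S_{2\ell}})=\prod_{E(G_\pi)}\eps$ for every $\pi$ compatible with the tuple. Running the same induction with \eqref{eq:comm-rel}: move $\Gamma_{S_1}$ rightward to its partner $\Gamma_{S_j}$, picking up $\prod_{1<i<j}\eps_{S_1,S_i}$, then use $\Gamma_{S_1}^2=\Id$. Induction then yields the same sign product. This sign is independent of the choice of compatible $\pi$, since the trace is. Multiplying out the two sides then gives the identity in \cref{lem:twisted-boson-mapping}.

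The main obstacle is matching the normalizations when hyperedges repeat. If several positions carry the same $S$, many matchings $\pi$ are compatible. The Gaussian side counts them with multiplicity $\sum_\pi\1_\pi$ times one common sign. On the boson side each compatible $\pi$ contributes its own sign $\prod_{E(G_\pi)}\eps$. So I must check that all compatible $\pi$ give the same sign; this follows from the trace argument, since the sign equals $\tr(\Gamma_{S_1}\cdots\Gamma_{S_{2\ell}})$ for each of them. It is also cleanest to take the ordering convention $\cH=(S^{(1)},\dots,S^{(m)})$ with possibly repeated entries as distinct modes. In that case \cref{item:twisted-bosonic-relation} is read with $\1(S=T)$ meaning the same mode, and this matches the footnote's convention that the disorder variables are independent per index.
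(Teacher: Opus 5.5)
Your proposal is correct and follows essentially the same route as the paper. The paper also proves (A), that $\Gamma_{S_1}\cdots\Gamma_{S_{2\ell}}=\phi_\pi\Id$ for every compatible $\pi$, by moving $\Gamma_{S_1}$ to its partner; it proves (B), the twisted Wick formula, by commuting $\sfa_{S_1}$ rightward via $\sfa_i\sfx_j=\eps_{ij}\sfx_j\sfa_i+\1[S_i=S_j]\Id$; and it combines the two exactly as you do, replacing each $\phi_\pi$ by the common trace. Your explicit remarks — that the sign does not depend on the compatible $\pi$, and that repeated hyperedges should be treated as distinct modes — are consistent with the paper's conventions, and they make the combination step a bit more transparent than the paper's terse version.
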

\begin{proof}
    Using \cref{fact:isserlis-wick}, we have 
    \[\E[g_{S_1}g_{S_2}\dots g_{S_{2\ell}}] = \sum_{\pi\in \cP_2(2\ell)}\prod_{\{i,j\} \in \pi} \E[g_{S_i}g_{S_j}] = \sum_{\pi\in \cP_2(2\ell)}\prod_{\{i,j\} \in \pi} \1[S_i = S_j]\]
    where we used the fact that the disorder variables are zero-mean independent Gaussians, thus $\E[g_{S_i}g_{S_j}] =\1[S_i = S_j]\ $.

    For any perfect matching $\pi\in\cP_2(2\ell)$ and a collection of hyperedges $S_1,\dots,S_{2\ell}\in\cH$ define,
     \[
     \1_\pi(S_1,\cdots, S_{2\ell}):=\prod_{\{i, j\}\in\pi}\1(S_i = S_j)
     \]
     On the event that $\1_\pi(S_1,\ldots, S_{2\ell}) = 1$, for any edge $e\in\pi$ denote $S_e$ to be the unique hyperedge. We define the signed crossing phase for any $1_\pi(S_1,\cdots, S_{2\ell}) = 1$ event to be
     \[
     \phi_\pi(S_1,\ldots, S_{2\ell}) := \prod_{\substack{\{e,f\}\subset \pi \\ e \text{ crosses} f}} \eps_{e,f}
     \]
     where the product is over unordered pairs of crossing edges, and we define $\eps_{e,f} := \eps_{S_e,S_f}$.

    The proof will proceed in two parts. We will show, 
    \begin{align*}
 \Gamma_{S_1}\cdots\Gamma_{S_{2\ell}} &= \phi_\pi(S_1,\ldots, S_{2\ell})  \cdot \Id \text{ whenever } \1_\pi(S_1,\ldots, S_{2\ell})=1 \text{ and, } \tag{A} \\ 
        \varphi(\sfx_{S_1}\cdots \sfx_{S_{2\ell}}) &= \sum_{\pi\in\cP_2(2\ell)}\1_\pi(S_1,\ldots, S_{2\ell})\cdot \phi_\pi(S_1,\ldots, S_{2\ell}). \tag{B}\\ 
    \end{align*}
      Taken together, (A) and (B) imply that
    \[
    \varphi(\sfx_{S_1}\cdots \sfx_{S_{2\ell}}) = \sum_{\pi \in \cP_2(2\ell)}\1_\pi(S_1,\ldots, S_{2\ell})\tr[\Gamma_{S_1}\Gamma_{S_2}\cdots \Gamma_{S_{2\ell}}]
    \]
    proving the result.

     Let us first show (A). We show this by induction on $\ell$. For $\ell=1$, this is true since $\Gamma_S^2=\Id$. Assume this is true for $\ell-1$, and consider the case $\ell$. Fix $\pi\in \cP_2(2\ell)$ and let $i$ be the partner of $1$ in $\pi$. Then,
   
     \begin{align*}
       \Gamma_{S_1}\Gamma_{S_2}\cdots \Gamma_{S_i} \cdots \Gamma_{S_{2\ell}} &= \left(\prod_{j=2}^{i-1}\eps_{1j}\right)\Gamma_{S_2}\cdots \underbrace{\Gamma_{S_1}\Gamma_{S_i}}_{=\Id} \cdots \Gamma_{S_{2\ell}} \\ 
       &=\left(\prod_{j=2}^{i-1}\eps_{1j}\right) \cdot \phi_{\pi'}(S_2,\ldots, S_{i-1}, S_{i+1}, \ldots, S_{2\ell}) \cdot \Id 
     \end{align*}
     where $\pi' = \pi\setminus \{\{1,i\}\}$ and we used the inductive hypothesis on $\Gamma_{S_2}\dots \Gamma_{S_{i-1}}\Gamma_{S_{i+1}}\dots \Gamma_{2\ell}$, since $\1_\pi(S_1,\dots,S_{2\ell}) =1 \implies \1_{\pi'}(S_2,\dots,S_{i-1}, S_{i+1},\dots, S_{2\ell})=1$. Now consider a $\{p<q\}\in \pi'$ with $p<i$. If $1 < p < q < i$, then the factors satisfy $\eps_{1p}\eps_{1q}=1$. Otherwise, $p < i < q$, and the edges $e=\{1<i\}$ and $e'=\{p < q\}$ are crossing and contributes precisely $\eps_{ee'}$. Therefore, 
     \[
     \left(\prod_{j=2}^{i-1}\eps_{1j}\right) \cdot \phi_{\pi'}(S_2,\ldots, S_{i-1}, S_{i+1}, \ldots, S_{2\ell})  = \phi_{\pi}(S_1,S_2,\ldots, S_{2\ell}).
     \]
     This proves (A).

     To show (B), first write $\sfa_i := \sfa_{S_i}$ and $\sfx_i:= \sfx_{S_i} = \sfa_i + \sfa^*_i$, and consider
     \[
     \varphi(\sfx_{1}\cdots \sfx_{{2\ell}}) =   \langle f_{\bm{0}}, \left((\sfa_{1} + \sfa_{1}^*) \dots (\sfa_{{2\ell}} + \sfa_{{2\ell}}^*)\right)f_{\bm{0}}\rangle
     \]
     
     By \cref{lem:twisted-commutations}, we have that 
  \begin{equation}\label{eq:ax-comm}
      \sfa_i \sfx_j = \sfa_i(\sfa_j + \sfa_j^*) = (\sfa_j + \sfa_j^*)\sfa_i\eps_{ij} + \1[S_i=S_j] \Id = \sfx_j \sfa_i \eps_{ij} + \1[S_i=S_j] \Id
  \end{equation}

     Now, for any $f\in \cF$, we have that $\langle f_{\bm{0}}, (\sfa+\sfa^*) f\rangle =\langle f_{\bm{0}}, \sfa f\rangle$ since $\sfa f_{\bm{0}} = 0$. 
     We again proceed by induction. For $\ell=1$, note that since $\sfa_i\vac =0$ and $\sfa_i\sfa_i^*\vac=\vac$,
     \[
     \langle \vac, \sfx_i \sfx_j\vac\rangle = \langle \vac, \sfa_i\sfa_j^*\vac\rangle = \1[S_i=S_j] 
     \]
     Now, assume that (B) holds for $\ell-1$. We have, by \cref{eq:ax-comm},
     \begin{align*}
      \varphi(\sfx_{1}\cdots \sfx_{{2\ell}}) &= \langle \vac, \sfa_1 \sfx_2 \dots \sfx_{2\ell} \vac\rangle \\
        &= \eps_{12} \langle \vac,  \sfx_2 \sfa_1 \dots \sfx_{2\ell} \vac\rangle + \1[S_1=S_2] \varphi(\hat{\sfx}_2\sfx_3 \dots \sfx_{2\ell}) 
     \end{align*}
     where we denote omission of an operator $\sfx$ by $\hat{\sfx}$. Repeating this process until $\sfa_1$ is moved to the rightmost position, and subsequently invoking the inductive hypothesis

     \begin{align*}
         \varphi(\sfx_{1}\cdots \sfx_{{2\ell}}) &= \sum_{i=2}^{2\ell} \1[S_1 = S_i] \left(\prod_{j<i}\eps_{1j}\right)\varphi(\sfx_{2}\cdots \hat{\sfx}_{i}, \sfx_{i+1}\cdots \sfx_{{2\ell}}) \\ 
         &= \sum_{i=2}^{2\ell} \1[S_1 = S_i] \left(\prod_{j<i}\eps_{1j}\right) \sum_{\pi'\in \cP_2(2\ell-2)}\1_{\pi'}(S_2,\ldots,\hat{S}_i,\ldots S_{2\ell})\cdot \phi_{\pi'}(S_2,\ldots,\hat{S}_i,\ldots, S_{2\ell}) \\ 
         &=\sum_{\pi \in \cP_2(2\ell)}\1_\pi(S_1,S_2,\dots,S_{2\ell})\phi_\pi(S_1,S_2,\dots,S_{2\ell})
     \end{align*}     
     where the crossing phase factor on the partition $\pi$ is determined from that of $\pi'$ by the same bookkeeping as for part (A). 
\end{proof}
With \cref{lem:twisted-boson-mapping} we now show that the disorder-averaged moments of the SYK Hamiltonian map to the vacuum state functional of the operator $\sfx$ on the twisted bosons.
\begin{proposition}[SYK Moments]\label{prop:varphi-trace-re}
    $\varphi(\sfx^{2\ell}) = \E\tr(H^{2\ell})$, where $H = \frac{1}{\sqrt{|\cH|}}\sum_{S\in\cH}g_S\Gamma_S$ is the SYK Hamiltonian from \cref{eq:syk-def}.
\end{proposition}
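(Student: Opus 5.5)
The plan is to expand both sides multilinearly over $\cH^{2\ell}$ and match them term by term using \cref{lem:twisted-boson-mapping}.

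On the twisted-boson side, $\sfx = \sfa + \sfa^* = |\cH|^{-1/2}\sum_{S\in\cH}(\sfa_S + \sfa_S^*) = |\cH|^{-1/2}\sum_{S\in\cH}\sfx_S$. Hence
\[\sfx^{2\ell} = |\cH|^{-\ell}\sum_{S_1,\ldots,S_{2\ell}\in\cH}\sfx_{S_1}\cdots\sfx_{S_{2\ell}}.\]
Every vector involved lies in the algebraic span $\cF$, and each $\sfx_S$ maps $\bigoplus_{t\leq r}\cF_t$ into $\bigoplus_{t\leq r+1}\cF_t$ by \cref{item:annihilation} of \cref{lem:twisted-commutations}. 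So the expansion is a finite sum of well-defined operators applied to $\vac$, and linearity of $\varphi$ gives
\[\varphi(\sfx^{2\ell}) = |\cH|^{-\ell}\sum_{S_1,\ldots,S_{2\ell}}\varphi(\sfx_{S_1}\cdots\sfx_{S_{2\ell}}).\]

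On the SYK side, $H^{2\ell} = |\cH|^{-\ell}\sum_{S_1,\ldots,S_{2\ell}} g_{S_1}\cdots g_{S_{2\ell}}\,\Gamma_{S_1}\cdots\Gamma_{S_{2\ell}}$. This sum is finite, so expectation and normalized trace commute with it. Each term then contributes $\E[g_{S_1}\cdots g_{S_{2\ell}}]\cdot\tr(\Gamma_{S_1}\cdots\Gamma_{S_{2\ell}})$.

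By \cref{lem:twisted-boson-mapping}, this equals $\varphi(\sfx_{S_1}\cdots\sfx_{S_{2\ell}})$ term by term, and summing gives the proposition.

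For a sparse $\cH$ with repeated hyperedges, the sum runs over the indexed multiset $(S^{(1)},\ldots,S^{(m)})$. Each index carries its own Gaussian and its own bosonic mode, so $\E[g_{S^{(i)}}g_{S^{(j)}}] = \1(i=j)$ matches $[\sfa_i,\sfa_j^*]=\delta_{ij}\Id$. There is no obstacle here, since the lemma's statement and proof are purely in terms of indices.

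I do not expect a real obstacle. The only care needed is bookkeeping: using the same normalization $|\cH|^{-1/2}$ on both sides, and noting that all sums are finite so that linearity of $\varphi$, $\E$, and $\tr$ can be applied freely.
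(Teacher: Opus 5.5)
Your proposal is correct and is essentially the paper's proof: expand both $\sfx^{2\ell}$ and $H^{2\ell}$ multilinearly over $\cH^{2\ell}$ with the common factor $|\cH|^{-\ell}$, then match the terms one by one via \cref{lem:twisted-boson-mapping}. Your side remarks, on the finiteness of all sums in the algebraic Fock space and on indexing repeated hyperedges by position, are consistent with the paper's conventions and add nothing that conflicts with its argument.
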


\begin{proof}
       Note that 
    \[\E\tr(H^{2\ell}) = \frac{1}{|\cH|^\ell}\sum_{S_1, \ldots, S_{2\ell}\in\cH}\E\left[\prod_{i = 1}^{2\ell}g_{S_i}\right]\cdot\tr\left(\Gamma_{S_1}\cdots\Gamma_{S_{2\ell}}\right)\] 
    By \cref{lem:twisted-boson-mapping} we have,
    \[\varphi(\sfx_{S_1}\cdots \sfx_{S_{2\ell}}) = \E\left[\prod_{i = 1}^{2\ell}g_{S_i}\right]\cdot\tr\left(\Gamma_{S_1}\cdots\Gamma_{S_{2\ell}}\right).\]
    Combining this with the fact that,
    \[\varphi(\sfx^{2\ell}) = \frac{1}{|\cH|^\ell}\sum_{S_1, \ldots, S_{2\ell}\in\cH}\varphi(\sfx_{S_1}\cdots \sfx_{S_{2\ell}}),\]
    where $\sfx_S = \sfa_S + \sfa_S^*$, we conclude the proof.
\end{proof}
\subsection{Spectral Bounds for $\varphi(\sfx^{2\ell})$}

\begin{table}[t]
\centering
\renewcommand{\arraystretch}{1.4}
\begin{tabular}{|c|p{0.34\textwidth}|p{0.47\textwidth}|}
\hline
\textbf{Parameter}
&
\textbf{Definition}
&
\textbf{Interpretation}
\\
\hline

$\cE$
&
$\displaystyle
\cE_{S,T}
:=
|\cH|^{-1}\cdot\eps_{S,T}
=
|\cH|^{-1}\cdot(-1)^{|S\cap T|}\in\C^{\cH\times\cH}
$
&
Sign matrix encoding the Majorana-monomial commutation
structure. 
\\
\hline

$u$
&
$\displaystyle
u
:=
|\cH|^{-1/2}\cdot\1\in\C^\cH
$
&
Normalized uniform vector on the hyperedge set $\cH$. 
\\
\hline

$\Pi_{u^\perp}$
&
$\displaystyle
\Pi_{u^\perp}
:=
\Id - uu^*
$
&
Orthogonal projection operator onto the $u^\perp$ space
\\
\hline

$\delta$
&
$\displaystyle
\delta
:=
\left\|
\Pi_{u^\perp}\cE u
\right\|
$
&
Quantifies the extent to which $u$ is not an eigenvector of $\cE$. In particular, $\delta = 0$ iff $u$ is an eigenvector of $\cE$.
\\
\hline

$\rho^{+}$
&
$\displaystyle
\rho^{+}
:=
\lambda_{\max}
\left(
\Pi_{u^\perp}\cE\Pi_{u^\perp}
\right)
$
&
The largest positive eigenvalue in the $u^\perp$ space
\\
\hline

$\rho^{-}$
&
$\displaystyle
\rho^{-}
:=
\max\left\{
0,
-\lambda_{\min}
\left(
\Pi_{u^\perp}\cE\Pi_{u^\perp}
\right)
\right\}
$
&
The (magnitude of the) largest negative eigenvalue in the $u^\perp$ space
\\
\hline
\end{tabular}

\caption{``Spectral leakage'' parameters of the Sign Matrix.}
\label{tab:sign-matrix-parameters}
\end{table}

We now provide a spectral bound for $\varphi(\sfx^{2\ell})$ (see \cref{lem:varphi-spectral-dom} for a formal statement). First, we do away with some necessary definitions. We write, 
\begin{align}\label{eq:spectral-notation-def}
\cE&:=\frac1{|\cH|}\,(\eps_{S,T})_{S,T\in\cH}\in\C^{\cH\times\cH},\\
u&:= \1/\sqrt{|\cH|} = |\cH|^{-1/2}\cdot\begin{bmatrix}1 & 1 & \cdots & 1\end{bmatrix}^\top,\\
\Pi_{u^\perp}&:= \Id - uu^*,\\
q&:=\langle u,\cE u\rangle,\\
\delta&:=\|\Pi_{u^\perp}\cE u\|,\\
\rho^+&:=\lambda_{\max}\!\left(\Pi_{u^\perp}\cE\Pi_{u^\perp}\right),\\
\rho^-&:=\max\left\{0, -\lambda_{\min}\!\left(\Pi_{u^\perp}\cE\Pi_{u^\perp}\right)\right\}.
\end{align}
We shall call $\cE$ the \emph{sign matrix}. 

The vector $u$ can be interpreted as the ``uniform channel'', with $\rho^{\pm}$ measuring the leakage into the transversal channels. That is, $\delta=0$ and $\rho^{\pm} = 0$ imply that $\cE = q uu^*$. The parameter $\delta$ on the other hand measures how far $u$ is from being an eigenvector of $\cE$; for the complete hypergraph, we have $\delta=0$. These parameters are summarized in \cref{tab:sign-matrix-parameters}.
\begin{proposition}\label{prop:cE-dom}
We have
    \[(q - \delta)uu^* - (\rho^- + \delta)\Pi_{u^\perp}\preceq\cE\preceq(q + \delta)uu^* + (\rho^+ + \delta)\Pi_{u^\perp}.\]
\end{proposition}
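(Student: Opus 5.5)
We decompose $\cE$ in the block structure induced by the orthogonal splitting $\C^\cH = \spn\{u\}\oplus u^\perp$. Writing $P:= uu^*$ and $\Pi:= \Pi_{u^\perp}$, so that $P+\Pi = \Id$, we have
\[\cE = P\cE P + P\cE\Pi + \Pi\cE P + \Pi\cE\Pi.\]
Since $\cE$ is real symmetric, each block can be controlled separately. First, $P\cE P = \langle u,\cE u\rangle uu^* = q\,uu^*$. Second, $\Pi\cE\Pi$ is a Hermitian operator supported on $u^\perp$ whose extreme eigenvalues there are $\rho^+$ and $-\rho^-$ (or smaller in absolute value). This gives
\[-\rho^-\Pi\preceq\Pi\cE\Pi\preceq\rho^+\Pi.\]
Here we use that $\rho^-\geq 0$ by definition, and that the possible zero eigenvalue on $\spn\{u\}$ is harmless because $\Pi$ vanishes there.

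The remaining step is to bound the off-diagonal part $C:= P\cE\Pi + \Pi\cE P$. Let $w:= \Pi\cE u$, so $\|w\| = \delta$ and $w\perp u$. Since $\cE$ is Hermitian, $P\cE\Pi = u(\Pi\cE u)^* = uw^*$, and therefore $C = uw^* + wu^*$. For any vector $\psi$, write $a := \langle u,\psi\rangle$ and $b := \langle w,\psi\rangle$. Then
\[|\langle\psi, C\psi\rangle| = 2|\Ree(\bar a b)| \leq 2|a|\,|b| \leq 2\delta\,|a|\,\|\Pi\psi\|.\]
The last inequality holds because $w\in u^\perp$, which gives $|\langle w,\psi\rangle| = |\langle w,\Pi\psi\rangle|\leq\delta\|\Pi\psi\|$. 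By AM-GM,
\[2\delta\,|a|\,\|\Pi\psi\| \leq \delta\bigl(|a|^2 + \|\Pi\psi\|^2\bigr) = \delta\,\langle\psi,(uu^* + \Pi)\psi\rangle.\]
Hence
\[-\delta(uu^*+\Pi)\preceq C\preceq\delta(uu^*+\Pi).\]

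Summing the three bounds yields
\[(q-\delta)uu^* - (\rho^-+\delta)\Pi \preceq \cE \preceq (q+\delta)uu^* + (\rho^++\delta)\Pi,\]
which is exactly the claim. The only step that is more than direct bookkeeping is the off-diagonal estimate: the cross term has to be split symmetrically via AM-GM, so that the $\delta$ penalty falls evenly on both the $u$ and $u^\perp$ channels. Everything else follows immediately from the definitions of $q$, $\rho^\pm$, and $\delta$.
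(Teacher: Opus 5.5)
Your proof is correct and is essentially the paper's argument: the paper writes $\psi = su + y$ with $y\perp u$ and expands $\langle\psi,\cE\psi\rangle$ into the same three pieces. These are $|s|^2q$, a cross term bounded by $2\delta|s|\,\|y\|$ via $\|\Pi_{u^\perp}\cE u\| = \delta$, and $\langle y,\cE y\rangle$ bounded by $\rho^{+}\|y\|^2$ (resp.\ $-\rho^{-}\|y\|^2$), followed by the same AM-GM split $2\delta|s|\,\|y\|\leq\delta|s|^2+\delta\|y\|^2$. Your block decomposition $\cE = q\,uu^* + (uw^*+wu^*) + \Pi_{u^\perp}\cE\Pi_{u^\perp}$ is simply the operator-level phrasing of that quadratic-form computation.
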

\begin{proof}
    Consider an arbitrary $\psi\in\C^\cH$ and write $\psi = s u + y$, where $s\in\C, \langle y, u\rangle = 0$. Then 
    \[\langle\psi, \cE\psi\rangle = |s|^2q + 2\Ree(\bar{s}\cdot u^*\cE y) + \langle y, \cE y\rangle = |s|^2q + 2\Ree(\bar{s}\cdot u^*\cE\Pi_{u^\perp}y) + \langle y, \cE y\rangle\] 
    \[\leq |s|^2q + 2\Ree(\bar{s}\cdot u^*\cE\Pi_{u^\perp}y) + \rho^+\|y\|^2\leq |s|^2q + 2|s|\cdot\|\Pi_{u^\perp}\cE u\|\cdot\|y\| + \rho^+\|y\|^2\] 
    \[= |s|^2q + 2\delta|s|\cdot\|y\| + \rho^+\|y\|^2\leq |s|^2q + \delta|s|^2 + \delta\|y\|^2 + \rho^+\|y\|^2 = (q + \delta)|s|^2 + (\rho^+ + \delta)\|y\|^2\] 
    \[= (q + \delta)\langle\psi, uu^*\psi\rangle + (\rho^+ + \delta)\langle\psi, \Pi_{u^\perp}\psi\rangle,\]
    as desired.

    Similarly, we also have
    \[\langle\psi, \cE\psi\rangle = |s|^2q + 2\Ree(\bar{s}\cdot u^*\cE\Pi_{u^\perp}y) + \langle y, \cE y\rangle\geq |s|^2q - 2|s|\cdot\|\Pi_{u^\perp}\cE u\|\cdot\|y\| - \rho^-\|y\|^2\] 
    \[\geq |s|^2q - \delta|s|^2 - \delta\|y\|^2 - \rho^-\|y\|^2\implies\cE\succeq(q - \delta)uu^* - (\rho^- + \delta)\Pi_{u^\perp}.\qedhere\]
\end{proof}
We now bound the vacuum functional $\varphi(\sfx^{2\ell})$ in terms of the parameters $q$, $\rho^+$, and $\delta$.
\begin{lemma}[Spectral Bounds for $\varphi(\sfx^{2\ell})$]\label{lem:varphi-spectral-dom}
If $(q - \rho^+)_+ < 1$ we have    
\[\varphi(\sfx^{2\ell})\leq\left(\frac{4(1 + (\rho^+ + \delta)\ell)}{1 - (q - \rho^+)_+}\right)^\ell.\]
\end{lemma}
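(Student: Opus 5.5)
The plan is to turn the twisted commutation relations of \cref{lem:twisted-commutations} into an approximate $q$-deformed relation for the collective operator $\sfa$. I will use this relation to bound the norm of $\sfa^*$ on each particle-number level $\cF_t$ by induction on $t$. Finally, I will bound $\varphi(\sfx^{2\ell})$ by a path count.

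\textbf{Step 1: the collective commutation relation.} By \cref{item:twisted-bosonic-relation} of \cref{lem:twisted-commutations},
\[\sfa\sfa^* = \frac{1}{|\cH|}\sum_{S,T\in\cH}\sfa_S\sfa_T^* = \frac{1}{|\cH|}\sum_{S,T}\bigl(\eps_{S,T}\sfa_T^*\sfa_S + \1(S=T)\Id\bigr) = \Id + \cL(\cE),\]
where I use that $\cE$ is real symmetric. Observe also that $\cL(uu^*) = \sfa^*\sfa$ and $\cL(\Pi_{u^\perp}) = \Num - \sfa^*\sfa$, the latter by \cref{item:identity-number} of \cref{lem:cL-properties}. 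Applying \cref{prop:cE-dom} together with the monotonicity in \cref{item:order-preserving} of \cref{lem:cL-properties} then gives
\[\sfa\sfa^*\preceq \Id + (q+\delta)\sfa^*\sfa + (\rho^++\delta)(\Num - \sfa^*\sfa) = \Id + (\rho^++\delta)\Num + (q-\rho^+)\sfa^*\sfa.\]
Since $\sfa^*\sfa\succeq 0$, I may replace $(q-\rho^+)$ by $(q-\rho^+)_+$ in the last term.

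\textbf{Step 2: level-wise norm recursion.} Because $\cH$ is finite, each $\cF_t$ is finite dimensional. By \cref{item:annihilation} of \cref{lem:twisted-commutations}, $\sfa^*$ maps $\cF_t$ into $\cF_{t+1}$ and $\sfa$ maps $\cF_{t+1}$ into $\cF_t$. Set $c_t := \|\sfa^*|_{\cF_t}\|_{\op}^2$, with the convention $c_{-1}:=0$. For $\psi\in\cF_t$ we have $\Num\psi = t\psi$ and $\langle\psi,\sfa^*\sfa\psi\rangle = \|\sfa\psi\|^2\leq c_{t-1}\|\psi\|^2$, using $\|\sfa|_{\cF_t}\| = \|\sfa^*|_{\cF_{t-1}}\|$. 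The quadratic form of Step 1 is legitimate on $\cF_t$ by the definition of $\preceq$, so
\[c_t\leq 1 + (\rho^++\delta)t + (q-\rho^+)_+\,c_{t-1}.\]
Write $\theta := (q-\rho^+)_+ < 1$. An induction on $t$ then gives
\[c_t\leq B_t := \frac{1+(\rho^++\delta)t}{1-\theta}.\]
The inductive step uses that $B_{t-1}\leq B_t$, so that $1+(\rho^++\delta)t + \theta B_t = B_t$.

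\textbf{Step 3: path expansion.} I expand $\varphi(\sfx^{2\ell}) = \langle\vac,(\sfa+\sfa^*)^{2\ell}\vac\rangle$ as a sum over the $2^{2\ell}\leq 4^\ell$ words in $\{\sfa,\sfa^*\}$. Only words whose level sequence stays nonnegative and returns to $0$ contribute, and these visit only levels $\leq\ell$. Each step between levels $t$ and $t+1$ has operator norm at most $\sqrt{c_t}\leq\sqrt{B_\ell}$. So each word contributes at most $B_\ell^{\ell}$ in absolute value, and summing gives
\[\varphi(\sfx^{2\ell})\leq \bigl(4B_\ell\bigr)^\ell = \left(\frac{4(1+(\rho^++\delta)\ell)}{1-\theta}\right)^\ell.\]
In fact the steps only reach levels $\leq\ell-1$, which is slightly better than needed.

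\textbf{Main obstacle.} I expect the delicate point to be Step 1 feeding into Step 2. One must check that the operator inequality from \cref{prop:cE-dom} transfers through $\cL$ exactly as stated, including the identities $\cL(uu^*)=\sfa^*\sfa$ and $\cL(\Pi_{u^\perp}) = \Num-\sfa^*\sfa$. One must also handle the sign of $q-\rho^+$, which is done by discarding the term when it is negative. Everything after that is a routine induction and counting.
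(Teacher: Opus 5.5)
Your proof is correct and follows the paper's argument. You use the same approximate $q$-deformed relation $\sfa\sfa^*\preceq\Id+(q-\rho^+)\sfa^*\sfa+(\rho^++\delta)\Num$, obtained from \cref{prop:cE-dom} via \cref{lem:cL-properties}, and the same level-wise recursion for $\|\sfa^*|_{\cF_t}\|_{\op}^2$. The only cosmetic difference is how you get the factor $4^\ell$: you bound each of the at most $4^\ell$ contributing words by a product of level norms, whereas the paper uses $\varphi(\sfx^{2\ell})\le\|\Pi_\ell\sfx\Pi_\ell\|_{\op}^{2\ell}$ together with $\|\Pi_\ell\sfx\Pi_\ell\|_{\op}\le 2\max_{t<\ell}b_t$.
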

\begin{proof}
    Note that 
    \[\varphi(\sfx^{2\ell}) = \langle f_{\bm{0}}, \sfx^{2\ell}f_{\bm{0}}\rangle = \|\sfx^\ell f_{\bm{0}}\|_2^2.\]
    Let $\Pi_\ell$ be the orthogonal projection $\cF\to\bigoplus_{0\leq \ell'\leq\ell}\cF_{\ell'}$. Note that by \cref{item:annihilation} of \cref{lem:twisted-commutations} we have that $\sfx^\ell f_{\bm{0}} = (\Pi_\ell \sfx\Pi_\ell)^\ell f_{\bm{0}}$. Consequently, 
    \[\varphi(\sfx^{2\ell}) = \|(\Pi_\ell \sfx\Pi_\ell)^\ell f_{\bm{0}}\|_2^2\leq \|\Pi_\ell \sfx\Pi_\ell\|^{2\ell}_{\operatorname{op}}.\]
    Now, recall that $\sfx = \sfa + \sfa^*$, and $\sfa^*(\cF_t)\seq\cF_{t + 1}$ for all $t\geq 0$ (by \cref{item:annihilation} of \cref{lem:twisted-commutations}), to obtain that
    \[\|\Pi_\ell \sfx\Pi_\ell\|_{\operatorname{op}}\leq 2\max_{0\leq t < \ell}b_t,\]
    where $b_t:= \|\sfa^*\|_{\operatorname{op}, \cF_t\to\cF_{t + 1}}$ for $t\geq 0$ (with $b_{-1}:= 0$). Thus it suffices to show
    \[b_t\leq\sqrt{\frac{1 + (\rho^+ + \delta) t}{1 - (q - \rho^+)_+}}\]
    for all $t\geq  0$.

    Towards that end, note that 
    \begin{align}\label{eq:aa-dom}
\sfa\sfa^*
&= |\cH|^{-1}\sum_{S,T\in\cH} \sfa_S\sfa_T^* \notag\overset{\text{\cref{item:twisted-bosonic-relation} of
\cref{lem:twisted-commutations}}}{=}
|\cH|^{-1}\sum_{S,T\in\cH}
\bigl(\1(S=T)\Id+\eps_{S,T}\sfa_T^*\sfa_S\bigr)\notag\\
&= \Id
+|\cH|^{-1}\sum_{S,T\in\cH}\eps_{S,T}\sfa_T^*\sfa_S\notag= \Id+\cL(\cE)\notag\\
&\overset{\text{\cref{prop:cE-dom,lem:cL-properties}}}{\preceq}
\Id+(q+\delta)\cL(uu^*)
+(\rho^++\delta)\bigl(\Num-\cL(uu^*)\bigr)\notag\\
&= \Id+(q-\rho^+)\cL(uu^*)
+(\rho^++\delta)\Num\notag = \Id+(q-\rho^+)\sfa^*\sfa
+(\rho^++\delta)\Num.
\end{align}
    Now, note that for any $t\geq  0$, we have $(\sfa^*\sfa)(\cF_t)\seq\cF_t$, $(\sfa\sfa^*)(\cF_t)\seq\cF_t$, and $\Num\vert_{\cF_t} = t\cdot\Id_{\cF_t}$. Consequently, 
    \[\sfa\sfa^*\preceq \Id+(q-\rho^+)\sfa^*\sfa
+(\rho^++\delta)\Num\implies\|\sfa\sfa^*\|_{\operatorname{op}, \cF_t}\leq 1 + (q - \rho^+)_+\cdot\|\sfa^*\sfa\|_{\operatorname{op}, \cF_t} + (\rho^+ + \delta)t,\]
    where $\|\cdot\|_{\operatorname{op}, \cF_t}$ refers to the operator norm restricted to $\cF_t$. However, we also have $\|\sfa\sfa^*\|_{\operatorname{op}, \cF_t} = b_t^2, \|\sfa^*\sfa\|_{\operatorname{op}, \cF_t} = b_{t - 1}^2$, i.e. for all $t\geq  0$ we have 
    \[b_t^2\leq (q - \rho^+)_+\cdot b_{t - 1}^2 + 1 + (\rho^+ + \delta)t\] 
    \[\implies b_t^2\leq\sum_{s = 0}^t(q - \rho^+)_+^s\cdot(1 + (\rho^+ + \delta)(t - s))\leq\frac{1 + (\rho^+ + \delta) t}{1 - (q - \rho^+)_+}\implies b_t\leq\sqrt{\frac{1 + (\rho^+ + \delta) t}{1 - (q - \rho^+)_+}},\]
    as desired.
\end{proof}

\subsection{Towards a lower bound for the operator norm of $H^{\cH}_{\operatorname{SYK}}$: Spectral edge of the relevant Krylov space}
In this subsection we will construct a test function which achieves energy $\approx 2/\sqrt\nu\approx \sqrt{2n}/k$ against $\sfx$. See \cref{lem:x-large-eig} for a formal statement.

Note that by \cref{prop:cE-dom} we know that $\cE\succeq(q - \delta)uu^* - (\rho^- + \delta)(\Pi_{u^\perp})$. Recall from the proof of \cref{lem:varphi-spectral-dom} that $\sfa\sfa^* = \Id + \cL(\cE), \cL(uu^*) = \sfa^*\sfa$, and thus we have 
\begin{align}\label{eq:aalowerbound}
    \sfa\sfa^*\succeq\Id + (q - \delta)\sfa^*\sfa - (\rho^- + \delta)(\Num - \sfa^*\sfa) = \Id + (q + \rho^-)\sfa^*\sfa - (\rho^- + \delta)\cdot\Num.
\end{align}
Now define,
\begin{align}\label{eq:radial-vec-def}
\psi_t&:=(\sfa^*)^tf_{\bm{0}},\\
v_t&:= \psi_t/\|\psi_t\|_2,\\
\beta_t&:=\|\psi_{t + 1}\|_2/\|\psi_t\|_2.
\end{align}
Note that the state $f_{t\bm e_i}$\footnote{$t\bm e_i\in\Z_{\geq 0}^\cH$ refers to the tuple with $t$ particles on the $i^{\mathrm{th}}$ hyperedge.} of $\psi_t$ will have coefficient $\sqrt{t!/|\cH|^{t}} \neq 0$, so $\psi_t\neq 0$ for all $t \ge 0$.  Just by definition we have $\sfa^*v_t = \beta_tv_{t + 1},\beta_0^2 = 1$. As we will show soon, $\beta_t$ act as effective ``hopping amplitudes'' on the relevant Krylov space spanned by the $v_t$. Also note that $\psi_t, v_t\in\cF_t$. The states $\psi_t$ represent collective excitations of the twisted bosons.

\begin{lemma}[Lower bound on hopping amplitude]\label{lem:betam}
    Assume $0 \le q < 1$. Then 
    \[\beta_t^2\geq\frac{1 - q^{t + 1} - (\rho^- + \delta)t}{1 - q}.\]
\end{lemma}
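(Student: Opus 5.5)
We derive a recursion for $\beta_t^2$ from the lower bound \eqref{eq:aalowerbound}, evaluated in the radial state $v_t$, and then unroll it. Since $\psi_t\in\cF_t$ and $\Num|_{\cF_t}=t\Id$, pairing \eqref{eq:aalowerbound} with $v_t$ gives
\[\langle v_t,\sfa\sfa^*v_t\rangle\geq 1+(q+\rho^-)\langle v_t,\sfa^*\sfa v_t\rangle-(\rho^-+\delta)t.\]
The left side equals $\|\sfa^*v_t\|^2=\beta_t^2$, because $\sfa^*v_t=\beta_tv_{t+1}$ and $v_{t+1}$ is a unit vector. On the right we need $\langle v_t,\sfa^*\sfa v_t\rangle=\|\sfa v_t\|^2$. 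Since $\sfa v_t\in\cF_{t-1}$, Cauchy--Schwarz gives
\[\|\sfa v_t\|\geq|\langle v_{t-1},\sfa v_t\rangle|=|\langle\sfa^*v_{t-1},v_t\rangle|=\beta_{t-1}.\]
Hence, using $q+\rho^-\geq 0$ (as $q\geq 0$ by assumption), we obtain
\[\beta_t^2\geq 1+(q+\rho^-)\beta_{t-1}^2-(\rho^-+\delta)t\geq 1+q\beta_{t-1}^2-(\rho^-+\delta)t,\]
where the second step drops the term $\rho^-\beta_{t-1}^2\geq 0$. We set $\beta_{-1}:=0$, which is consistent with $\beta_0^2=1$.

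We then unroll this recursion using $q\geq 0$:
\[\beta_t^2\geq\sum_{s=0}^t q^s\bigl(1-(\rho^-+\delta)(t-s)\bigr)\geq\frac{1-q^{t+1}}{1-q}-(\rho^-+\delta)\,t\sum_{s=0}^{t}q^s.\]
The last sum is at most $1/(1-q)$, so $(\rho^-+\delta)t\sum_s q^s\leq(\rho^-+\delta)t/(1-q)$. This yields exactly the claimed bound $\bigl(1-q^{t+1}-(\rho^-+\delta)t\bigr)/(1-q)$. Note that when the unrolled terms become negative we still only subtract, so the inequality chain remains valid and no positivity assumption on intermediate bounds is needed. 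We do need $q\geq0$ when multiplying inequalities by $q$ during the induction, and that is given.

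The main subtlety is the Cauchy--Schwarz step. One should confirm that $\|\sfa v_t\|\geq\beta_{t-1}$ is the right direction: we lower bound $\sfa^*\sfa$ in the state $v_t$ via its overlap with $v_{t-1}$, and the positive coefficient $q+\rho^-$ then lets a lower bound on $\|\sfa v_t\|^2$ feed into a lower bound on $\beta_t^2$. The rest is bookkeeping.
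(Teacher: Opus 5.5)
Your proposal is correct and follows the paper's proof essentially step for step: you pair \eqref{eq:aalowerbound} with $v_t$, use $\|\sfa v_t\|^2\geq\beta_{t-1}^2$, and unroll $\beta_t^2\geq 1+q\beta_{t-1}^2-(\rho^-+\delta)t$ using $q\geq 0$. The paper obtains the middle step by writing $\sfa v_t=\beta_{t-1}v_{t-1}+z$ with $z\perp v_{t-1}$, which is equivalent to your Cauchy--Schwarz step, and it finishes with the same estimate $\sum_{s=0}^{t}q^s(t-s)\leq t/(1-q)$.
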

\begin{proof}
    Taking the expectation value of $\sfa\sfa^*$ on $v_t$, by \eqref{eq:aalowerbound} we have 
    \[\langle v_t, \sfa\sfa^*v_t\rangle\geq 1 + (q + \rho^-)\langle v_t, \sfa^*\sfa v_t\rangle - (\rho^- + \delta)\cdot\langle v_t, \Num v_t\rangle.\]
    Now, $\langle v_t, \sfa\sfa^*v_t\rangle = \|\sfa^*v_t\|^2 = \beta_t^2$, since $\|v_{t + 1}\|_2 = 1$. Also, $\langle v_t, \sfa^*\sfa v_t\rangle = \|\sfa v_t\|^2$. Since $v_t\in\cF_t$ we also have $\langle v_t, \Num v_t\rangle = t$. Consequently, 
    \[\beta_t^2\geq 1 + (q + \rho^-)\|\sfa v_t\|^2 - (\rho^- + \delta)t\geq  1 + q\| \sfa v_t\|^2 - (\rho^- + \delta)t.\]
    Now note that $\langle \sfa v_t, v_{t - 1}\rangle = \langle v_t, \sfa ^*v_{t - 1}\rangle = \beta_{t - 1}\langle v_t, v_t\rangle = \beta_{t - 1}$. Consequently, $\sfa v_t = \beta_{t - 1}v_{t - 1} + z$, where $\langle z, v_{t - 1}\rangle = 0$. In particular, $\|\sfa v_t\|^2 = \beta_{t - 1}^2 + \|z\|^2\geq\beta_{t - 1}^2$, and consequently, 
    \[\beta_t^2\geq 1 + q\beta_{t - 1}^2 - (\rho^- + \delta)t\implies\beta_t^2\geq\sum_{s = 0}^t q^s - (\rho^- + \delta)\sum_{r = 1}^t rq^{t - r}\] 
    \[\geq\frac{1 - q^{t + 1}}{1 - q} - (\rho^- + \delta)t\sum_{r = 0}^{t - 1}q^r\geq\frac{1 - q^{t + 1}}{1 - q} - \frac{(\rho^- + \delta)t}{1 - q},\]
    as desired.
\end{proof}

Define the parameters
\begin{equation}\label{eq:main-params}
    \nu := 1-q,\qquad B:= \sqrt{\nu/(\rho^- + \delta)},\qquad M:= \lfloor 1/\nu\rfloor.
\end{equation}

 \begin{definition}[Krylov Space]\label{defn:relevant-subspace}
 For parameters $t_0, M$ define the subspace (on the twisted Fock space)
\begin{equation}
    \cR_{t_0, M}:= \spn\{v_t: t_0\leq t < t_0 + M\}
\end{equation}
and $\Pi_{\cR_{t_0, M}}$ to be the orthogonal projector onto $\cR_{t_0, M}$. If $t_0, M$ are understood from context we will not mention it explicitly.
 \end{definition}
 The collective $\sfx$ operator projected onto $\cR$ resembles a one-dimensional ``hopping model'' with nearly-constant hopping. The spectral edge of this hopping model scales as $2/\sqrt{\nu}$.
\begin{lemma}[Spectral edge of hopping model]\label{lem:x-large-eig}
     Suppose $B\geq\kappa\geq\Omega_n(1)$ for some parameter $\kappa$. Write $t_0:= \lceil \kappa/\nu\rceil, L:= t_0 + M$. Then 
    \[\lambda_{\max}\left(\Pi_\cR \sfx \Pi_\cR\right)\geq\frac{2}{\sqrt{\nu}}\cdot\left(1 - O(1)\cdot\max\left\{\kappa^{-1}, \nu^2\right\}\right).\]
\end{lemma}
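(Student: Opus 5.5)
The plan is to show that $\Pi_\cR\sfx\Pi_\cR$, written in the orthonormal basis $\{v_t\}_{t_0\le t<t_0+M}$, is exactly a tridiagonal Jacobi matrix with zero diagonal and off-diagonal entries $\beta_t$. Then I lower bound $\lambda_{\max}$ by a Rayleigh quotient with the sine test vector of the path graph.

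\textbf{Orthonormality and matrix form.} Since $v_t\in\cF_t$ and the $\cF_t$ are mutually orthogonal, $\{v_t\}$ is orthonormal. Let $s,t$ lie in the window. By the grading in \cref{item:annihilation} of \cref{lem:twisted-commutations}, $\langle v_s,\sfx v_t\rangle$ vanishes unless $|s-t|=1$. For $s=t+1$ we have $\langle v_{t+1},\sfa^* v_t\rangle=\beta_t$ because $\sfa^*v_t=\beta_tv_{t+1}$. The term $\langle v_{t+1},\sfa v_t\rangle$ is zero by degree. By symmetry $\langle v_t,\sfx v_{t+1}\rangle=\beta_t$ as well. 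Hence $\Pi_\cR\sfx\Pi_\cR$ restricted to $\cR$ is the $M\times M$ Jacobi matrix $J$ with off-diagonals $\beta_{t_0},\dots,\beta_{t_0+M-2}$.

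\textbf{Bounding the hopping amplitudes.} For $t_0\le t<L=t_0+M$, \cref{lem:betam} gives
\[
\beta_t^2\ge \frac{1-q^{t+1}-(\rho^-+\delta)t}{\nu}.
\]
I treat the two error terms separately. First, $q^{t+1}\le (1-\nu)^{t_0}\le e^{-\kappa}$. Second, $(\rho^-+\delta)t\le (\rho^-+\delta)L$, and $\rho^-+\delta=\nu/B^2\le\nu/\kappa^2$. Since $L\le \kappa/\nu+1+1/\nu$, this gives $(\rho^-+\delta)L\le (\kappa+1+\nu)/\kappa^2=O(1/\kappa)$ when $\kappa\ge\Omega(1)$. (I assume $\nu\le 1$, which holds because $q\ge0$.) Therefore
\[
\min_t\beta_t\ge \frac{1-O(1/\kappa)}{\sqrt\nu},
\]
using $e^{-\kappa}\le O(1/\kappa)$.

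\textbf{Test vector.} Take $w_j=\sin\bigl((j+1)\pi/(M+1)\bigr)$ for $0\le j<M$. Its entries are nonnegative, and it is the top eigenvector of the path adjacency matrix $A_M$. Since the entries of $J$ dominate $\min_t\beta_t\cdot A_M$ entrywise and $w\ge0$,
\[
\frac{\langle w,Jw\rangle}{\|w\|^2}\ge \min_t\beta_t\cdot 2\cos\!\left(\frac{\pi}{M+1}\right)\ge \frac{2}{\sqrt\nu}\bigl(1-O(1/\kappa)\bigr)\bigl(1-O(1/M^2)\bigr).
\]
With $M=\lfloor1/\nu\rfloor$ we have $1/M^2=O(\nu^2)$ when $\nu\le1$, which gives the claimed $1-O(\max\{\kappa^{-1},\nu^2\})$.

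\textbf{Main obstacle.} The delicate step is the bookkeeping in the $(\rho^-+\delta)t$ error. It must be $O(1/\kappa)$ uniformly over the whole window, and this is exactly where $B\ge\kappa$ and the choices $t_0\approx\kappa/\nu$, $M\approx 1/\nu$ are balanced. A secondary issue is the edge case where $\nu$ is close to $1$, so that $M$ is tiny. There the bound is vacuous unless the constants are arranged properly, and I would absorb it into the $O(1)$ factor.
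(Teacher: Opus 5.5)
Your proposal is correct and follows the paper's proof essentially step for step. You identify $\Pi_\cR\sfx\Pi_\cR$ as the zero-diagonal Jacobi matrix with off-diagonal entries $\beta_t$, and you bound all hopping amplitudes uniformly via \cref{lem:betam}, using $q^{t+1}\le e^{-\kappa}$ and $(\rho^-+\delta)t\le(\nu/\kappa^2)\cdot O(\kappa/\nu)=O(1/\kappa)$. You then evaluate the Rayleigh quotient at the nonnegative sine eigenvector of the path graph, exactly as the paper does, and the small-$M$ edge case is harmless because $\cos x\ge 1-x^2/2$ keeps the inequality valid there.
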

\begin{proof}
    Note that for $t_0\leq t < t_0 + M$ we have $q^{t + 1}\leq e^{-\nu(t + 1)}\leq e^{-\kappa}, (\rho^- + \delta)t\leq (\rho^- + \delta)(\kappa + 2)/\nu$. Since $\kappa\geq\Omega(1)$ we have $(\rho^- + \delta)t\leq O(1)\cdot(\rho^- + \delta)\kappa/\nu\leq O(1)\cdot(\nu/\kappa^2)(\kappa/\nu)\leq O(1/\kappa)$, and thus $q^{t + 1} + (\rho^- + \delta)t\leq e^{-\kappa} + O(1/\kappa)\leq O(1/\kappa)$.
    
    Now, by the definition of $\sfx$ we have that 
\[\Pi_\cR \sfx\Pi_\cR = 
\begin{pmatrix}
0 & \beta_{t_0} & 0 & \cdots & 0 \\
\beta_{t_0} & 0 & \beta_{t_0+1} & \ddots & \vdots \\
0 & \beta_{t_0+1} & 0 & \ddots & 0 \\
\vdots & \ddots & \ddots & \ddots & \beta_{t_0+M-2} \\
0 & \cdots & 0 & \beta_{t_0+M-2} & 0
\end{pmatrix},
\]
where this matrix is expressed in the orthonormal basis $\{v_t\}_{t_0\leq t < t_0 + M}$ defining $\cR$.

Now write 
\[\psi:= \sum_{t = 0}^{M - 1}\sin\left(\frac{(t + 1)\pi}{M + 1}\right)\cdot v_{t_0 + t}\in\cR.\]
Since every entry of $\psi$ is non-negative, by \cref{lem:betam} we have 
\[\lambda_{\max}\left(\Pi_\cR \sfx\Pi_\cR\right)\geq\frac{\langle\psi,\Pi_\cR \sfx\Pi_\cR\psi\rangle}{\langle\psi, \psi\rangle}\geq \frac{1 - O(1/\kappa)}{\sqrt{\nu}}\cdot 2\cos\left(\frac{\pi}{M + 1}\right)\] 
\[\geq\frac{2}{\sqrt{\nu}}\cdot\left(1 - O(1)\cdot\max\left\{\kappa^{-1}, \nu^2\right\}\right),\]
as desired.\footnote{Note that $2\cos(\pi/(M + 1))$ is the largest eigenvalue of the adjacency matrix of the path graph on $M$ vertices, which is what we get once we replace all the $\beta_\ast$s with their lower bound. Note that we can do this comparison without worry since $\psi$ has non-negative coefficients for all $v_t$.}
\end{proof}
\begin{remark}
    \emph{A priori}, increasing $L$ improves the bound on the spectral edge. However, for subsequent applications of Gaussian Hypercontractivity (\cref{fact:gaussian-hypercontractivity}) to obtain the spectral lower bound in \cref{subsec:lowerbound}, $L$ will play the role of degree of a polynomial, which we require to be $n^{O(1)}$~(see~\cref{{thm:syk-spectral-norm-lower}}).
\end{remark}

\subsection{A lower bound for $\|H^{\cH}_{\operatorname{SYK}}\|_{\op}$: Explicit Mapping from Twisted Boson Space to Fermionic Space} \label{subsec:lowerbound}

We now transfer a lower bound on $\lambda_{\max}(\Pi_\cR\sfx\Pi_\cR)$ to a lower bound on $\E_{\bm g}\|H^{\cH}_{\operatorname{SYK}}\|_{\op}$. See \cref{lem:lower-bound-syk-general} for a formal statement.

Recall \cref{def:hermite-polynomial}, and recall that for any $\bm r\in\Z_{\geq 0}^\cH, \bm g\in\R^\cH$ we define $h_{\bm r}(\bm g):= \prod_{S\in\cH}h_{\bm r_S}(\bm g_S)$. Also fix some ordering $\cH = (S_1, \ldots, S_m)$ (where $m = |\cH|$) and define $\Gamma^{\bm r}:= \Gamma_{S_1}^{\bm r_1}\cdots\Gamma_{S_m}^{\bm r_m}$.

Now define the linear map $U:\cF\to\cK(\R^\cH, \C^{N\times N})$ given as 
\begin{equation}\label{eq:isometry-def}
    \left(Uf_{\bm r}\right)(\bm g):= h_{\bm r}(\bm g)\cdot\Gamma^{\bm r}.
\end{equation}
We first show that this map is an isometric embedding of the twisted bosons into fermionic space. This isometry, for usual bosons, is an instance of the \emph{Wiener-It\^o chaos decomposition} \cite{Wiener38,Ito51}. Incorporating the twists into this decomposition is fairly straightforward though, as shown below:
\begin{proposition}[Isometric Embedding]\label{prop:isometry}
    For any $\psi, \psi'\in\cF$, $\langle\psi, \psi'\rangle_\cF = \langle U\psi, U\psi'\rangle_{\cK}$.
\end{proposition}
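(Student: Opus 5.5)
By sesquilinearity of both inner products and linearity of $U$, it suffices to verify the identity on basis vectors. Concretely, we will show that for all $\bm r,\bm s\in\Z_{\geq 0}^\cH$,
\[
\langle Uf_{\bm r},Uf_{\bm s}\rangle_\cK=\1(\bm r=\bm s).
\]
Both $\psi$ and $\psi'$ are finite linear combinations of the $f_{\bm r}$, since $\cF$ is defined as an algebraic span. So no convergence issues arise, and expanding bilinearly finishes the proof once the basis identity holds.

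Unfolding the definitions gives
\[
\langle Uf_{\bm r},Uf_{\bm s}\rangle_\cK=\E_{\bm g}\Bigl[\overline{h_{\bm r}(\bm g)}\,h_{\bm s}(\bm g)\Bigr]\cdot\tr\bigl((\Gamma^{\bm r})^*\Gamma^{\bm s}\bigr).
\]
The scalar Hermite factors pull out of the trace because they do not depend on the matrix part. The Hermite polynomials have real coefficients and $\bm g$ is real, so the conjugation is harmless. By \eqref{eq:tuple-hermite-ortho}, the Gaussian expectation equals $\1(\bm r=\bm s)$. Hence the whole expression vanishes when $\bm r\neq\bm s$, and the matrix factor never needs to be analysed in that case. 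This sidesteps the trace $\tr(\Gamma_S\Gamma_T)$, which can be nonzero in degenerate situations. One example is when $\cH$ has repeated hyperedges, as in the random model; the Gaussian factor alone handles this.

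When $\bm r=\bm s$, it remains to check that $\tr\bigl((\Gamma^{\bm r})^*\Gamma^{\bm r}\bigr)=1$. Each $\Gamma_S$ is a unitary Hermitian operator, as noted in the Preliminaries. Therefore $\Gamma^{\bm r}=\Gamma_{S_1}^{\bm r_1}\cdots\Gamma_{S_m}^{\bm r_m}$ is a product of unitaries, hence unitary, so $(\Gamma^{\bm r})^*\Gamma^{\bm r}=\Id$ and the normalized trace is $1$. The fixed ordering of $\cH$ in the definition of $\Gamma^{\bm r}$ only matters for later intertwining relations, not here.

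I do not expect a genuine obstacle. The only point needing care is the order of the argument: the orthogonality must come from the Gaussian factor, not from the fermionic trace. Since the twisting only introduces unitary factors, it cannot spoil the norm. The bookkeeping of conjugate-linearity in the first slot of $\langle\cdot,\cdot\rangle_\cK$ is routine.
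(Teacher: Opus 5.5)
Your proposal is correct and matches the paper's proof. Both arguments reduce to basis vectors and use Hermite orthonormality \eqref{eq:tuple-hermite-ortho} to get the factor $\1(\bm r=\bm s)$. Both then note that $\tr\bigl((\Gamma^{\bm r})^*\Gamma^{\bm r}\bigr)=1$: the paper cites $\Gamma_S^2=\Id$, and you cite unitarity, which is the same fact since each $\Gamma_S$ is Hermitian.
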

\begin{proof}
    Observe that 
\[\langle Uf_{\bm r}, Uf_{\bm s}\rangle_{\cK} = \E_{\bm g}\langle h_{\bm r}(\bm g)\Gamma^{\bm r}, h_{\bm s}(\bm g)\Gamma^{\bm s}\rangle\overset{\text{\cref{eq:tuple-hermite-ortho}}}{=}\1(\bm r = \bm s)\cdot\langle\Gamma^{\bm r}, \Gamma^{\bm s}\rangle = \1(\bm r = \bm s).\]
Here we use the fact that $\Gamma_S^2 = \Id$ for all non-empty $S\seq[n]$. Since $U$ is a linear map which maps an orthonormal basis of $\cF$ to an orthonormal set in $\cK$, we are done.
\end{proof}
Also, we define the Hamiltonian $H(\bm g)\in\cK(\R^\cH;\C^{N\times N})$ with coefficients given by $\bm g$:
\[H(\bm g):= \frac{1}{\sqrt{m}}\sum_{S\in\cH}\bm g_S\Gamma_S.\]
Finally define the left multiplication operator $\cM :\cK(\R^\cH;\C^{N\times N})\to\cK(\R^\cH;\C^{N\times N})$ given as
\[(\cM F)(\bm g):= H(\bm g)\cdot F(\bm g).\]
Note that $\cM $ is linear. We now show that the embedding $U$ as defined in \cref{eq:isometry-def} satisfies an intertwining property with respect to the Hamiltonian $H(\bm g)$. This intertwining property essentially follows from the fact that Hermite polynomials and creation-annihilation operators in bosonic spaces satisfy the same recurrence; Indeed, Hermite polynomials satisfy the recurrence $g_ih_{\bm r}(\bm g) = \sqrt{\bm r_i + 1}h_{\bm r + e_i}(\bm g) + \sqrt{\bm r_i}h_{\bm r - e_i}(\bm g)$, while bosonic creation-annihilation operators satisfy the recurrence $(\sfb_i + \sfb^*_i)f_{\bm r} = \sqrt{\bm r_i + 1}f_{\bm r + e_i} + \sqrt{\bm r_i}f_{\bm r - e_i}$. This correspondence is known as the Wiener-It\^o-Segal transform \cite{Wiener38,Ito51,Segal56}. Note that while the Wiener-It\^o-Segal transform is stated only for usual bosons, incorporating the twists in our twisted bosonic system into the transform is fairly straightforward, as shown below. For a modern treatment of both the Wiener-It\^o chaos decomposition stated earlier, and the Wiener-It\^o-Segal transform, we refer the reader to the textbook of Janson \cite{Janson97}. For a physics perspective, see \cite[Chapter~1]{Simons74}.
\begin{lemma}[Intertwining]\label{lem:intertwining}
    We have $\cM U = U\sfx$ as linear maps $\cF\to\cK(\R^\cH, \C^{N\times N})$.
\end{lemma}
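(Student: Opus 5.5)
The plan is to verify the identity $\cM U = U\sfx$ on the orthonormal basis $\{f_{\bm r}\}$ and extend by linearity. Both sides are sums over hyperedges: $\sfx = |\cH|^{-1/2}\sum_{i}(\sfa_i + \sfa_i^*)$ and $\cM$ is left multiplication by $H(\bm g) = m^{-1/2}\sum_i \bm g_i\Gamma_{S_i}$. It therefore suffices to prove, for every index $i\in[m]$ and every $\bm r\in\Z_{\geq 0}^\cH$, the single-mode identity
\[
\bm g_i\,\Gamma_{S_i}\cdot h_{\bm r}(\bm g)\,\Gamma^{\bm r} \;=\; \bigl(U(\sfa_i+\sfa_i^*)f_{\bm r}\bigr)(\bm g).
\]
Here the fixed ordering $(S_1,\ldots,S_m)$ used for $\Gamma^{\bm r}$ must be the same ordering used to define the twists $\sfK_i$ in \cref{def:twisted-bosonic}.

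\textbf{Right-hand side.} By \cref{def:twisted-bosonic}, $\sfa_i f_{\bm r} = \sfK_i\sfb_i f_{\bm r} = \sqrt{\bm r_i}\,\sfK_i f_{\bm r - e_i}$. The operator $\sfK_i = \prod_{j<i}\eps_{ij}^{\Num_j}$ only reads occupations $\bm r_j$ with $j<i$, so changing $\bm r_i$ does not affect it. Writing $\sigma_i(\bm r) := \prod_{j<i}\eps_{ij}^{\bm r_j}$, we get
\[
\sfa_i f_{\bm r} = \sigma_i(\bm r)\sqrt{\bm r_i}\, f_{\bm r-e_i}, \qquad \sfa_i^* f_{\bm r} = \sigma_i(\bm r)\sqrt{\bm r_i+1}\, f_{\bm r+e_i}.
\]
When $\bm r_i = 0$ the first term vanishes, which is consistent with $h_{-1}=0$. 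Applying $U$ gives
\[
\sigma_i(\bm r)\bigl(\sqrt{\bm r_i}\,h_{\bm r-e_i}\Gamma^{\bm r-e_i} + \sqrt{\bm r_i+1}\,h_{\bm r+e_i}\Gamma^{\bm r+e_i}\bigr).
\]
Since $\Gamma_{S_i}^2 = \Id$, we have $\Gamma_{S_i}^{\bm r_i-1} = \Gamma_{S_i}^{\bm r_i+1}$. Hence $\Gamma^{\bm r - e_i} = \Gamma^{\bm r+e_i}$, and this common matrix factors out.

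\textbf{Left-hand side.} In $\Gamma_{S_i}\Gamma^{\bm r} = \Gamma_{S_i}\Gamma_{S_1}^{\bm r_1}\cdots\Gamma_{S_m}^{\bm r_m}$, commute $\Gamma_{S_i}$ rightward past each block $\Gamma_{S_j}^{\bm r_j}$ with $j<i$. By \eqref{eq:comm-rel} this produces exactly the factor $\sigma_i(\bm r) = \prod_{j<i}\eps_{ij}^{\bm r_j}$, and leaves $\Gamma_{S_i}\Gamma_{S_i}^{\bm r_i} = \Gamma_{S_i}^{\bm r_i+1}$. So $\Gamma_{S_i}\Gamma^{\bm r} = \sigma_i(\bm r)\Gamma^{\bm r+e_i}$.

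For the scalar part, use the univariate recurrence from \cref{def:hermite-polynomial} in coordinate $i$ (the other coordinates are untouched):
\[
\bm g_i h_{\bm r}(\bm g) = \sqrt{\bm r_i+1}\,h_{\bm r+e_i}(\bm g) + \sqrt{\bm r_i}\,h_{\bm r-e_i}(\bm g),
\]
which also holds at $\bm r_i = 0$ because $h_1(X) = X$. Combining these, the left-hand side equals the right-hand side term by term.

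The only delicate point is the sign bookkeeping: the twist $\sfK_i$ must reproduce exactly the sign from commuting $\Gamma_{S_i}$ past the earlier factors in the ordered product $\Gamma^{\bm r}$. This works because both the twist and $\Gamma^{\bm r}$ are defined with respect to the same ordering of $\cH$, and because the twist ignores mode $i$ itself, consistent with $\eps_{ii}=1$. The mismatch between the exponents $\bm r_i\pm1$ is absorbed by $\Gamma_{S_i}^2=\Id$.
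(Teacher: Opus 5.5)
Your proposal is correct and follows the paper's proof essentially step for step: reduce to basis vectors $f_{\bm r}$ and a single mode $i$, obtain $\Gamma_{S_i}\Gamma^{\bm r}=\sigma_i(\bm r)\Gamma^{\bm r+e_i}$ by commuting past the blocks with $j<i$, apply the Hermite recurrence in coordinate $i$, and use $\Gamma_{S_i}^2=\Id$ to identify $\Gamma^{\bm r-e_i}$ with $\Gamma^{\bm r+e_i}$. You also spell out two things the paper leaves implicit: that $\sfa_i,\sfa_i^*$ act on $f_{\bm r}$ with sign $\sigma_i(\bm r)$ (the paper just says this ``follows from the definitions''), and that the ordering defining $\Gamma^{\bm r}$ must match the ordering defining $\sfK_i$.
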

\begin{proof}
    It suffices to show that $\cM Uf_{\bm r} = U\sfx f_{\bm r}$ for all $\bm r\in\Z_{\geq 0}^\cH$. Towards that end note that 
    \[(\cM Uf_{\bm r})(\bm g) = \frac{1}{\sqrt{m}}\left(\sum_{i = 1}^m\bm g_{S_i}\Gamma_{S_i}\right)\cdot h_{\bm r}(\bm g)\Gamma^{\bm r} = \frac{1}{\sqrt{m}}\sum_{i = 1}^m\bm g_{S_i}h_{\bm r}(\bm g)\Gamma_{S_i}\Gamma^{\bm r}.\]
    Using the commutation relations $\Gamma_S\Gamma_T = \eps_{S, T}\Gamma_T\Gamma_S$ repeatedly we obtain $\Gamma_{S_i}\Gamma^{\bm r} = \sigma_i(\bm r)\Gamma^{\bm r + e_i}$, where $\sigma_i(\bm r) \equiv \prod_{j<i} \eps_{ij}^{\bm r_j}$ is the twist phase. Furthermore, by \cref{def:hermite-polynomial}, we have $g_ih_{\bm r}(\bm g) = \sqrt{\bm r_i + 1}h_{\bm r + e_i}(\bm g) + \sqrt{\bm r_i}h_{\bm r - e_i}(\bm g)$. Thus 
    \[(\cM Uf_{\bm r})(\bm g) = \frac{1}{\sqrt{m}}\sum_{i = 1}^m\bm \left(\sqrt{\bm r_i + 1}h_{\bm r + e_i}(\bm g) + \sqrt{\bm r_i}h_{\bm r - e_i}(\bm g)\right)\cdot\sf\sigma_i(\bm r)\Gamma^{\bm r + e_i}.\]
    Now note that since $\Gamma_S^2 = \Id$, we have $\Gamma^{\bm r + e_i} = \Gamma^{\bm r - e_i}$ when $\bm r_i\geq 1$. Thus 
    \[(\cM Uf_{\bm r})(\bm g) = \frac{1}{\sqrt{m}}\sum_{i = 1}^m \sigma_i(\bm r) \cdot \left(\sqrt{\bm r_i + 1}\cdot(Uf_{\bm r + e_i})(\bm g) + \sqrt{\bm r_i}\cdot(Uf_{\bm r - e_i})(\bm g)\right)\]
    \[ = \frac{1}{\sqrt{m}}\sum_{i = 1}^m\left(U\left(\sqrt{\bm r_i + 1}\sigma_i(\bm r)\cdot f_{\bm r + e_i})\right)(\bm g) + U\left(\sqrt{\bm r_i}\sigma_i(\bm r)\cdot f_{\bm r - e_i})\right)(\bm g)\right)\]
    \[\implies \cM Uf_{\bm r} = U\cdot \frac{1}{\sqrt{m}}\sum_{i = 1}^m\left(\sqrt{\bm r_i + 1}\sigma_i(\bm r)\cdot f_{\bm r + e_i} + \sqrt{\bm r_i}\sigma_i(\bm r)\cdot f_{\bm r - e_i}\right).\]
    Thus it suffices to show that $\sqrt{\bm r_i + 1}\sigma_i(\bm r)\cdot f_{\bm r + e_i} + \sqrt{\bm r_i}\sigma_i(\bm r)\cdot f_{\bm r - e_i} = (\sfa_{S_i} + \sfa^*_{S_i}) f_{\bm r}$. But this follows from the definitions.
\end{proof}

We now use the multiplication operator to yield a lower bound for $\|H(\bm g)\|_{\operatorname{op}}$ multiplied by a ``tilt'' $w(\bm g)$. Subsequently in an upcoming Lemma we use Gaussian hypercontractivity to decouple the tilt $w(\bm g)$ from $\|H(\bm g)\|_{\operatorname{op}}$ to yield a lower bound just for $\E_{\bm g\sim\cN(0, \Id_m)}\|H(\bm g)\|_{\operatorname{op}}$. The technique of decoupling a product using hypercontractivity is a standard method in the analysis of boolean functions, see \cite{ODonnell21} for further perspectives.
\begin{lemma}[Operator Norm Witness] \label{lem:transfer-virtual-to-real}
    Let $\cR$ be any space of the kind defined in \cref{defn:relevant-subspace}. Then there exists $F\in\cK(\R^\cH, \C^{N\times N})$ of degree $\leq L$ such that 
    \[\E_{\bm g\sim\cN(0, \Id_m)}\left[\|H(\bm g)\|_{\operatorname{op}}\cdot w(\bm g)\right]\geq \lambda_{\max}(\Pi_\cR \sfx\Pi_\cR),\]
    where:
    \begin{enumerate}[(1)]
        \item $w(\bm g) := \|F(\bm g)\|_{\operatorname{HS}}^2$,
        \item More explicitly, we can choose $F = U\psi$, where $\psi\in\cR$ is a unit vector for which $\langle\psi, \sfx\psi\rangle_{\cF} = \lambda_{\max}(\Pi_\cR \sfx\Pi_\cR)$, 
        \item $\E_{\bm g}w(\bm g) = 1$.
    \end{enumerate}
\end{lemma}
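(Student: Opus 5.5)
We take $\psi\in\cR$ to be a unit top eigenvector of $\Pi_\cR\sfx\Pi_\cR$, so that $\langle\psi,\sfx\psi\rangle_\cF=\lambda_{\max}(\Pi_\cR\sfx\Pi_\cR)$, and set $F:=U\psi$. Since $\cR\seq\bigoplus_{t<L}\cF_t$ and each $Uf_{\bm r}=h_{\bm r}(\bm g)\Gamma^{\bm r}$ has degree $|\bm r|$, the polynomial $F$ has degree $<L$. Item (3) is then immediate from \cref{prop:isometry}:
\[\E_{\bm g}w(\bm g)=\E_{\bm g}\tr(F(\bm g)^*F(\bm g))=\langle U\psi,U\psi\rangle_\cK=\langle\psi,\psi\rangle_\cF=1.\]

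For the main inequality we combine the isometry with \cref{lem:intertwining}. The intertwining gives $\cM U\psi=U\sfx\psi$, and applying \cref{prop:isometry} (polarized, since it holds for all pairs $\psi,\psi'$) we get
\[\langle\psi,\sfx\psi\rangle_\cF=\langle U\psi,U\sfx\psi\rangle_\cK=\langle F,\cM F\rangle_\cK=\E_{\bm g}\tr\bigl(F(\bm g)^*H(\bm g)F(\bm g)\bigr).\]
One technical point: $\sfx\psi$ may leave $\cR$, but it stays in $\cF$, which is all the isometry requires.

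Pointwise in $\bm g$, $H(\bm g)$ is Hermitian. Writing $FF^*\succeq0$, we have
\[\tr(F^*HF)=\tr(HFF^*)\le\|H\|_{\op}\tr(FF^*)=\|H(\bm g)\|_{\op}\,\|F(\bm g)\|_{\operatorname{HS}}^2 .\]
Taking expectations yields
\[\lambda_{\max}(\Pi_\cR\sfx\Pi_\cR)=\langle\psi,\sfx\psi\rangle\le\E_{\bm g}\bigl[\|H(\bm g)\|_{\op}w(\bm g)\bigr].\]
If we want $\lambda_{\max}(H)$ in place of $\|H\|_{\op}$, we use instead $\tr(HFF^*)\le\lambda_{\max}(H)\tr(FF^*)$, which holds since $FF^*\succeq0$.

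There is no real obstacle here. The only care needed is to confirm that the inner product on $\cK$ matches the normalized trace used in \cref{prop:isometry}, and that $\langle\psi,\sfx\psi\rangle$ equals the top eigenvalue of the compression. The latter holds because $\psi\in\cR$ gives $\langle\psi,\sfx\psi\rangle=\langle\psi,\Pi_\cR\sfx\Pi_\cR\psi\rangle$.
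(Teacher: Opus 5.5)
Your proof is correct and follows the paper's argument essentially step for step: you use the isometry together with the intertwining relation to write $\langle\psi,\sfx\psi\rangle_\cF=\E_{\bm g}\tr\bigl(F(\bm g)^*H(\bm g)F(\bm g)\bigr)$, and then apply a pointwise trace bound. For that bound you use cyclicity with $FF^*\succeq 0$, whereas the paper uses the equivalent congruence $F^*HF\preceq\|H\|_{\op}F^*F$; the degree bound and $\E_{\bm g}w(\bm g)=1$ are handled identically.
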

\begin{proof}
    By \cref{prop:isometry} we have $\langle\psi, \sfx\psi\rangle_{\cF} = \langle U\psi, U\sfx\psi\rangle_{\cK}$. By \cref{lem:intertwining} we have $\langle U\psi, U\sfx\psi\rangle_{\cK} = \langle U\psi, \cM U\psi\rangle_{\cK}$. Write $F:= U\psi$. Then 
    \[\langle U\psi, \cM U\psi\rangle_{\cK} = \langle F, \cM F\rangle_{\cK} = \E_{\bm g\sim\cN(0, \Id_m)}\tr(F(g)^*(\cM F)(g)) = \E_{\bm g\sim\cN(0, \Id_m)}\tr(F(\bm g)^*H(\bm g)F(\bm g)).\]
    Since $H(\bm g)\preceq\|H(\bm g)\|_{\operatorname{op}}\cdot\Id$, we have $F(\bm g)^*H(\bm g)F(\bm g)\preceq\|H(\bm g)\|_{\operatorname{op}}\cdot F(\bm g)^*F(\bm g)$, and thus $\tr(F(\bm g)^*H(\bm g)F(\bm g))\leq \|H(\bm g)\|_{\operatorname{op}}\cdot \tr(F(\bm g)^*F(\bm g)) = \|H(\bm g)\|_{\operatorname{op}}\cdot w(\bm g)$. Finally, note that since $\psi\in\bigoplus_{t\leq L}\cF_t$, $F\in\cK$ has degree $\leq L$ by the definition of $U$. Furthermore, $\E_{\bm g}w(\bm g) = \langle F, F\rangle_\cK = \langle\psi,\psi\rangle_\cF = 1$; thus we are done.
\end{proof}

We now show that the Hamiltonian $H(\bm g)$ is Lipschitz in $\bm g$. This has already been shown in~\cite{anschuetzStronglyInteractingFermions2025, feng2020spectrum}, where further connections to concepts such as the commutator index and the Lov\'asz theta function of the commutator graph have been drawn. For our purposes the following simple spectral argument will suffice.
\begin{lemma}[Lipschitzness]\label{lem:lipschitz-operator}
    For every $\bm g, \bm g'\in\R^\cH$ we have 
    \[\|H(\bm g) - H(\bm g')\|_{\operatorname{op}}\leq\sqrt{\frac{\rho^+ + \delta}{1 - q}}\cdot\|\bm g - \bm g'\|_2,\]
    where $q, \rho^+, \delta$ are as in \cref{eq:spectral-notation-def}, assuming $q < 1$.
\end{lemma}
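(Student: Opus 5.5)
We can take $c:=\bm g-\bm g'\in\R^\cH$. Then $H(\bm g)-H(\bm g')=m^{-1/2}A$ with $A:=\sum_{S\in\cH}c_S\Gamma_S$, so it suffices to prove
\[\|A\|_{\operatorname{op}}^2\leq\frac{m(\rho^++\delta)}{1-q}\,\|c\|_2^2 .\]
The plan is to control $\|Av\|^2$ for an arbitrary unit vector $v\in\C^N$ through a Gram matrix. We then compare that Gram matrix to the sign matrix $\cE$ using Schur (Hadamard) products together with \cref{prop:cE-dom}.

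Fix a unit $v$ and define $G\in\R^{\cH\times\cH}$ by $G_{S,T}:=\Ree\langle\Gamma_Sv,\Gamma_Tv\rangle=\tfrac12\langle v,\{\Gamma_S,\Gamma_T\}v\rangle$. Since $c$ is real, $\|Av\|^2=c^\top Gc$. The matrix $G$ is PSD, being the real part of a Gram matrix. Its diagonal is $1$ because $\Gamma_S^2=\Id$.

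The key observation uses \eqref{eq:comm-rel}. If $\eps_{S,T}=-1$ then $\{\Gamma_S,\Gamma_T\}=0$, so $G_{S,T}=0$. If $\eps_{S,T}=1$ nothing changes. Hence $G_{S,T}=\eps_{S,T}G_{S,T}$ entrywise, that is,
\[G=G\circ(m\cE).\]
By the Schur product theorem, $X\mapsto G\circ X$ is monotone in the PSD order when $G\succeq0$. So the upper bound in \cref{prop:cE-dom} gives
\[G\preceq m(q+\delta)\,G\circ(uu^*)+m(\rho^++\delta)\,G\circ\Pi_{u^\perp}.\]
Now use $m\,uu^*=J$ (the all-ones matrix), so that $m\,G\circ(uu^*)=G$. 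Also $G\circ\Id=\Id$ because $G$ has unit diagonal, hence $m\,G\circ\Pi_{u^\perp}=m\,G\circ\Id-G=m\Id-G$. Substituting,
\[G\preceq (q+\delta)G+(\rho^++\delta)(m\Id-G)=(q-\rho^+)G+m(\rho^++\delta)\Id,\]
which rearranges to $(1-q+\rho^+)\,G\preceq m(\rho^++\delta)\Id$.

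Since $q<1$ and $\rho^+\geq0$, the coefficient $1-q+\rho^+$ is positive. Therefore
\[\lambda_{\max}(G)\leq\frac{m(\rho^++\delta)}{1-q+\rho^+}\leq\frac{m(\rho^++\delta)}{1-q},\]
and so $\|Av\|^2\leq\frac{m(\rho^++\delta)}{1-q}\|c\|_2^2$ uniformly in $v$. Taking the supremum over $v$ and multiplying by $m^{-1/2}$ yields the claimed Lipschitz bound.

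The one nontrivial step is recognizing the Hadamard identity $G=G\circ(m\cE)$ and then pushing the PSD sandwich of \cref{prop:cE-dom} through the Schur product. Everything else is bookkeeping with $J$ and the unit diagonal of $G$.
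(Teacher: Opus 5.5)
Your proof is correct and is essentially the paper's argument. The paper fixes $\psi$ and forms $X_{S,T}=\bm g_S\bm g_T\Ree\langle\Gamma_S\psi,\Gamma_T\psi\rangle$, which is your $\operatorname{diag}(c)\,G\,\operatorname{diag}(c)$, and then uses $\Tr(\cE X)=\Tr(uu^*X)=\|H\psi\|^2$, $\Tr(X)=\|\bm g\|^2$ and $\Tr(\cE X)\le\Tr\bigl(((q+\delta)uu^*+(\rho^++\delta)\Pi_{u^\perp})X\bigr)$; this is exactly your Schur-product monotonicity tested against $c$, and you merely package it as the $c$-independent inequality $(1-q+\rho^+)G\preceq m(\rho^++\delta)\Id$.
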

\begin{proof}
    Consider an arbitrary $\bm g\in\R^\cH, \bm g\neq 0$. Since $H(\bm g) - H(\bm g') = H(\bm g - \bm g')$, it suffices to show that $\|H(\bm g)\|_{\operatorname{op}}\leq\sqrt{\frac{\rho^+ + \delta}{1 - q}}\cdot\|\bm g\|_2$. Let $\psi\in\C^{N}$ be an arbitrary unit vector, and define the matrix $X_\psi:= X\in\R^{\cH\times\cH}$ defined as $X_{S, T}:= \bm g_S\bm g_T\Ree\langle\Gamma_S\psi, \Gamma_T\psi\rangle$. Note that for any $v\in\R^\cH$ we have $\langle v, Xv\rangle = \|\sum_S\bm g_Sv_S\Gamma_S\psi\|^2\geq 0$ and thus $X\succeq 0$. Furthermore, $\Tr(X) = \sum_S\bm g_S^2\|\Gamma_S\psi\|^2 = \|\bm g\|^2$, where $\|\Gamma_S\psi\|^2 = \|\psi\|^2 = 1$, since $\Gamma_S$ is unitary.\footnote{Recall that we fixed an unitary representation for $\{\gamma_i\}_{1\leq i\leq n}$ into $\C^{N\times N}$.} Finally, note that 
    \[\|H(\bm g)\psi\|^2 = \frac{1}{m}\left\langle\sum_{S\in\cH}\bm g_S\Gamma_S\psi, \sum_{S\in\cH}\bm g_S\Gamma_S\psi\right\rangle = \frac{1}{m}\sum_{S, T\in\cH}\bm g_S\bm g_T\langle\Gamma_S\psi, \Gamma_T\psi\rangle\] 
    \[\implies\|H(\bm g)\psi\|^2 = \Ree\|H(\bm g)\psi\|^2 = \frac{1}{m}\sum_{S, T\in\cH}\bm g_S\bm g_T\Ree\langle\Gamma_S\psi, \Gamma_T\psi\rangle = \frac{1}{m}\sum_{S, T\in\cH}X_{S, T} = \langle u, Xu\rangle,\]
    where recall that $u = u^\cH:= \1/\sqrt{m}\in\R^\cH$. 

    Now, note that if $\eps_{S, T} = -1$, i.e. $\Gamma_S\Gamma_T = -\Gamma_T\Gamma_S$, then $2\Ree\langle\Gamma_S\psi, \Gamma_T\psi\rangle = \langle\Gamma_S\psi, \Gamma_T\psi\rangle + \langle\Gamma_T\psi, \Gamma_S\psi\rangle = \langle\psi, \Gamma_S\Gamma_T\psi\rangle + \langle\psi, \Gamma_T\Gamma_S\psi\rangle = 0$, i.e. $\eps_{S, T} = -1\implies X_{S, T} = 0$. Consequently, 
    \[\Tr(\cE X) = \frac{1}{m}\sum_{S, T}\eps_{S, T}X_{S, T} = \frac{1}{m}\sum_{S, T}X_{S, T} = \|H(\bm g)\psi\|^2.\]
    On the other hand, by \cref{prop:cE-dom} we have $\cE\preceq (q + \delta)uu^* + (\rho^+ + \delta)(\Pi_{u^\perp})$. Since $X\succeq 0$, we have\footnote{Note that if $X\succeq 0$ and $A\succeq B$, then $\Tr(BX)\leq\Tr(AX)$, since the product of two PSD matrices has non-negative trace.} 
    \[\Tr(\cE X)\leq\Tr\left(\left(q + \delta)uu^* + (\rho^+ + \delta)(\Pi_{u^\perp})\right)X\right).\]
    But $\Tr(uu^*X) = \langle u, Xu\rangle = \frac{1}{m}\sum_{S, T\in\cH}X_{S, T} = \Tr(\cE X)$. Consequently, 
    \[\Tr(\cE X)\leq(q + \delta)\Tr(\cE X) + (\rho^+ + \delta)(\Tr(X) - \Tr(\cE X)) = (q - \rho^+)\Tr(\cE X) + (\rho^+ + \delta)\Tr(X)\]
    \[\implies\|H(\bm g)\psi\|^2 = \Tr(\cE X)\leq \frac{\rho^+ + \delta}{1 - q + \rho^+}\cdot\|\bm g\|^2\implies \|H(\bm g)\psi\|\leq\sqrt{\frac{\rho^+ + \delta}{1 - q}}\cdot\|\bm g\|,\]
    as desired.
\end{proof}

We now combine these results to obtain a lower bound on the operator norm of the SYK Hamiltonian in terms of the spectral edge of the twisted boson operator $\sfx = \sfa + \sfa^*$ on the relevant Krylov subspace $\cR$.
\begin{lemma}[Lower Bounds on Operator Norm of the SYK Hamiltonian]\label{lem:lower-bound-syk-general}
    Let $n\geq k\geq 2$ be even integers, and let $\alpha:= k^2/n < 1/16$. Consider the SYK Hamiltonian $H^\cH_{\operatorname{SYK}}$ (from \cref{eq:syk-def}) for the hypergraph $\cH$. Let $\cR$ be any space of the kind defined in \cref{defn:relevant-subspace}. Then
    \[\E\|H^\cH_{\operatorname{SYK}}\|_{\operatorname{op}}\geq \lambda_{\max}(\Pi_\cR\sfx\Pi_\cR) - O(1)\cdot \sqrt{\frac{L(\rho^+ + \delta)}{1 - q}}.\]
\end{lemma}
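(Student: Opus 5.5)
We combine \cref{lem:transfer-virtual-to-real} with the Lipschitz bound of \cref{lem:lipschitz-operator} and Gaussian concentration, and use hypercontractivity to decouple the tilt. Write $Z(\bm g):= \|H(\bm g)\|_{\operatorname{op}}$, $\mu:= \E_{\bm g} Z$, and $\sigma:= \sqrt{(\rho^+ + \delta)/(1 - q)}$. By \cref{lem:lipschitz-operator}, $Z$ is $\sigma$-Lipschitz, since $|\|H(\bm g)\| - \|H(\bm g')\||\leq\|H(\bm g - \bm g')\|$. Hence by \cref{fact:gaussian-lipschitz-conc}, $\|Z - \mu\|_p\leq O(\sigma\sqrt p)$ for all $p\geq 2$.

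By \cref{lem:transfer-virtual-to-real}, there is a degree-$\leq L$ matrix polynomial $F = U\psi$ with $w = \|F\|_{\operatorname{HS}}^2$, $\E w = 1$, and $\E[Zw]\geq\lambda_{\max}(\Pi_\cR\sfx\Pi_\cR)$. Since $\E w = 1$, we can write
\[\E[Zw] = \mu + \E[(Z - \mu)w]\leq\mu + \|Z - \mu\|_p\cdot\|w\|_{p'}\]
by H\"older, with $p'$ the conjugate exponent. It therefore suffices to show $\|w\|_{p'}\leq O(1)$ for a choice of $p$ with $\sqrt p\lesssim\sqrt L$.

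Take $p = L$ (or $p = \max\{2, L\}$, and note the claim is trivial for small $L$), so $p' = 1 + 1/(L - 1)$. To bound $\|w\|_{p'} = \E[\|F\|_{\operatorname{HS}}^{2p'}]^{1/p'} = \|F\|_{2p'}^2$, where $F$ is viewed as an HS-valued random variable, apply \cref{fact:gaussian-hypercontractivity} with $r = 2p' = 2 + 2/(L - 1)$. This gives
\[\|F\|_{2p'}\leq(1 + 2/(L - 1))^{L/2}\|F\|_2 = O(1)\cdot 1,\]
since $(1 + 2/(L - 1))^{L/2}\leq e^{L/(L - 1)} = O(1)$ and $\|F\|_2^2 = \E w = 1$. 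Hence $\|w\|_{p'}\leq O(1)$.

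Combining the steps,
\[\lambda_{\max}(\Pi_\cR\sfx\Pi_\cR)\leq\mu + O(\sigma\sqrt L),\]
which is the claim. The main obstacle is the balance in the H\"older step: the hypercontractive blow-up $(r - 1)^{L/2}$ must stay $O(1)$, which forces $r - 1 = O(1/L)$, while the Lipschitz tail moment grows like $\sqrt p$. The choice $p\approx L$ is exactly the one at which both factors are controlled, yielding the $\sqrt{L\sigma^2}$ loss. One must also check that \cref{fact:gaussian-hypercontractivity} applies at exponent $r$ slightly above $2$ and to the HS-valued $F$ as stated. The hypothesis $\alpha < 1/16$ is used only to ensure $q < 1$, so that \cref{lem:lipschitz-operator} applies.
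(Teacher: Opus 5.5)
Your proposal is correct and follows the paper's proof essentially verbatim. You apply H\"older against the tilt $w$ from \cref{lem:transfer-virtual-to-real}, bound $\|Z-\mu\|_p$ by Lipschitz concentration via \cref{lem:lipschitz-operator}, and use hypercontractivity at $r=2p'$ slightly above $2$ to get $\|w\|_{p'}=O(1)$. The paper takes $p=L+1$, which gives the factor $(1+2/L)^L\le e^2$, where you take $p=L$; this difference is immaterial.

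One small correction to your closing remark: for an arbitrary hypergraph $\cH$, the hypothesis $\alpha<1/16$ does not by itself force $q<1$. The condition $q<1$ is really an implicit assumption needed for the error term to be finite, and the paper glosses over the same point.
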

\begin{proof}
    Note that $H^\cH_{\operatorname{SYK}} = H(\bm g)$. Apply \cref{lem:transfer-virtual-to-real} to obtain that there exists $F\in\cK(\R^m;\C^{N\times N})$ such that 
    \[\E_{\bm g\sim\cN(0, \Id_m)}\left[\|H(\bm g)\|_{\operatorname{op}}\cdot w(\bm g)\right]\geq\lambda_{\max}(\Pi_\cR\sfx\Pi_\cR).\]
    where $w(\bm g) := \|F(\bm g)\|^2_{\operatorname{HS}}$. Since $\E_{\bm g}w(\bm g) = 1$, we have 
    \[\E_{\bm g}\left[\|H(\bm g)\|_{\operatorname{op}}\cdot w(\bm g)\right] = \E_{\bm g}\|H(\bm g)\|_{\operatorname{op}} + \E_{\bm g}\left[\left(\|H(\bm g)\|_{\operatorname{op}} - \E_{\bm g}\|H(\bm g)\|_{\operatorname{op}}\right)\cdot w(\bm g)\right]\]
    \[\overset{\text{H\"{o}lder}}{\leq}\E_{\bm g}\|H(\bm g)\|_{\operatorname{op}} + \left\|\|H(\bm g)\|_{\operatorname{op}} - \E_{\bm g}\|H(\bm g)\|_{\operatorname{op}}\right\|_p\cdot \|w(\bm g)\|_{p'},\]
    where we set $p:= L + 1$, where $L$ is as in \cref{lem:x-large-eig}. Now, note that
    \[\|w(\bm g)\|_{p'} = \|F(\bm g)\|_{\cK, 2p'}^2\overset{\text{\cref{fact:gaussian-hypercontractivity}}}{\leq}(2p' - 1)^L\|F(\bm g)\|_{\cK, 2}^2 = (1 + 2/L)^L\cdot\E_{\bm g}w(\bm g)\leq O(1).\]
    By \cref{fact:gaussian-lipschitz-conc,lem:lipschitz-operator} we also have 
    \[ \left\|\|H(\bm g)\|_{\operatorname{op}} - \E_{\bm g}\|H(\bm g)\|_{\operatorname{op}}\right\|_p\leq O(1)\cdot \sqrt{\frac{\rho^+ + \delta}{1 - q}}\cdot\sqrt{L},\]
    as desired. 
\end{proof}

\section{Operator Norm of the SYK Model}
Recall the sign matrix $\cE\in\C^{\cH\times\cH}$ defined as $\cE_{S, T}:= |\cH|^{-1}\cdot\eps_{S, T} = |\cH|^{-1}\cdot(-1)^{|S\cap T|}$. Also recall the other notation from \cref{eq:spectral-notation-def}:
\begin{align*}
u&:= \1/\sqrt{|\cH|},\\
q&:=\langle u,\cE u\rangle,\\
\delta&:=\|(\Pi_{u^\perp})\cE u\|,\\
\rho^+&:=\lambda_{\max}\!\left((\Pi_{u^\perp})\cE(\Pi_{u^\perp})\right),\\
\rho^-&:=\max\left\{0, -\lambda_{\min}\!\left((\Pi_{u^\perp})\cE(\Pi_{u^\perp})\right)\right\}.
\end{align*}

We first bound these quantities in the ``dense'' case, i.e. the case when $\cH = \binom{[n]}{k}$:
\subsection[The Dense Case]{The Dense Case: $\cH = \binom{[n]}{k}$}
Throughout this section assume $\cH = \binom{[n]}{k}$. 

Recall the setting of \cref{fact:johnson-eigenspaces}, i.e. we have matrices $M_t\in\C^{\binom{[n]}{k}\times\binom{[n]}{k}}$ given as $M_t(S, T):= \binom{|S\cap T|}{t}$, and we have a detailed description of their spectrum. Now note that 
\[(-1)^{|S\cap T|} = (1 - 2)^{|S\cap T|} = \sum_{t = 0}^{|S\cap T|}(-2)^t\binom{|S\cap T|}{t} = \sum_{t = 0}^{k}(-2)^t\binom{|S\cap T|}{t},\]
and consequently we have 
\[\cE = \frac{1}{\binom{n}{k}}\sum_{t = 0}^k(-2)^tM_t,\]
i.e. the sign matrix $\cE$ is a matrix in the Johnson scheme. By \cref{fact:johnson-eigenspaces} we obtain that the eigenvalues of $\cE$ are given by
\begin{equation}\label{eq:complete-sign-eig}
    \lambda_j:= \binom{n}{k}^{-1}\cdot\sum_{t = j}^k(-2)^t\binom{k - j}{t - j}\cdot\binom{n - t - j}{k - t}
\end{equation}
for $0\leq j\leq k$, with $\lambda_j$ occurring $d_j:= \binom{n}{j} - \binom{n}{j - 1}$ times in $\operatorname{spec}(\cE)$. Furthermore, $u = \1/\sqrt{\binom{n}{k}}$ is the eigenvector corresponding to $\lambda_0$. Consequently, in the context of \cref{eq:spectral-notation-def} we obtain that 
\begin{equation}\label{eq:complete-param-values}
    q = \lambda_0, \delta = 0, \rho^+ = \max_{j\geq 1}\lambda_j.
\end{equation}
To upper bound these quantities, we need the following auxiliary technical lemma:
\begin{lemma}[Sign Matrix Eigenvalues]\label{lem:complete-matrix-values}
    Suppose $n\geq k\geq 2$. Write $\alpha:= k^2/n$, and assume $\alpha < 1/16$. For $\lambda_j$ as in \cref{eq:complete-sign-eig}, we have
    \begin{equation}\label{eq:lambdaj-estimate}
        \left|\frac{\lambda_j}{(-2)^j\binom{n - 2j}{k - j}\binom{n}{k}^{-1}} - 1\right|\leq 8\alpha.
    \end{equation}
    In particular, $\lambda_j > 0$ if and only if $j$ is even. Moreover, 
    \begin{enumerate}[(1)]
        \item\label{item:lambda1-bound} $(1 - k/n)(1 - 8\alpha)\leq |\lambda_1|/(2k/n)\leq 1 + 8\alpha$.
        \item\label{item:lambdaj-bound} $|\lambda_j|\leq 2(8k/n)^j$ for all $j\geq 0$. Consequently, $\sum_{j = 0}^{k}d_j|\lambda_j|\leq e^{9k}$, where $d_j:= \binom{n}{j} - \binom{n}{j - 1}$.
        \item\label{item:even-lambdaj-upperbound} $\max_{\substack{2\leq j\leq k\\j\text{ even}}}\lambda_j = \Theta(\alpha/n)$.
        \item\label{item:odd-lambdaj-upperbound} $\max_{\substack{1\leq j\leq k\\j\text{ odd}}}|\lambda_j| = |\lambda_1| \leq (2 + O(\alpha))\cdot\sqrt{\alpha/n}$.
        \item $2\alpha - 2\alpha^2\leq 1 - \lambda_0\leq 2\alpha$. 
    \end{enumerate}
\end{lemma}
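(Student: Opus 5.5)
The plan is to show that, after factoring out $(-2)^j\binom{n-2j}{k-j}$, the sum in \cref{eq:complete-sign-eig} is a small perturbation of its first term. Substituting $s := t - j$ gives
\[\lambda_j\binom{n}{k} = (-2)^j\sum_{s=0}^{k-j}(-2)^s\binom{k-j}{s}\binom{n-2j-s}{k-j-s}.\]
The $s = 0$ term is exactly $\binom{n-2j}{k-j}$. Write $K := k - j$ and $N := n - 2j$. The ratio of the $s$-th term to the $s = 0$ term is
\[r_s := (-2)^s\binom{K}{s}\frac{(K)_s}{(N)_s}.\]
In absolute value this is at most $\frac{2^s}{s!}\left(\frac{K^2}{N-K}\right)^s$. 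Since $k \le \sqrt{n}/4$, we have $N - K \ge n - 2k \ge n/2$, so $K^2/(N-K) \le 2\alpha$. Therefore
\[\sum_{s\ge 1}|r_s| \le e^{4\alpha} - 1 \le 8\alpha\]
for $\alpha < 1/16$, with room to spare. This gives \cref{eq:lambdaj-estimate}. Because $8\alpha < 1/2$, the ratio is positive, so the sign of $\lambda_j$ is $(-1)^j$, which is the claim that $\lambda_j > 0$ iff $j$ is even.

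The numbered items then reduce to estimating the main term $m_j := 2^j\binom{n-2j}{k-j}/\binom{n}{k}$.

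\textbf{Item (1).} We have $m_1 = 2\binom{n-2}{k-1}/\binom{n}{k} = 2k(n-k)/(n(n-1))$. This lies between $(2k/n)(1-k/n)$ and $2k/n$. Combined with the factor in $[1 - 8\alpha, 1 + 8\alpha]$ from \cref{eq:lambdaj-estimate}, this gives item (1).

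\textbf{Item (2).} Write
\[\frac{\binom{n-2j}{k-j}}{\binom{n}{k}} = \frac{(k)_j\,(n-k)_j}{(n)_{2j}} \le \left(\frac{k}{n-2j}\right)^j \le (2k/n)^j.\]
Hence $|\lambda_j| \le (1 + 8\alpha)(4k/n)^j \le 2(8k/n)^j$. Using $d_j \le n^j/j!$, the sum $\sum_j d_j|\lambda_j|$ is at most $2\sum_j 8^j k^j/j! \le 2e^{8k} \le e^{9k}$.

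\textbf{Item (5).} For $j = 0$ I will expand more carefully instead of using the crude bound. The $s = 1$ term of the ratio is $-2K^2/N = -2k^2/n = -2\alpha$ exactly. The remaining terms have alternating signs and decreasing magnitudes: the ratio of consecutive magnitudes is about $2K^2/(s^2 N) \le 2\alpha < 1$. So the tail is bounded by the $s = 2$ term, which is at most $\frac{4}{2}\binom{k}{2}\frac{k(k-1)}{n(n-1)} \le \alpha^2 \cdot (\text{const} \le 2)$, and it is nonnegative. This gives $1 - 2\alpha \le \lambda_0 \le 1 - 2\alpha + 2\alpha^2$, that is, $2\alpha - 2\alpha^2 \le 1 - \lambda_0 \le 2\alpha$. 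Pinning down exactly the constant $2\alpha^2$ requires the $s = 2$ term to be bounded by $2\alpha^2$; indeed it is $2\cdot\frac{k(k-1)}{2}\cdot\frac{k(k-1)}{n(n-1)}\cdot 2 \le 2\alpha^2$. The alternating-series monotonicity also needs checking.

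\textbf{Items (3) and (4).} For these I compare consecutive main terms:
\[\frac{m_{j+1}}{m_j} = \frac{2(k-j)}{n-2j-1}\cdot\frac{\binom{n-2j-2}{k-j-1}}{\binom{n-2j-1}{k-j}}\cdot(\ldots) \lesssim \frac{2k}{n} \ll 1.\]
So the $|m_j|$ are decreasing (up to the $1 \pm 8\alpha$ factors, which are absorbed since $2k/n \cdot \frac{1+8\alpha}{1-8\alpha} < 1$). Thus the maximum over even $j \ge 2$ is attained at $j = 2$. There $m_2 = 4\frac{k(k-1)(n-k)(n-k-1)}{n(n-1)(n-2)(n-3)} = \Theta(k^2/n^2) = \Theta(\alpha/n)$. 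The maximum over odd $j$ is at $j = 1$, where $|\lambda_1| \le (1 + 8\alpha)\cdot 2k/n$. Since $k/n = \sqrt{\alpha/n}$, this equals $(2 + O(\alpha))\sqrt{\alpha/n}$.

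The main obstacle is the bookkeeping needed to get the explicit constants: $8\alpha$ in the perturbation bound, and exactly $2\alpha^2$ in item (5). I would first try the crude bound $\sum_{s\ge 1}(4\alpha)^s/s!$ and only refine to the alternating-series argument where a sharp constant is required.
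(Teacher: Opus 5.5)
Your proof is correct. For the core estimate \eqref{eq:lambdaj-estimate}, the sign claim, and items (1), (2) and the bound in (4), it is the paper's argument: substitute $s=t-j$, bound the ratio of the $s$-th term to the leading one by $(4\alpha)^s/s!$, and sum to get $e^{4\alpha}-1\le 8\alpha$. Your intermediate bound $(2k/n)^j$ in item (2) is slightly sharper than the paper's $(4k/n)^j$.

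The genuine difference is item (5). The paper does not treat $j=0$ as an alternating series. It reads the same expansion as $\lambda_0=\E_{S,T}(-1)^{|S\cap T|}$, via $\E\binom{X}{s}=\binom{k}{s}\binom{n-s}{k-s}/\binom{n}{k}$ for $X=|S\cap T|$. It then writes $1-\lambda_0=2\Pr(X\text{ odd})$ and applies the pointwise inequality $t-t(t-1)\le \1(t\text{ odd})\le t$, with $\E X=\alpha$ and $\E X(X-1)\le\alpha^2$. That route needs no monotonicity check. It also yields the identity $\lambda_0=\E(-1)^{|S\cap T|}$, which the paper reuses in the sparse case to get $\E_\cH\tilde a_{ij}=0$ and $\E_\cH\tilde a_{ij}^2=1-\lambda_0^2$.

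Your alternating-series bracketing also works, and the check you deferred goes through. The exact consecutive ratio is $|r_{s+1}/r_s|=2(k-s)^2/((s+1)(n-s))\le 2k^2/n=2\alpha<1$. This uses $(k-s)^2/(n-s)\le k^2/n$, which is equivalent to $k^2\le n(2k-s)$. Together with $r_2=2k^2(k-1)^2/(n(n-1))\le 2\alpha^2$, this gives exactly $1-2\alpha\le\lambda_0\le 1-2\alpha+2\alpha^2$.

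For items (3) and (4) you prove that $|\lambda_j|$ decreases in $j$. The paper instead combines \eqref{eq:lambdaj-estimate} with item (2), which gives the bound $2(8k/n)^2$ for $j\ge 2$. Your route has the side benefit of making the equality $\max_{j\text{ odd}}|\lambda_j|=|\lambda_1|$ explicit. Do fix the garbled ratio, though. It is $m_{j+1}/m_j=\frac{2(k-j)(n-k-j)}{(n-2j)(n-2j-1)}\le 4k/n$, and then $\frac{1+8\alpha}{1-8\alpha}\cdot\frac{4k}{n}<\frac{12k}{n}<1$, since $n>16k^2\ge 32k$.
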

\begin{proof}
    We know that 
    \[\lambda_j = \binom{n}{k}^{-1}\cdot\sum_{t = j}^k(-2)^t\binom{k - j}{t - j}\cdot\binom{n - t - j}{k - t} = \binom{n}{k}^{-1}(-2)^j\cdot\sum_{s = 0}^{k - j}\underbrace{(-2)^s\binom{k - j}{s}\cdot\binom{n - 2j - s}{k - j - s}}_{:= f(s)}.\]
    For any $s \geq 0$, we have
    \[\left|\frac{f(s)}{f(0)}\right| = \left|\frac{(-2)^s\binom{k - j}{s}\cdot\binom{n - 2j - s}{k - j - s}}{\binom{n - 2j}{k - j}}\right|\leq \frac{1}{s!}\left(\frac{2k^2}{n - 2k}\right)^s = \frac{1}{s!}\left(\frac{2\alpha}{1 - 2k/n}\right)^s\leq\frac{(4\alpha)^s}{s!}.\]
    Consequently, 
    \[\left|\frac{\lambda_j}{(-2)^j\binom{n - 2j}{k - j}\binom{n}{k}^{-1}} - 1\right|\leq\sum_{s = 1}^{k - j}\frac{(4\alpha)^s}{s!}\leq e^{4\alpha} - 1\leq 8\alpha,\]
    as desired. \cref{item:lambda1-bound} then follows by direct computation.

    For \cref{item:lambdaj-bound}, we have that 
    \[\binom{n - 2j}{k - j}\binom{n}{k}^{-1} = \frac{(k)_j(n - k)_j}{(n)_{2j}}\leq\frac{k^jn^j}{(n - 2k)^{2j}}\leq\left(\frac{4k}{n}\right)^j\]
    where the last inequality holds since $n > 16k^2\geq 4k$. Consequently, 
    \[|\lambda_j|\leq (1 + 8\alpha)\cdot 2^j\binom{n - 2j}{k - j}\binom{n}{k}^{-1}\leq 2\cdot \left(\frac{8k}{n}\right)^j,\]
    as desired. Consequently, since $d_j\leq\binom{n}{j}\leq n^j/j!$, we have 
    \[\sum_{j = 0}^{k}d_j|\lambda_j|\leq 2\sum_{j = 0}^{k}\frac{n^j}{j!}\cdot\left(\frac{8k}{n}\right)^j\leq 2\sum_{j = 0}^{\infty}\frac{(8k)^j}{j!} = 2e^{8k}\leq e^{9k}\]
    since $k\geq 2$. \cref{item:even-lambdaj-upperbound} is a direct consequence of \cref{eq:lambdaj-estimate,item:lambdaj-bound}, and \cref{item:odd-lambdaj-upperbound} follows from \cref{item:lambda1-bound} and the fact that $\alpha < 1/16$.

    Finally, let us consider $\lambda_0$. Note that 
    \[\lambda_0 = \binom{n}{k}^{-1}\cdot\sum_{s = 0}^{k}(-2)^s\binom{k}{s}\cdot\binom{n - s}{k - s} \overset{(\ast)}{=} \sum_{s = 0}^{k}(-2)^s\E\binom{X}{s} = \E\sum_{s = 0}^{k}(-2)^s\binom{X}{s} = \E(-1)^X,\]
    where $X = |S\cap T|$ is the random variable obtained by considering the intersection size of the sets $S, T$ independently and uniformly chosen from $\binom{[n]}{k}$. To see why $(\ast)$ is true, note that $\binom{X}{s} = \sum_{U\in\binom{S}{s}}\1(U\seq T)$, and thus $\E\left[\binom{X}{s}\mid S\right] = \sum_{U\in\binom{S}{s}}\Pr_T(U\seq T) = \binom{k}{s}\cdot\frac{\binom{n - s}{k - s}}{\binom{n}{k}}$. Since $\E\left[\binom{X}{s}\mid S\right]$ does not depend on $S$, we have $\E\binom{X}{s} = \binom{k}{s}\cdot\frac{\binom{n - s}{k - s}}{\binom{n}{k}}$, as desired.

    Consequently, $1 - \lambda_0 = \E_{S, T}(1 - (-1)^X) = 2\cdot\E_{S, T}\1(|S\cap T|\text{ is odd})$. Finally, note that for any integer $t\geq 0$, we have $t - t(t - 1)\leq \1(t\text{ is odd})\leq t$. Note that $\E_{S, T}|S\cap T| = \alpha, \E_{S, T}\binom{|S\cap T|}{2} = \binom{k}{2}^2/\binom{n}{2}\leq\alpha^2/2$ for $\alpha\leq 1$. Thus 
    \[2(\alpha - \alpha^2)\leq 1 - \lambda_0\leq 2\alpha,\]
    as desired.
\end{proof}

Putting \cref{eq:complete-param-values,lem:complete-matrix-values} yields an upper bound on the spectral norm of the SYK Hamiltonian for the dense case $\cH = \binom{[n]}{k}$:
\begin{theorem}[Operator Norm of the Dense SYK: Upper Bound]\label{thm:syk-spectral-norm}
    Let $n\geq k\geq 2$ be even integers, and let $\alpha:= k^2/n < 1/16$. Consider the SYK Hamiltonian (from \cref{eq:syk-def}) for the dense case $\cH = \binom{[n]}{k}$, i.e.
    \[H:= \frac{1}{\sqrt{\binom{n}{k}}}\sum_{S\in\binom{[n]}{k}}g_S\Gamma_S\]
    where $\{g_S\}_{S\in\binom{[n]}{k}}$ are i.i.d $\cN(0, 1)$ Gaussians. Then 
    \[\E\|H\|_{\operatorname{op}}\leq \sqrt{\frac{2}{\alpha}} + O(1).\]
    In particular, if $\alpha\leq o_n(1)$ then $\E\|H\|_{\operatorname{op}}\leq (1 + o(1))\cdot\sqrt{2/\alpha}$.
\end{theorem}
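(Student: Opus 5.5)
We will combine the trace moment method with \cref{prop:varphi-trace-re} and \cref{lem:varphi-spectral-dom}, and feed in the dense-case parameter values from \cref{eq:complete-param-values} and \cref{lem:complete-matrix-values}.

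\textbf{Step 1: reduce to the vacuum functional.} By Jensen and $\|H\|_{\op}^{2\ell}\leq\Tr(H^{2\ell})$ with $N=2^{n/2}$,
\[\E\|H\|_{\op}\leq N^{1/(2\ell)}\bigl(\E\tr(H^{2\ell})\bigr)^{1/(2\ell)}.\]
By \cref{prop:varphi-trace-re}, $\E\tr(H^{2\ell})=\varphi(\sfx^{2\ell})$.

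\textbf{Step 2: insert the dense parameters.} For $\cH=\binom{[n]}{k}$ we have $\delta=0$, $q=\lambda_0$ and $\rho^+=\max_{j\geq1}\lambda_j$. The maximum is attained at an even $j$, since odd eigenvalues are negative. So \cref{lem:complete-matrix-values} gives
\[\rho^+=\Theta(\alpha/n),\qquad 1-q\in[2\alpha-2\alpha^2,\,2\alpha].\]
Since $q-\rho^+\geq0$, we get $1-(q-\rho^+)_+=1-q+\rho^+\geq 2\alpha(1-\alpha)$. \cref{lem:varphi-spectral-dom} then yields
\[\varphi(\sfx^{2\ell})^{1/(2\ell)}\leq 2\sqrt{\frac{1+C\ell\alpha/n}{2\alpha(1-\alpha)}}=\sqrt{\frac{2}{\alpha}}\cdot\frac{\sqrt{1+C\ell\alpha/n}}{\sqrt{1-\alpha}}.\]

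\textbf{Step 3: choose $\ell$.} Take $\ell=\Theta(n)$, say $\ell=n$. Then $N^{1/(2\ell)}=2^{1/4}$ is a constant factor, which is too lossy for an additive $O(1)$ error. Instead take $\ell\asymp n/\epsilon$. Then $N^{1/(2\ell)}=2^{n/(4\ell)}\leq 1+O(n/\ell)$, and the factor $\sqrt{1+C\ell\alpha/n}$ is at most $1+O(\ell\alpha/n)$.

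Optimizing $\ell=n/\sqrt{\alpha}$ makes both errors $O(\sqrt{\alpha})$. The multiplicative error is then $1+O(\sqrt{\alpha})$, and the $\sqrt{1-\alpha}$ factor contributes only $1+O(\alpha)$. Multiplying by $\sqrt{2/\alpha}$ gives an additive error
\[\sqrt{2/\alpha}\cdot O(\sqrt{\alpha})=O(1).\]
This is the bound claimed. When $\alpha=o(1)$ it also gives the $(1+o(1))\sqrt{2/\alpha}$ statement.

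\textbf{Main obstacle.} The delicate point is the bookkeeping of Step 3. We must confirm that the $\rho^+\ell$ term from \cref{lem:varphi-spectral-dom} is genuinely of order $\ell\alpha/n$; this uses the even-$j$ bound $\Theta(\alpha/n)$ from \cref{lem:complete-matrix-values}, not the weaker $\sqrt{\alpha/n}$ bound, which applies only to odd $j$ and hence to negative eigenvalues. Only then can the dimension factor $N^{1/(2\ell)}$ be balanced against it at additive cost $O(1)$. We must also check that $\ell$ being an integer, with $\ell\asymp n/\sqrt\alpha$, is admissible.
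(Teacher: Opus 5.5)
Your proposal is correct and matches the paper's proof essentially step for step. Both use the trace-moment bound with \cref{prop:varphi-trace-re} and \cref{lem:varphi-spectral-dom}, the dense-case inputs $\delta=0$, $\rho^+=O(\alpha/n)$ (from the positive even-$j$ eigenvalues) and $1-q\geq 2(\alpha-\alpha^2)$, and the choice $\ell=\lceil n/\sqrt{\alpha}\rceil$ to balance $2^{n/(4\ell)}$ against $\rho^+\ell$, yielding $\sqrt{2/\alpha}\,(1+O(\sqrt{\alpha}))=\sqrt{2/\alpha}+O(1)$; your remark that the weaker odd-$j$ bound $O(\sqrt{\alpha/n})$ would not suffice for $\rho^+$ is also accurate.
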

\begin{proof}
    Let $\ell\geq 1$ be an integer. Then 
    \[\E\left[\|H\|_{\operatorname{op}}\right]^{2\ell}\overset{\text{Jensen}}{\leq}\E\left[\|H\|_{\operatorname{op}}^{2\ell}\right]\leq\E\left[\Tr(H^{2\ell})\right] = N\cdot\E\left[\tr(H^{2\ell})\right]\overset{\text{\cref{prop:varphi-trace-re}}}{=} N\cdot\varphi(\sfx^{2\ell})\] 
    \[\overset{\text{\cref{lem:varphi-spectral-dom}}}{\leq} N\cdot\left(\frac{4(1 + (\rho^+ + \delta)\ell)}{1 - (q - \rho^+)_+}\right)^\ell\]
    where $N = 2^{n/2}$ is the dimension of $H$, and $\varphi, x, q, \rho^+, \delta$ are defined as in \cref{sec:twisted-fock-space}. Consequently, 
    \[\E\|H\|_{\operatorname{op}}\leq 2\cdot 2^{n/(4\ell)}\cdot\sqrt{\frac{1 + (\rho^+ + \delta)\ell}{1 - (q - \rho^+)_+}}\]
    for all integers $\ell\geq 1$. 
    
    Now, by \cref{eq:complete-param-values} we know that $q = \lambda_0, \delta = 0, \rho^+ = \max_{j\geq 1}\lambda_j$. Note that $q - \rho^+\leq q\leq 1 - 2\alpha + 2\alpha^2 < 1$, and thus by \cref{lem:complete-matrix-values} we obtain that $1 - q\geq 2(\alpha - \alpha^2), \rho^+\leq O(\alpha/n)$. Consequently, $1 - (q - \rho^+)_+ = 1 - q + \rho^+\geq 1 - q$ for large enough $n$,\footnote{Note that we claim the result for all $n > 16k^2$. For $n$ smaller than what we need for our inequalities to hold, we can simply inflate the implicit constant in the $O(1)$ in the theorem statement to make it work.} and thus
    \[\E\|H\|_{\operatorname{op}}\leq 2^{n/(4\ell)}\cdot\sqrt{\frac{2 + O(\alpha\ell/n)}{\alpha - \alpha^2}}.\]
    Choose $\ell := \lceil n/\sqrt{\alpha}\rceil$ to obtain that 
    \[\E\|H\|_{\operatorname{op}}\leq \sqrt{\frac{2}{\alpha}}\cdot \left(1 + O(\sqrt{\alpha})\right) \leq \sqrt{\frac{2}{\alpha}} + O(1),\]
    as desired.
\end{proof}

\begin{theorem}[Operator Norm of the Dense SYK: Lower Bound]\label{thm:syk-spectral-norm-lower}
    Let $n\geq k\geq 2$ be even integers, and let $\alpha:= k^2/n < 1/16$. Consider the SYK Hamiltonian (from \cref{eq:syk-def}) for the dense case $\cH = \binom{[n]}{k}$. Then 
    \[\E\|H\|_{\operatorname{op}}\geq \sqrt{\frac{2}{\alpha}}\cdot\left(1 - O(1)\cdot\max\left\{k^{-1/2}, \alpha^2\right\}\right) = \frac{\sqrt{2n}}{k}\cdot\left(1 - O(1)\cdot\max\left\{k^{-1/2}, \frac{k^4}{n^2}\right\}\right).\]
\end{theorem}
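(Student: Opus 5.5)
The plan is to combine \cref{lem:lower-bound-syk-general} with \cref{lem:x-large-eig}, plugging in the dense-case parameters from \cref{eq:complete-param-values} and \cref{lem:complete-matrix-values}. In the dense case we have $\delta = 0$, $q = \lambda_0$, $\rho^+ = \max_{j\geq 1}\lambda_j = \Theta(\alpha/n)$ (attained at even $j$), and $\rho^- = \max_{j \text{ odd}}|\lambda_j| = |\lambda_1| \leq (2+O(\alpha))\sqrt{\alpha/n} = O(k/n)$. Also $\nu = 1 - q \in [2\alpha - 2\alpha^2, 2\alpha]$, so $2/\sqrt{\nu} \geq \sqrt{2/\alpha}$ and $2/\sqrt\nu = \sqrt{2/\alpha}\,(1 - O(\alpha))$. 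In particular $0 \leq q < 1$ (since $\alpha<1/16$), so \cref{lem:betam} applies.

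\textbf{Step 1: apply \cref{lem:x-large-eig}.} We have $B^2 = \nu/(\rho^-+\delta) \geq \Omega(\alpha)\cdot n/k = \Omega(k)$, so $B \geq \Omega(\sqrt{k})$. Choose $\kappa := c\sqrt{k}$ for a small absolute constant $c$, so that $B \geq \kappa$. This requires $\kappa \geq \Omega(1)$, which holds for $k$ larger than an absolute constant; for bounded $k$ the claimed bound is trivial after inflating constants, since the right-hand side becomes $\leq 0$. \cref{lem:x-large-eig} then gives
\[\lambda_{\max}(\Pi_\cR\sfx\Pi_\cR) \geq \frac{2}{\sqrt\nu}\left(1 - O(1)\max\{k^{-1/2}, \alpha^2\}\right),\]
using $\nu^2 = O(\alpha^2)$.

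\textbf{Step 2: control the error term in \cref{lem:lower-bound-syk-general}.} The degree is $L = t_0 + M \leq \kappa/\nu + 1 + 1/\nu = O(\sqrt{k}/\alpha)$. The error term is
\[O\!\left(\sqrt{L(\rho^++\delta)/\nu}\right) = O\!\left(\sqrt{(\sqrt{k}/\alpha)\cdot(\alpha/n)\cdot(1/\alpha)}\right) = O\!\left(\sqrt{\sqrt{k}/(\alpha n)}\right) = O(k^{-3/4}).\]
This is negligible compared with $\sqrt{2/\alpha}\cdot k^{-1/2} = \sqrt{2n}/k^{3/2}$: their ratio is $O(k^{3/4}/\sqrt{n})$, which is $O(1)$ because $k \leq \sqrt n$. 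Hence the error is absorbed into the $O(1)\max\{k^{-1/2}, \alpha^2\}$ relative loss.

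\textbf{Step 3: convert $2/\sqrt\nu$ to $\sqrt{2/\alpha}$.} Since $2/\sqrt\nu \geq \sqrt{2/\alpha}$, this direction costs nothing, and the two preceding estimates together give the stated bound.

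The main obstacle is the parameter balancing in Step 1: $\kappa$ must be at most $B \approx \sqrt{k}$, which is exactly what produces the $k^{-1/2}$ term in the bound. The remaining work is checking that the Lipschitz/hypercontractivity penalty in Step 2 stays lower-order at this degree $L$, and the dense-case bound $\rho^+ = O(\alpha/n)$ makes this comfortably true.
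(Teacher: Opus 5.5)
Your proposal is correct and follows essentially the same route as the paper: combine \cref{lem:lower-bound-syk-general} with \cref{lem:x-large-eig} using the dense-case parameters from \cref{lem:complete-matrix-values}, take $\kappa=\Theta(\sqrt{k})\leq B$, and get $L=O(n/k^{3/2})$ and a transfer error of $O(k^{-3/4})$. You are somewhat more explicit than the paper about three details it leaves implicit: absorbing the additive error into the relative $O(k^{-1/2})$ loss, using $2/\sqrt{\nu}\geq\sqrt{2/\alpha}$, and disposing of bounded $k$ by inflating the constant.
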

\begin{proof}
    Throughout the proof we'll borrow notation from \cref{lem:x-large-eig}. By \cref{lem:lower-bound-syk-general} we know that if $B\geq\Omega(1)$, then
    \[\E\|H\|_{\operatorname{op}}\geq \lambda_{\max}(\Pi_\cR\sfx\Pi_\cR) - O(1)\cdot \sqrt{\frac{L(\rho^+ + \delta)}{1 - q}}.\]
    Now, we'll use \cref{lem:complete-matrix-values} to compute the parameters in \cref{lem:x-large-eig}. Note that $\rho^- = |\lambda_1|\leq O(\sqrt{\alpha/n})$, and thus $B = \sqrt{\nu/\rho^-} \geq\Omega\left(\sqrt{\alpha/\sqrt{\alpha/n}}\right)\geq\Omega(\sqrt{k})\geq\Omega(1)$, and thus our invocation of \cref{lem:x-large-eig} is justified. Choose $\kappa = \Theta(\sqrt{k})\leq B$. Then $L\leq O(\kappa/\nu)\leq O(n/k^{3/2}), 1 - q\geq\Omega(k^2/n), \rho^+ + \delta\leq O(k^2/n^2)$, and thus 
    \[\frac{L(\rho^+ + \delta)}{1 - q}\leq O(1)\cdot\frac{n\cdot k^2\cdot n}{k^{3/2}\cdot n^2\cdot k^2}\leq O(k^{-3/2}).\]
    Also 
    \[\lambda_{\max}(\Pi_\cR\sfx\Pi_\cR)\geq \sqrt{\frac{2}{\alpha}}\cdot\left(1 - O(1)\cdot\max\left\{k^{-1/2}, \alpha^2\right\}\right),\]
    and we're done.
\end{proof}

Finally, using \cref{fact:gaussian-lipschitz-conc}, we can even obtain strong subgaussian concentration for $\|H_{\syk}\|_{\op}$:

\begin{theorem}[Subgaussian Tails for the Operator Norm]\label{thm:syk-subgaussian}
    Let $n\geq k\geq 2$ be even integers, and let $\alpha:= k^2/n < 1/16$. Consider the SYK Hamiltonian (from \cref{eq:syk-def}) for the dense case $\cH = \binom{[n]}{k}$. Then for any $t\geq 0$ we have
    \[\Pr\left(|\|H\|_{\operatorname{op}} - \E_{\bm g}\|H\|_{\operatorname{op}}|\geq t\right)\leq 2\cdot\exp\left(-\Omega(nt^2)\right).\]
\end{theorem}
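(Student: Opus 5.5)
The plan is to combine the Lipschitz estimate of \cref{lem:lipschitz-operator} with Gaussian concentration (\cref{fact:gaussian-lipschitz-conc}), plugging in the dense-case parameters from \cref{eq:complete-param-values} and \cref{lem:complete-matrix-values}. The function of interest is $f(\bm g) := \|H(\bm g)\|_{\operatorname{op}}$ on $\R^{\cH}$ with $\bm g\sim\cN(0,\Id_m)$. By the reverse triangle inequality, $|f(\bm g)-f(\bm g')|\leq \|H(\bm g)-H(\bm g')\|_{\operatorname{op}}$. Then \cref{lem:lipschitz-operator} shows that $f$ is $\sigma$-Lipschitz with
\[
\sigma^2 = \frac{\rho^+ + \delta}{1-q},
\]
which is valid since $q<1$ in the dense case.

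Next we bound $\sigma^2$ in the dense case. By \cref{eq:complete-param-values}, $\delta = 0$, $q=\lambda_0$, and $\rho^+ = \max_{j\geq 1}\lambda_j$. Since $\lambda_j>0$ only for even $j$, and odd eigenvalues are negative, we get $\rho^+ = \max_{j\geq 2,\, j \text{ even}}\lambda_j = \Theta(\alpha/n)$ by \cref{item:even-lambdaj-upperbound} of \cref{lem:complete-matrix-values}. We also need to make sure $\rho^+\geq 0$ is the correct quantity. If every positive eigenvalue were absent it would only help; in any case $\rho^+\leq O(\alpha/n)$. The last item of \cref{lem:complete-matrix-values} gives $1-q\geq 2\alpha-2\alpha^2\geq \alpha$ for $\alpha<1/16$. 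Hence
\[
\sigma^2 \leq O(1)\cdot \frac{\alpha/n}{\alpha} = O(1/n).
\]

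Applying \cref{fact:gaussian-lipschitz-conc} to $f$ with this $\sigma$ gives
\[
\Pr\left(\bigl|f(\bm g)-\E f(\bm g)\bigr|\geq t\right)\leq 2\exp\left(-\Omega(1)\,\frac{t^2}{\sigma^2}\right) \leq 2\exp\left(-\Omega(nt^2)\right),
\]
which is the claim.

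There is no serious obstacle here. The only delicate points are checking that the sign bookkeeping makes $\rho^+$ come from the even eigenvalues, which are of order $\alpha/n$ rather than $\sqrt{\alpha/n}$, and using the $\alpha$-scale lower bound on $1-q$ so that the $\alpha$ factors cancel to give $\sigma^2 = O(1/n)$. Had we instead used $|\lambda_1|$, we would have obtained a weaker rate.
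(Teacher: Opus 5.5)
Your proposal is correct and is essentially the paper's proof. You bound the Lipschitz constant of $\bm g\mapsto\|H(\bm g)\|_{\operatorname{op}}$ via \cref{lem:lipschitz-operator} using the dense-case values $\delta=0$, $\rho^+=O(\alpha/n)$ (coming from the even-indexed eigenvalues) and $1-q\geq\Omega(\alpha)$, which gives $\sigma^2=O(1/n)$, and then apply \cref{fact:gaussian-lipschitz-conc}.
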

\begin{proof}
    Towards applying \cref{fact:gaussian-lipschitz-conc}, consider the function $\bm g\mapsto \|H(\bm g)\|_{\op} = \|H\|_{\op}$. By \cref{lem:lipschitz-operator} we know that the Lipschitz constant of this map is $\leq\sqrt{(\rho^+ + \delta)/(1 - q)}$. From the proof of \cref{thm:syk-spectral-norm-lower} we know that $(\rho^+ + \delta)/(1 - q)\leq O(1/n)$. We are now done by \cref{fact:gaussian-lipschitz-conc}.
\end{proof}

\subsection[The Sparse Case]{The Sparse Case: $\cH$ is a random hypergraph}
Suppose $\cH$ is a random $k$-uniform hypergraph with $m$ hyperedges, i.e. every hyperedge of $\cH$ sampled independently, uniformly at random, from $\binom{[n]}{k}$. Note that $\cH$ is a multiset in general.

Write $\cE^{\comp}:= \cE^{\binom{[n]}{k}} = \binom{n}{k}^{-1}\cdot(\eps_{S, T})_{S, T\in\binom{[n]}{k}}$. Then note that 
\[\cE^\cH = \frac{\binom{n}{k}}{m}P^*\cE^{\comp}P,\]
where $P:\C^\cH\to\C^{\binom{[n]}{k}}$ is the usual map where for every $S\in\cH$ we have $Pe_{S}:= e_{S}\in\C^\Omega$, where $\{e_S\}_{S\in\cH}$ (resp. $\{e_S\}_{S\in\binom{[n]}{k}}$) is the standard basis for $\C^{\cH}$ (resp. $\C^{\binom{[n]}{k}}$).

Furthermore, since $\cE^{\comp}$ is Hermitian, we can write 
\[\cE^{\comp} = \frac{\lambda_0\1\1^*}{\binom{n}{k}} + \sum_{j\geq 1}\lambda_j\Pi_j\]
where $\lambda_j$ are as in \cref{eq:complete-sign-eig}, and $\Pi_j$ refers to the orthogonal projection $\C^{\binom{[n]}{k}}\to V_j$, where $V_j$ is as in \cref{fact:johnson-eigenspaces}. Write 
\[G^+:= \sum_{j\geq 1:\lambda_j\geq 0}\lambda_j\Pi_j,\;\;G^{-}:= -\sum_{j\geq 1:\lambda_j < 0}\lambda_j\Pi_j,\]
and thus $\cE^{\comp} = \frac{\lambda_0\1\1^*}{\binom{n}{k}} + G^+ - G^-$. 

Note that $G^+, G^-\succeq 0$. Furthermore by the $\mathfrak{S}_n$-invariance of $V_j$ (see \cref{fact:johnson-eigenspaces}), we have that for any $S, T\in\binom{[n]}{k}$ we have $\langle e_S, G^+ e_S\rangle = \langle e_T, G^+ e_T\rangle$ for all $S, T\in\binom{[n]}{k}$. Since $\sum_{S\in\binom{[n]}{k}}\langle e_S, G^+ e_S\rangle = \Tr(G^+)$ we obtain that $\langle e_S, G^+ e_S\rangle = \tr(G^+)$ for all $S\in\binom{[n]}{k}$. Similarly, we also have $\langle e_S, G^- e_S\rangle = \tr(G^-)$ for all $S\in\binom{[n]}{k}$.

Now write $A^{\pm}:= \frac{\binom{n}{k}}{m}P^*G^{\pm}P$. Note that $A^\pm\succeq 0$.

\begin{theorem}[Operator Norm of the Sparse SYK: Upper Bound]\label{thm:syk-spectral-norm-random}
    Let $n\geq k\geq 2$ be even integers, and suppose $\alpha:= k^2/n < 1/16$.  Let $\cH$ be a random $k$-uniform hypergraph with $m\geq 2^{Ck}n\log n$ edges, where $C > 0$ is some sufficiently large absolute constant. Let 
    \[H^\cH_{\operatorname{SYK}}:= \frac{1}{\sqrt{|\cH|}}\sum_{S\in\cH}g_S\Gamma_S\]
    be the SYK Hamiltonian corresponding to $\cH$ (as in \cref{eq:syk-def}). Then with probability $\geq 1 - O(1/\log(n))$ over the randomness of $\cH$ we have 
    \[\E_{\bm g}\|H^\cH_{\operatorname{SYK}}\|_{\operatorname{op}}\leq \sqrt{\frac{2}{\alpha}}\cdot\left(1 + O\left(\sqrt{\alpha + e^{-\Omega(k)}}\right)\right) = \frac{\sqrt{2n}}{k}\cdot\left(1 + O\left(\sqrt{\frac{k^2}{n} + e^{-\Omega(k)}}\right)\right).\]
    In particular if $k\geq\omega(\log(n))$ then $\E\|H^\cH_{\operatorname{SYK}}\|_{\operatorname{op}}\leq \sqrt{2n}/k + O(1)$.
\end{theorem}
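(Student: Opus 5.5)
The plan mirrors the dense-case proof of \cref{thm:syk-spectral-norm}. By Jensen, \cref{prop:varphi-trace-re} and \cref{lem:varphi-spectral-dom}, for every $\ell\geq 1$,
\[\E_{\bm g}\|H^\cH_{\operatorname{SYK}}\|_{\operatorname{op}}\leq 2\cdot 2^{n/(4\ell)}\sqrt{\frac{1+(\rho^++\delta)\ell}{1-(q-\rho^+)_+}},\]
where $q,\rho^+,\delta$ are now the parameters of $\cE^\cH=\frac{\binom nk}{m}P^*\cE^{\comp}P$. So it suffices to show that, with probability $\geq 1-O(1/\log n)$ over $\cH$:
\begin{itemize}
\item[(i)] $1-q\geq 2\alpha(1-O(\alpha))$;
\item[(ii)] $\rho^+\leq O(\alpha)\cdot\frac{1-q}{n}$;
\item[(iii)] $\delta\leq O(\alpha+e^{-\Omega(k)})\cdot\frac{1-q}{n}$.
\end{itemize}
Then taking $\ell=\Theta(n)$ gives $2^{n/(4\ell)}=1+O(\epsilon)$ for a constant $\epsilon$ tuned against the other error term, and $(\rho^++\delta)\ell=O(\alpha+e^{-\Omega(k)})$. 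More carefully, choose $\ell=\lceil n/\sqrt{\alpha+e^{-\Omega(k)}}\rceil$. Then both error factors are $1+O\bigl(\sqrt{\alpha+e^{-\Omega(k)}}\bigr)$, which gives the claim.

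\textbf{Bounding $\rho^+$.} Decompose $\cE^\cH=\frac{\lambda_0}{m}\1\1^*+A^+-A^-$, using $P^*\1_{\binom{[n]}{k}}=\1_\cH$. Then $\Pi_{u^\perp}\cE^\cH\Pi_{u^\perp}=\Pi_{u^\perp}(A^+-A^-)\Pi_{u^\perp}\preceq\Pi_{u^\perp}A^+\Pi_{u^\perp}$, so $\rho^+\leq\|A^+\|_{\operatorname{op}}$. Since $G^+$ is PSD with constant diagonal, \cref{lem:random-proj-compression} applies with $\mu=\max_{j\text{ even}\geq2}\lambda_j=O(\alpha/n)$ and $\Tr(G^+)\leq\sum_jd_j|\lambda_j|\leq e^{9k}$. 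This gives, with high probability,
\[\|A^+\|_{\operatorname{op}}\leq O\bigl(\max\{\alpha/n,\ e^{9k}\cdot k/m\}\bigr)=O(\alpha/n),\]
since $m\geq 2^{Ck}n\log n$ with $C$ large. The failure probability is $O(\mu/\Tr G^+)^{100}$, which is tiny because $\Tr G^+\geq d_2\lambda_2\gg\mu$. That yields (ii).

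\textbf{The terms $q$ and $\delta$.} Both involve $\cE^\cH u$. We have $q=\frac{\lambda_0}{m^2}m^2+\langle u,(A^+-A^-)u\rangle$. Write $w:=Pu\in\C^{\binom{[n]}{k}}$, so that $\langle u,A^\pm u\rangle=\frac{\binom nk}{m}\langle w,G^\pm w\rangle$. Here $\langle w,G^\pm w\rangle=\sum_{j\geq1}|\lambda_j|\,\|\Pi_jw\|^2$, restricted to the appropriate signs. To control this I would compute expectations over $\cH$. Since $\Pi_j\1=0$ for $j\geq1$,
\[\E\bigl[\|\Pi_jw\|^2\bigr]=\tfrac1m\Tr(\Pi_j)\binom nk^{-1}\cdot\bigl(\text{diagonal}\bigr),\]
so $\E\langle u,A^\pm u\rangle=\frac1m\Tr(G^\pm)\leq e^{9k}/m\leq 2^{-\Omega(Ck)}/(n\log n)$. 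Markov's inequality then gives, with probability $1-O(1/\log n)$, $|q-\lambda_0|\leq e^{-\Omega(k)}/n$, which yields (i) via \cref{lem:complete-matrix-values}.

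For $\delta$, note $\Pi_{u^\perp}\cE^\cH u=\Pi_{u^\perp}(A^+-A^-)u$. Hence
\[\delta\leq\|A^+u\|+\|A^-u\|\leq\|A^+\|^{1/2}_{\operatorname{op}}\langle u,A^+u\rangle^{1/2}+\|A^-\|_{\operatorname{op}}^{1/2}\langle u,A^-u\rangle^{1/2}.\]
Using $\|A^-\|_{\operatorname{op}}=O(\sqrt{\alpha/n})$ from the same lemma with $\mu=|\lambda_1|$, this gives $\delta\leq O(n^{-1/4})\cdot e^{-\Omega(k)}/\sqrt{n}\leq e^{-\Omega(k)}\alpha/n$, after enlarging $C$ if needed.

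\textbf{Main obstacle.} The hardest step is (iii), and more generally making $\delta$ and the fluctuation of $q$ genuinely $o(\alpha/n)$ rather than merely small. The Markov estimate loses only a $\log n$ factor, which is why $m$ carries $n\log n$. If the crude bound on $\|A^-\|_{\operatorname{op}}$ is insufficient, I would instead bound $\E\|A^\pm u\|^2$ directly by a second-moment computation over the i.i.d.\ samples, using $\E[G^\pm e_Se_S^*G^\pm]$ and the constant-diagonal property. That gives $\E\|(A^+-A^-)u\|^2\lesssim \Tr((G^+ + G^-)^2)\binom nk/m^2+\|G^\pm\|\Tr(G^\pm)/m$, which is again $e^{O(k)}/m$ and hence sufficient.
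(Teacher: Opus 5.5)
Your proposal is correct in substance. The top-level reduction is exactly the paper's: Jensen, \cref{prop:varphi-trace-re}, \cref{lem:varphi-spectral-dom}, then $\ell=\lceil n/\sqrt{\xi}\rceil$ with $\xi=\alpha+e^{-\Omega(k)}$. So is the bound $\rho^+\le\|A^+\|_{\op}$ via \cref{lem:random-proj-compression} applied to $G^+$.

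You genuinely differ in how you control $q$ and $\delta$. The paper works entrywise. With $\tilde a_{ij}=(-1)^{|S_i\cap S_j|}-\lambda_0$, it shows $\E[\tilde a_{ij}\tilde a_{i'j'}]=0$ unless $\{i,j\}=\{i',j'\}$, by conditioning on a shared hyperedge. It then computes $\E(q-\lambda_0)^2\le 8\alpha/m^2$ and $\E\delta^2\le\E\|\tilde Au\|^2\le 4\alpha/m$, and applies Markov.

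You instead stay with the PSD splitting $\cE^\cH=\lambda_0uu^*+A^+-A^-$. Since $G^\pm\1=0$, only diagonal terms survive, so $\E\langle u,A^\pm u\rangle=\Tr(G^\pm)/m\le e^{9k}/m$. Markov on these nonnegative variables gives $|q-\lambda_0|\le e^{-\Omega(k)}/n$. Then $\|A^\pm u\|^2\le\|A^\pm\|_{\op}\langle u,A^\pm u\rangle$, with the compression lemma also applied to $G^-$ (which the paper only does in the lower bound), gives $\delta\le e^{-\Omega(k)}/n$.

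Your route avoids the covariance bookkeeping and reuses a single tool. The paper's route is more elementary and quantitatively sharper: its fluctuation bounds carry $\sqrt{\alpha}$ where yours carry $e^{9k}$. Under $m\ge 2^{Ck}n\log n$, both suffice.

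Several intermediate claims are overstated, though.
\begin{itemize}
\item $\|A^+\|_{\op}=O(\alpha/n)$ is false when $k=o(\log n)$. The norm $\|A^+\|_{\op}$ is at least its diagonal entry $\Tr(G^+)/m\ge d_2\lambda_2/m=\Omega(k^2/m)$. This is not $O(k^2/n^2)$ when $m=o(n^2)$, e.g.\ for $m$ at the threshold $2^{Ck}n\log n$ with $k$ constant. The Bernstein term in \cref{lem:random-proj-compression} only yields $e^{-\Omega(k)}/n$.
\item $\delta\le e^{-\Omega(k)}\alpha/n$ does not follow from your estimates; they give $e^{-\Omega(k)}/n$.
\item Target (i) holds only in the form $1-q\ge 2\alpha\bigl(1-O(\alpha+e^{-\Omega(k)})\bigr)$.
\item Targets (ii) and (iii) carry a spurious factor $1-q\approx 2\alpha$. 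As written, (ii) would require $\rho^+=O(\alpha^2/n)$, which you neither prove nor need.
\end{itemize}
None of this affects the conclusion. With your choice of $\ell$, the argument only uses $\rho^++\delta\le O(\xi/n)$ and $1/(1-q)\le(1+O(\xi))/(2\alpha)$, and your argument does deliver both. The $e^{-\Omega(k)}$ in the statement is there precisely to absorb these sampling terms, as in the paper's proof.
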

\begin{proof}
Define $q, \delta, \rho^+$ as in \cref{eq:spectral-notation-def}. Note that 
    \[\cE^\cH = \lambda_0uu^* + A^+ - A^-,\]
    where $u:= \1/\sqrt{m}$. Let $\Pi = \Pi_{u^\perp}$, and note that 
    \[\rho^+ = \lambda_{\max}\left(\Pi\cE^\cH\Pi\right) = \lambda_{\max}\left(\Pi(A^+ - A^-)\Pi\right).\]
    Since $\Pi(A^+ - A^-)\Pi\preceq\Pi A^+\Pi$, and since $\|\Pi A^+\Pi\|_{\operatorname{op}}\leq\|A^+\|_{\operatorname{op}}$, we have $\rho^+\leq\lambda_{\max}(A^+) = \|A^+\|_{\operatorname{op}}$. To bound $\|A^+\|_{\operatorname{op}}$ we aim to apply \cref{lem:random-proj-compression}. Towards that, note that $\Tr(G^+) = \sum_{j\geq 1:\lambda_j\geq 0}\lambda_j\dim(V_j)\leq\sum_{j\geq 1}|\lambda_j|\dim(V_j)$. Since $\alpha < 1/16$ we obtain by \cref{lem:complete-matrix-values} that $\Tr(G^+)\leq e^{9k}$. Also by \cref{lem:complete-matrix-values} we have that $\mu := \|G^+\|_{\operatorname{op}} = \max_{j\geq 1}\lambda_j = \max_{j\geq 2, j\text{ even}}\lambda_j = \Theta(\alpha/n) = \Theta(k^2/n^2)$. Thus write $d:= \Tr(G)/\mu\leq O(e^{9k}/(k^2/n^2))\leq O(e^{9k}n^2)\leq e^{O(k)}n^2$. Finally, since $\langle e_S, G^+ e_S\rangle = \tr(G^+)$ for all $S\in\binom{[n]}{k}$, \cref{lem:random-proj-compression} applies and we obtain that with probability $\geq 1 - O(n^{-200})$, \footnote{Note that we also have $d = \Tr(G^+)/\mu\geq \Omega(k^2/(k^2/n^2))\geq\Omega(n^2)$.}
    \[\|A^+\|_{\operatorname{op}}\leq O(1)\cdot\max\left\{O\left(\frac{\alpha}{n}\right), \frac{e^{9k}}{m}\cdot\log\left(e^{O(k)}n^2\right)\right\}\leq O(1)\cdot\max\left\{\frac{\alpha}{n}, \frac{e^{9k}}{m}\cdot(k + \log n)\right\}\]
    \[\leq O(1)\cdot\max\left\{\frac{\alpha}{n}, \frac{e^{O(k)}\cdot\log(n)}{m}\right\}\implies\rho^+\leq O(1)\cdot\max\left\{\frac{\alpha}{n}, \frac{e^{O(k)}\cdot\log(n)}{m}\right\}.\]
    
    Now write $\tilde{A}:= A^+ - A^-$. Note that $\tilde{A}(S, T) = m^{-1}((-1)^{|S\cap T|} - \lambda_0)$. Now write $\tilde{a}_{ij}:= (-1)^{|S_i\cap S_j|} - \lambda_0$, where recall that $S_1, \ldots, S_m$ are the hyperedges of $\cH$, i.e they are independent samples from $\binom{[n]}{k}$. Note that by the proof of \cref{lem:complete-matrix-values} we know that for $i\neq j$, $\E_\cH \tilde{a}_{ij} = 0, \E_\cH \tilde{a}_{ij}^2 = 1 - \lambda_0^2$. Also note that $\tilde{a}_{ii} = 1 - \lambda_0$ with probability $1$. Now suppose we have $i, j, i', j'$ with $i\neq j, i'\neq j'$. Then if $\#\{i, j, i', j'\} = 4$ then $\tilde{a}_{ij}, \tilde{a}_{i'j'}$ are independent mean $0$ random variables, and thus $\E_{\cH}[\tilde{a}_{ij}\tilde{a}_{i'j'}] = 0$. Suppose $\#\{i, j, i', j'\} = 3$. WLOG $i = i'$. Then conditional on $S_i$, $h_{ij}, h_{ij'}$ are still independent, and thus we still have $\E_{\cH}[\tilde{a}_{ij}\tilde{a}_{i'j'}] = 0$. Consequently, $\E_{\cH}[\tilde{a}_{ij}\tilde{a}_{i'j'}] \neq 0$ only if $\{i, j\} = \{i', j'\}$ in which case it equals $1 - \lambda_0^2$.
    
    Note that $\delta = \|(\Pi_{u^\perp})\tilde{A}u\|_2 = \|\tilde{A}u - uu^*\tilde{A}u\|_2$, and thus $\delta^2 = \|\tilde{A}u\|_2^2 - |u^*\tilde{A}u|^2$. Similarly, $q = u^*\cE^\cH u = \lambda_0 + m^{-2}\sum_{1\leq i, j\leq m}\tilde{a}_{ij} = \lambda_0 + m^{-1}\cdot(1 - \lambda_0) + 2m^{-2}\sum_{1\leq i < j\leq m}\tilde{a}_{ij}$. 

    Thus $\E_\cH q = \lambda_0 + m^{-1}\cdot(1 - \lambda_0)$. Furthermore, 
    \[\E_\cH\left[\left(q - \lambda_0 - \frac{1 - \lambda_0}{m}\right)^2\right] = \frac{4}{m^4}\sum_{\substack{1\leq i < j\leq m\\1\leq i' < j'\leq m}}\E_{\cH}[\tilde{a}_{ij}\tilde{a}_{i'j'}] = \frac{4}{m^4}\sum_{\substack{1\leq i < j\leq m}}\E_{\cH}[\tilde{a}_{ij}^2] = \frac{4}{m^4}\binom{m}{2}(1 - \lambda_0^2)\]
    \[\implies\E_\cH\left[\left(q - \lambda_0\right)^2\right] = \frac{(1 - \lambda_0)^2}{m^2} + \frac{2(m - 1)}{m^3}(1 - \lambda_0^2)\leq\frac{4(1 - \lambda_0)}{m^2}\overset{\text{\cref{lem:complete-matrix-values}}}{\leq}\frac{8\alpha}{m^2}.\]
    Thus if we write $\zeta:= 1/(100\log n)$, then by Markov's inequality, with probability $\geq 1 - \zeta$, we have $|q - \lambda_0| \leq O(m^{-1}\sqrt{\alpha/\zeta})$.

   Now, note that $\E_\cH\delta^2\leq\E_\cH\|\tilde{A}u\|_2^2$. Thus we estimate $\E_\cH\|\tilde{A}u\|_2^2$. Towards that, note that
   \[\E_\cH\|\tilde{A}u\|_2^2 = \sum_{i = 1}^m\E_\cH(\tilde{A}u)_i^2 = \frac{1}{m^3}\sum_{i = 1}^m\E_\cH\left[\left(\sum_{j = 1}^m \tilde{a}_{ij}\right)^2\right].\]
   Fix any $i$ and any choice for $S_i$. Write $\eta:= 1 - \lambda_0$. Then 
   \[\E_\cH\left[\left(\sum_{j = 1}^m \tilde{a}_{ij}\right)^2\mid S_i\right] = \E_\cH\left[\left(\eta + \sum_{j\neq i} \tilde{a}_{ij}\right)^2\mid S_i\right] = \eta^2 + 2\eta\cdot\E_\cH\left[\sum_{j\neq i} \tilde{a}_{ij}\mid S_i\right] + \E_\cH\left[\left(\sum_{j\neq i} \tilde{a}_{ij}\right)^2\mid S_i\right]\]
   \[ = \eta^2 + \E_\cH\left[\left(\sum_{j\neq i} \tilde{a}_{ij}\right)^2\mid S_i\right] = \eta^2 + \E_\cH\left[\sum_{j\neq i} \tilde{a}_{ij}^2\mid S_i\right] + 2\cdot\E_\cH\left[\sum_{\substack{j, j'\in[m]\setminus\{i\}\\j < j'}} \tilde{a}_{ij}\tilde{a}_{ij'}\mid S_i\right]\]
   \[ = \eta^2 + (m - 1)(1 - \lambda_0^2),\]
   where we use the fact that $\E[\tilde{a}_{ij}\mid S_i] = \E[\tilde{a}_{ij}\tilde{a}_{ij'}\mid S_i] = 0$ for $j, j'\neq i, j\neq j'$. Consequently 
   \[\E_\cH\|\tilde{A}u\|_2^2 = \frac{(1 - \lambda_0)^2 + (m - 1)(1 - \lambda_0^2)}{m^2}\leq \frac{2(1 - \lambda_0)}{m}\leq\frac{4\alpha}{m}\implies\E_\cH\delta^2\leq\frac{4\alpha}{m}.\]
   As above, by Markov's inequality, with probability $\geq 1 - \zeta$ we have $\delta\leq O(\sqrt{\alpha/(m\zeta)})$. 

   Consequently, with probability $\geq 1 - O(1/\log n)$, we have that 
   \[\rho^+ + \delta\leq O\left(\frac{\alpha}{n}\right) + \frac{e^{O(k)}\cdot\log(n)}{m} + O\left(\sqrt{\frac{\alpha\log n}{m}}\right)\leq O\left(\frac{\alpha}{n}\right) + \frac{e^{-\Omega(k)}}{n} + O\left(\frac{k}{2^{Ck/2}n}\right)\]
   \[\leq O(1)\cdot\left(\underbrace{\frac{\alpha}{n} + \frac{e^{-\Omega(k)}}{n}}_{:= \xi/n}\right),\]
   where we use the fact that $m\geq 2^{Ck}n\log(n)$ for some sufficiently large absolute constant $C > 0$. With probability $\geq 1 - O(1/\log n)$ we also have that 
   \[q - \rho^+\geq \lambda_0 - O\left(\frac{\sqrt{\alpha\log(n)}}{m}\right) - O\left(\frac{\alpha}{n}\right) - \frac{e^{-\Omega(k)}}{n}\overset{\text{\cref{lem:complete-matrix-values}}}{\geq}\frac{7}{8} - O\left(\frac{1}{n}\right) > 0\]
   for sufficiently large $n$. Consequently, $(q - \rho^+)_+ = q - \rho^+$. Furthermore, 
   \[1 - (q - \rho^+)_+ = 1 - (q - \rho^+)\geq 1 - q\geq 1 - \lambda_0 - O\left(\frac{\sqrt{\alpha\log(n)}}{m}\right)\geq 2(\alpha - \alpha^2) - O\left(\frac{e^{-\Omega(k)}}{\sqrt{n^3\log(n)}}\right) > 0\]
   for sufficiently large $n$, since $\alpha\geq 4/n$.

   Consequently, as in the proof of \cref{thm:syk-spectral-norm}, if $H^\cH_{\operatorname{SYK}} = \frac{1}{\sqrt{m}}\sum_{S\in\cH}g_S\Gamma_S$ is the SYK Hamiltonian corresponding to the hypergraph $\cH$, then we have (by \cref{lem:varphi-spectral-dom})
   \[\E_{\{g_S\}}\|H^\cH_{\operatorname{SYK}}\|_{\operatorname{op}}\leq 2\cdot 2^{n/(4\ell)}\cdot\sqrt{\frac{1 + (\rho^+ + \delta)\ell}{1 - (q - \rho^+)_+}}\leq 2\cdot 2^{n/(4\ell)}\cdot\sqrt{\frac{1 + (\rho^+ + \delta)\ell}{1 - q}}. \]
   Thus with probability $\geq 1 - O(1/\log n)$ over the randomness of $\cH$ we have that 
   \[\E_{\{g_S\}}\|H^\cH_{\operatorname{SYK}}\|_{\operatorname{op}}\leq 2\cdot 2^{n/(4\ell)}\cdot\sqrt{\frac{1 + \xi\ell/n}{1 - q}}\leq 2\cdot 2^{n/(4\ell)}\cdot\sqrt{\frac{1 + \xi\ell/n}{2(\alpha - \alpha^2) - O\left(\frac{e^{-\Omega(k)}}{\sqrt{n^3\log(n)}}\right)}}\]
   \[\leq 2^{n/(4\ell)}\cdot\sqrt{\frac{2 + O(\xi\ell/n)}{(\alpha - \alpha^2) - \alpha\cdot O(e^{-\Omega(k)}/\sqrt{n\log(n)})}}\]
   Now set $\ell:= \lceil n/\sqrt{\xi}\rceil$ to obtain that 
   \[\E\|H^\cH_{\operatorname{SYK}}\|_{\operatorname{op}}\leq\sqrt{\frac{2}{\alpha - \alpha^2 - o_n(\alpha)}}\cdot(1 + O(\sqrt{\xi}))\leq \sqrt{\frac{2}{\alpha}}\cdot(1 + O(\sqrt{\xi}))\cdot\left(1 + O\left(\alpha + \frac{e^{-\Omega(k)}}{\sqrt{n\log(n)}}\right)\right).\]
   Note that $\xi\leq O(\alpha + e^{-\Omega(k)})$, and thus 
   \[\E\|H^\cH_{\operatorname{SYK}}\|_{\operatorname{op}}\leq \sqrt{\frac{2}{\alpha}}\cdot(1 + O(\sqrt{\alpha + e^{-\Omega(k)}}))\cdot\left(1 + O\left(\alpha + \frac{e^{-\Omega(k)}}{\sqrt{n\log(n)}}\right)\right)\] 
   \[\leq\sqrt{\frac{2}{\alpha}}\cdot\left(1 + O\left(\sqrt{\alpha + e^{-\Omega(k)}}\right)\right),\]
   as desired.
\end{proof}
Finally, we prove the corresponding lower bound for the sparse SYK.
\begin{theorem}[Operator Norm of the Sparse SYK: Lower Bound]\label{thm:syk-spectral-norm-lower-random}
    Let $n\geq k\geq 2$ be even integers, and let $\alpha:= k^2/n < 1/16$. Consider the SYK Hamiltonian (from \cref{eq:syk-def}) for a random hypergraph $\cH$ with $m\geq 2^{Ck}n\log(n)$ hyperedges for some large enough absolute constant $C > 0$. Then 
    \[\E\|H\|_{\operatorname{op}}\geq
\sqrt{\frac2\alpha}
\left(
1-O(1)\max\left\{
k^{-1/4},
\alpha^2
\right\}
\right).\]
with probability $\geq 1-O(1/\log n)$ over the randomness of $\cH$.
\end{theorem}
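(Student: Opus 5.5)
We mirror the proof of \cref{thm:syk-spectral-norm-lower}, combining \cref{lem:lower-bound-syk-general} with \cref{lem:x-large-eig}. The new ingredient is control of $\rho^-$, which was not needed for the upper bound. From the proof of \cref{thm:syk-spectral-norm-random}, with probability $\geq 1 - O(1/\log n)$ over $\cH$ we already have the following:
\begin{itemize}
\item $\rho^+ + \delta \leq O(\xi/n)$ with $\xi = \alpha + e^{-\Omega(k)}$;
\item $|q - \lambda_0| \leq O(m^{-1}\sqrt{\alpha \log n})$, so that $\nu = 1-q = (2+O(\alpha))\alpha$ up to negligible error;
\item $\delta \leq O(\sqrt{\alpha\log n/m})$.
\end{itemize}

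To bound $\rho^-$, we use $\Pi_{u^\perp}\cE^\cH\Pi_{u^\perp} = \Pi_{u^\perp}(A^+ - A^-)\Pi_{u^\perp} \succeq -\Pi_{u^\perp}A^-\Pi_{u^\perp}$, which gives $\rho^- \leq \|A^-\|_{\op}$. We then apply \cref{lem:random-proj-compression} to $G^-$. This is justified because $G^-$ has constant diagonal by $\mathfrak S_n$-invariance, $\mu^- = \|G^-\|_{\op} = |\lambda_1| \leq (2+O(\alpha))\sqrt{\alpha/n}$ by \cref{lem:complete-matrix-values}, and $\Tr(G^-) \leq e^{9k}$. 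The lemma yields, with probability $1 - n^{-\Omega(1)}$,
\[
\rho^- \leq O(1)\max\left\{\sqrt{\alpha/n},\ e^{O(k)}\log n/m\right\} = O(\sqrt{\alpha/n}),
\]
using $m \geq 2^{Ck}n\log n$. It follows that $\rho^- + \delta \leq O(\sqrt{\alpha/n}) = O(k/n)$.

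Now we set parameters. We get $B = \sqrt{\nu/(\rho^-+\delta)} \geq \Omega(\sqrt{(k^2/n)/(k/n)}) = \Omega(\sqrt k)$, which matches the dense case. We could therefore take $\kappa = \Theta(\sqrt k)$, but the Lipschitz error forces a smaller choice, as computed next. We have $L \leq O(\kappa/\nu) = O(\kappa n/k^2)$, and $(\rho^+ + \delta)/(1-q) \leq O(\xi/(n\alpha)) = O((\alpha + e^{-\Omega(k)})/k^2)$. Hence
\[
\frac{L(\rho^++\delta)}{1-q} \leq O\!\left(\frac{\kappa}{\alpha}\cdot\frac{\alpha+e^{-\Omega(k)}}{k^2}\right) = O\!\left(\kappa\left(\frac{1}{k^2}+\frac{e^{-\Omega(k)}}{\alpha k^2}\right)\right).
\]
The term $e^{-\Omega(k)}/(\alpha k^2) = e^{-\Omega(k)} n/k^4$ is the main obstacle, since it can be large when $k$ is small relative to $\log n$. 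I would handle it by tracking the sparse correction more precisely. The $\delta$ contribution is $O(\sqrt{\alpha\log n/m}) \leq O(k\, 2^{-Ck/2}/n)$, and the $A^+$ contribution is $e^{O(k)}\log n/m \leq 2^{-(C-O(1))k}/n$. Both are $\leq \alpha/n$ once $2^{-\Omega(Ck)} \leq k^2$, which holds for $C$ large. Thus in fact $\rho^+ + \delta \leq O(\alpha/n)$, and the error becomes $O(\kappa/k^2)$.

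We must also balance this against the $\kappa^{-1}$ loss from \cref{lem:x-large-eig}, measured relative to the leading term $\sqrt{2/\alpha}$. The additive error $\sqrt{L(\rho^++\delta)/(1-q)} = O(\sqrt{\kappa}/k)$ is relatively $O(\sqrt{\kappa\alpha}/k) = O(\sqrt{\kappa/n})$, which is negligible. So in principle $\kappa = \sqrt k$ works. To be safe against the weaker control of $\rho^-$ in the sparse setting, which is where the $k^{-1/4}$ in the statement comes from, I would take $\kappa = \Theta(k^{1/4}) \leq B$. Plugging into \cref{lem:x-large-eig} then gives
\[
\lambda_{\max}(\Pi_\cR\sfx\Pi_\cR) \geq \frac{2}{\sqrt\nu}\left(1 - O(\max\{k^{-1/4},\nu^2\})\right),
\]
with $2/\sqrt\nu = \sqrt{2/\alpha}\,(1-O(\alpha))$. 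The statement does not show an $O(\alpha)$ term, so the $\nu = (2+O(\alpha))\alpha$ correction must be checked against $\max\{k^{-1/4},\alpha^2\}$, or else the error must be absorbed there. Finally, a union bound over the $O(1/\log n)$ failure events together with \cref{lem:lower-bound-syk-general} yields the claimed lower bound on $\E_{\bm g}\|H\|_{\op}$.
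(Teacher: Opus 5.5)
Your skeleton matches the paper's: \cref{lem:lower-bound-syk-general} plus \cref{lem:x-large-eig}, with $\rho^-\le\|A^-\|_{\op}$ controlled by applying \cref{lem:random-proj-compression} to $G^-$, and the estimates on $q,\delta,\rho^+$ reused from \cref{thm:syk-spectral-norm-random}. The step that fails is your ``fix'' of the term $e^{-\Omega(k)}/(\alpha k^2)$, namely the claim $\rho^++\delta\le O(\alpha/n)$. Take your own bound $k2^{-Ck/2}/n$ for the $\delta$ contribution. It is at most $\alpha/n=k^2/n^2$ only when $2^{Ck/2}\ge n/k$, i.e.\ $k=\Omega(\log n/C)$. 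Your condition ``$2^{-\Omega(Ck)}\le k^2$'' drops a factor of $n$, and the same slip occurs for the $A^+$ term. The claim is in fact false for small $k$. With high probability some two hyperedges $S_i,S_j\in\cH$ have $|S_i\cap S_j|$ odd. Then $v=e_i-e_j\perp u$ satisfies $\langle v,\cE v\rangle/\|v\|^2=2/m$, so $\rho^+\ge 2/m$. At $m=2^{Ck}n\log n$ this exceeds $k^2/n^2$ whenever $2^{Ck}k^2\log n<2n$.

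The detour is unnecessary. The ``obstacle'' disappears once the transfer error is measured against the main term $\sqrt{2/\alpha}$. Using $L\le(\kappa+2)/\nu$, $\nu=\Theta(\alpha)$ and $\rho^++\delta\le O((\alpha+e^{-\Omega(k)})/n)$, you get
\[
\frac{\alpha}{2}\cdot\frac{L(\rho^++\delta)}{1-q}\le O\!\left(\frac{\kappa}{n}+\frac{\kappa\, e^{-\Omega(k)}}{k^2}\right),
\]
so the relative transfer error is negligible for any $\kappa\le B$. The paper runs a cruder version of this with $\kappa=\Theta(\sqrt k)$, bounding $\alpha+e^{-\Omega(k)}\le O(k)$. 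That crude bound is where the $k^{-1/4}$ in the statement comes from, not weaker control of $\rho^-$: here $\rho^-=O(k/n)$ exactly as in the dense case, giving the same $B=\Omega(\sqrt k)$. Your choice $\kappa=\Theta(k^{1/4})$ is equally valid.

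Finally, you leave open the comparison of $2/\sqrt\nu$ with $\sqrt{2/\alpha}$, and it must be closed. An $O(\alpha)$ relative loss could not be absorbed into $\max\{k^{-1/4},\alpha^2\}$ when $k^{-1/4}\ll\alpha$. Use the one-sided bound $\nu\le(1-\lambda_0)+|q-\lambda_0|\le 2\alpha+O(m^{-1}\sqrt{\alpha\log n})$, which relies on $1-\lambda_0\le 2\alpha$ from \cref{lem:complete-matrix-values}. This gives $2/\sqrt\nu\ge\sqrt{2/\alpha}\,(1-o(n^{-1/2}))$, so the discrepancy runs in the favorable direction and nothing needs absorbing.
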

\begin{proof}
Throughout the proof we'll borrow notation from \cref{lem:x-large-eig}. By \cref{lem:lower-bound-syk-general} we know that if $B\geq\Omega(1)$, then
\begin{equation}
\label{eq:random-lower-transfer}
\E_{\bm g}\left\|H^\cH_{\operatorname{SYK}}\right\|_{\operatorname{op}}\geq\lambda_{\max}\!\left(\Pi_\cR\sfx\Pi_\cR\right)-O(1)\sqrt{\frac{L(\rho^++\delta)}{1-q}}.
\end{equation}

Note that for any unit vector $v\in\R^\cH$ such that $\langle v, u\rangle = 0$, we have
\begin{equation*}
-\langle v,\mathcal E^\cH v\rangle = \langle v,A^-v\rangle-\langle v,A^+v\rangle \leq \langle v,A^-v\rangle \leq \|A^-\|_{\operatorname{op}}\implies\rho^-\leq \|A^-\|_{\operatorname{op}}.
\end{equation*}

We now apply \cref{lem:random-proj-compression} to $G^-$. By
\cref{lem:complete-matrix-values},
\[\|G^-\|_{\operatorname{op}} = \max_{\substack{j\geq1\\\lambda_j<0}}|\lambda_j| = |\lambda_1| \leq O\left(\sqrt{\frac{\alpha}{n}}\right) = O\left(\frac{k}{n}\right),\;\Tr(G^-) = \sum_{\substack{j\geq1\\\lambda_j<0}}|\lambda_j|\dim(V_j)\leq e^{9k}.\]
Moreover, by the $\mathfrak{S}_n$-invariance of the Johnson eigenspaces,
the diagonal entries of $G^-$ are all equal. Hence
\cref{lem:random-proj-compression} gives, with probability at least
$1-O(n^{-100})$,
\[\|A^-\|_{\operatorname{op}}\leq O(1)\max\left\{\frac{k}{n}, \frac{e^{9k}}{m}\log\left(e^{O(k)}n^2\right)\right\}\leq O(1)\max\left\{\frac{k}{n}, \frac{e^{O(k)}(k+\log n)}{m}\right\}\leq O\left(\frac{k}{n}\right).\]
where the last inequality follows from
$m\geq2^{Ck}n\log n$ after choosing $C$ sufficiently large. Consequently we have
\[\rho^-\leq O\left(\frac{k}{n}\right) = O\left(\sqrt{\frac{\alpha}{n}}\right).\]

As proved in \cref{thm:syk-spectral-norm-random}, with probability at
least $1-O(1/\log n)$,
\[\delta\leq O\left(\sqrt{\frac{\alpha\log n}{m}}\right)\leq O\left(\frac{e^{-\Omega(k)}}{n}\right),\qquad \rho^++\delta\leq O(1)\cdot\frac{\alpha+e^{-\Omega(k)}}{n}\leq O\left(\frac{k}{n}\right),\]
\begin{align*}
(2 + o_n(1))\alpha\geq 2\alpha + O\!\left(
\frac{e^{-\Omega(k)}}{\sqrt{n^3\log n}}
\right)\geq\nu = 1-q
&\geq
2(\alpha-\alpha^2)
-
O\!\left(
\frac{e^{-\Omega(k)}}{\sqrt{n^3\log n}}
\right)
\geq \Omega(\alpha).
\end{align*}
Consequently we have $\nu = \Theta(\alpha)$. Thus
\[
B
:=
\sqrt{\frac{\nu}{\rho^- + \delta}}
 \geq
\Omega\!\left(
\sqrt{\frac{\alpha}{k/n}}
\right)
\geq
\Omega(\sqrt{k})
\geq\Omega(1).
\]
Hence the hypothesis needed by \cref{lem:x-large-eig} is satisfied. Now choose $\kappa = \Theta(\sqrt{k})\leq B$, and note that this choice yields 
\[L\leq O\left(\frac{\kappa}{\nu}\right)\leq O\!\left(\frac{n}{k^{3/2}}\right).\]

Thus
\begin{align*}
\frac{L(\rho^++\delta)}{1-q}
&\leq
O\!\left(\frac{n}{k^{3/2}}\right)
\frac{(\alpha+e^{-\Omega(k)})/n}{\alpha}
 =
O\!\left(
\frac{\alpha+e^{-\Omega(k)}}{\alpha k^{3/2}}
\right)
\leq
O\!\left(\frac1{\alpha\sqrt{k}}\right),
\end{align*}
where in the last inequality we used
$\alpha+e^{-\Omega(k)}\leq O(k)$ for $k\geq2$. Therefore
\begin{equation}
\label{eq:random-transfer-error-root}
\sqrt{\frac{L(\rho^++\delta)}{1-q}}\leq O\!\left(\frac1{\sqrt{\alpha}\,k^{1/4}}\right) = \sqrt{\frac2\alpha}\,O\!\left(k^{-1/4}\right).
\end{equation}

Finally \cref{lem:x-large-eig} applied with the preceding bounds on $\nu$, $\rho^-+\delta$, $\kappa$, and $L$, gives
\begin{equation}
\label{eq:random-x-large-eigenvalue}
\lambda_{\max}\!\left(\Pi_\cR\sfx\Pi_\cR\right)\geq\sqrt{\frac2\alpha}\left(1-O(1)\max\left\{k^{-1/2},\alpha^2\right\}\right).
\end{equation}
Combining
\eqref{eq:random-lower-transfer},
\eqref{eq:random-transfer-error-root}, and
\eqref{eq:random-x-large-eigenvalue}, and absorbing the transfer error into
the existing $O(k^{-1/4})$ relative error, yields
\[\E_{\bm g}\left\|H^\cH_{\operatorname{SYK}}\right\|_{\operatorname{op}}\geq\sqrt{\frac2\alpha}\left(1-O(1)\max\left\{k^{-1/4},\alpha^2\right\}\right).\]
All the required inequalities hold simultaneously with probability
at least $1-O(1/\log n)$, and thus we are done.
\end{proof}

Finally, as in \cref{thm:syk-subgaussian}, we obtain strong subgaussian concentration for $\|H^\cH_{\syk}\|_{\op}$:

\begin{theorem}[Subgaussian Tails for the Operator Norm for Sparse SYK]\label{thm:syk-subgaussian-sparse}
    Let $n\geq k\geq 2$ be even integers, and let $\alpha:= k^2/n < 1/16$. Consider the SYK Hamiltonian (from \cref{eq:syk-def}) for a random hypergraph $\cH$ with $m\geq 2^{Ck}n\log(n)$ hyperedges for some large enough absolute constant $C > 0$. Then for any $t\geq 0$ we have
    \[\Pr\left(|\|H^\cH_{\syk}\|_{\operatorname{op}} - \E_{\bm g}\|H^\cH_{\syk}\|_{\operatorname{op}}|\geq t\right)\leq 2\cdot\exp\left(-\Omega(kt^2)\right)\]
    with probability $\geq 1 - O(1/\log(n))$ over the randomness of $\cH$.
\end{theorem}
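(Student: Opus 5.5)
The plan follows the dense case (\cref{thm:syk-subgaussian}). We apply \cref{fact:gaussian-lipschitz-conc} to the map $\bm g\mapsto\|H(\bm g)\|_{\op}$ with $H(\bm g)=H^\cH_{\syk}$. This map is $\sigma$-Lipschitz with $\sigma\le\sqrt{(\rho^++\delta)/(1-q)}$, by \cref{lem:lipschitz-operator} combined with the reverse triangle inequality $|\|A\|_{\op}-\|B\|_{\op}|\le\|A-B\|_{\op}$. Here $q,\rho^+,\delta$ are the sign-matrix parameters of the random hypergraph $\cH$. It therefore suffices to show that, with probability $\ge 1-O(1/\log n)$ over $\cH$, we have $\sigma^2\le O(1/k)$. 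The tail bound $2\exp(-\Omega(t^2/\sigma^2))\le 2\exp(-\Omega(kt^2))$ then follows for every $t\ge0$.

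For the parameter bounds we reuse the high-probability event from the proof of \cref{thm:syk-spectral-norm-random}, which holds with probability $\ge 1-O(1/\log n)$. On that event,
\[\rho^++\delta\le O(1)\cdot\frac{\alpha+e^{-\Omega(k)}}{n},\qquad 1-q\ge 2(\alpha-\alpha^2)-O\!\left(\frac{e^{-\Omega(k)}}{\sqrt{n^3\log n}}\right)\ge\Omega(\alpha),\]
where the last step uses $\alpha<1/16$ and $\alpha\ge 4/n$. Dividing gives
\[\sigma^2\le O\!\left(\frac{\alpha+e^{-\Omega(k)}}{\alpha n}\right)=O\!\left(\frac1n+\frac{e^{-\Omega(k)}}{k^2}\right).\]
Since $k^2\le n/16$, we have $1/n\le 1/(16k^2)\le O(1/k)$, and $e^{-\Omega(k)}/k^2\le O(1/k)$. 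Hence $\sigma^2\le O(1/k)$, as needed.

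The only step needing care is the $e^{-\Omega(k)}/n$ contribution to $\rho^+$ and $\delta$. It comes from the sampling error $e^{O(k)}\log n/m$ together with the Markov bound on $\delta$, and it is exactly why the rate is $k$ rather than $n$. We must check that dividing this term by $1-q=\Theta(\alpha)=\Theta(k^2/n)$ leaves $e^{-\Omega(k)}/k^2$, with no leftover polynomial factor in $n$. This holds as computed above. If one wanted the sharper $\exp(-\Omega(nt^2))$ rate, this term would have to be reduced to $O(\alpha/n)$ by taking $m$ larger, as in the remark after \cref{thm:main-thm-subgaussian}. We do not need that here.
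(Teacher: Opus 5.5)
Your proposal is correct and follows essentially the paper's proof: apply Gaussian Lipschitz concentration to $\bm g\mapsto\|H(\bm g)\|_{\operatorname{op}}$ with the constant $\sqrt{(\rho^++\delta)/(1-q)}$ from \cref{lem:lipschitz-operator}, and bound that constant on the $1-O(1/\log n)$ event from the sparse analysis; the paper gets the rate $k$ from the cruder estimate $\rho^++\delta\le O(k/n)$. Your bookkeeping is in fact sharper than needed, since it gives $\sigma^2\le O(1/n+e^{-\Omega(k)}/k^2)\le O(1/k^2)$, so your closing remark overstates things slightly: the sampling error is what prevents the rate $n$, but your own computation already yields rate $k^2$, and the stated rate $k$ follows a fortiori.
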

\begin{proof}
    Towards applying \cref{fact:gaussian-lipschitz-conc}, consider the function $\bm g\mapsto \|H(\bm g)\|_{\op} = \|H\|_{\op}$. By \cref{lem:lipschitz-operator} and the proof of \cref{thm:syk-spectral-norm-lower} we know that the Lipschitz constant of this map is $\leq\sqrt{(\rho^+ + \delta)/(1 - q)}\leq O(1/\sqrt{k})$, and we're now done by \cref{fact:gaussian-lipschitz-conc}.
\end{proof}

\section{Acknowledgments}

S.M. acknowledges support from a Brown Investigator Award, a program of the Brown Institute for Basic Sciences at the California Institute of Technology. S.M. thanks Dima Abanin for helpful discussions on quantum simulation and related collaborations.

\paragraph{Use of AI Tools:} We observed (see \cref{sec:baby-johnson} for a key observation) that there is an explicit deterministic operator with a Johnson-like structure that captures the expected trace moments of $H_{\syk}$. We analyzed a preliminary version of this deterministic operator with the help of GPT-5.5 Pro and GPT-5.6 Pro. In particular, GPT-5.6 Pro suggested \cref{lem:twisted-boson-mapping}, \cref{prop:isometry}, and \cref{lem:intertwining} to us. We also used GPT-5.6 Pro to search for related literature and assist with the verification of our technical proofs. 

\bibliography{syk-spectral}
\bibliographystyle{alphaurl}

\end{document}